\newcommand{\rmd}{\mathrm{d}}
\numberwithin{equation}{section}
\numberwithin{figure}{section}
\newcommand\tabcaption{\def\@captype{table}\caption}
\newtheoremstyle{theorem}%
  {3pt}
  {3pt}
  {}
  {}
  {\bfseries\color{red}}
  {\textcolor{red}{.}}
  {.5em}
  {}
\theoremstyle{theorem}
\newtheorem{thm}{Theorem}[section]
\newtheoremstyle{lemma}%
  {3pt}
  {3pt}
  {}
  {}
  {\bfseries\color{blue}}
  {\textcolor{blue}{.}}
  {.5em}
  {\thmname{#1} A\thmnumber{#2}\thmnote{ (#3)}}
\theoremstyle{lemma}
\newtheoremstyle{definition}%
  {3pt}
  {3pt}
  {}
  {}
  {\bfseries\color{green}}
  {\textcolor{green}{.}}
  {.5em}
  {\thmname{#1} A\thmnumber{#2}\thmnote{ (#3)}}
\theoremstyle{definition}
\newcommand\mybar{\kern1pt\rule[-\dp\strutbox]{1pt}{\baselineskip}\kern1pt}
\definecolor{orange}{RGB}{255,127,0}
\newcommand{\blue}[1]{\textcolor{blue}{#1}}
\def\d{{\, \rm d}}
\newcommand{\ml}{\boldsymbol{\Lambda}}
\newcommand{\ms}{\boldsymbol{\Sigma}}
\newcommand{\mr}[1]{\mathbf{R}_{\text{#1}}}
\newcommand{\mrt}[1]{\mathbf{R}_{\text{#1}}(t)}
\newcommand{\vx}{\mathbf{x}}
\newcommand{\vy}{\mathbf{y}}
\newcommand{\ve}{\boldsymbol{\epsilon}}
\newcommand{\vz}{\mathbf{z}}
\newcommand{\vh}{\mathbf{h}}
\newcommand{\vf}{\mathbf{f}}
\newcommand{\vxt}{\mathbf{x}(t)}
\newcommand{\vyt}{\mathbf{y}(t)}
\newcommand{\vzt}{\mathbf{z}(t)}
\newcommand{\vm}[1]{\boldsymbol{\mu}_{\text{#1}}}
\newcommand{\vmt}[1]{\boldsymbol{\mu}_{\text{#1}}(t)}
\newcommand{\smooth}[1]{\overleftarrow{#1}}
\newcommand{\dt}{\Delta t}
\newcommand{\vw}{\mathbf{W}}
\newcommand{\vwt}[1]{\mathbf{W}_{#1}(t)}
\newcommand{\va}{\boldsymbol{\alpha}}
\newcommand{\ma}{\mathbf{A}}
\newcommand{\mb}{\mathbf{B}}
\newcommand{\mc}{\boldsymbol{\Gamma}}
\newcommand{\cF}{\mathcal{F}}
\newcommand{\pp}{\mathbb{P}}
\newcommand{\rr}{\mathbb{R}}
\newcommand{\tran}{\mathtt{T}}
\newcommand{\ee}[1]{\mathbb{E}\left[#1\right]}
\newcommand{\nf}{\normalfont{f}}
\newcommand{\ns}{\normalfont{s}}
\renewcommand*{\@cite@ofmt}{\bfseries\hbox}
\title{An Adaptive Online Smoother with Closed-Form Solutions and Information-Theoretic Lag Selection for Conditional Gaussian Nonlinear Systems}
\author{Marios Andreou\textsuperscript{1, *}, Nan Chen\textsuperscript{1, \textdagger} \& Yingda Li\textsuperscript{1, \textdaggerdbl}}
\date{%
    \footnotesize
    {}\textsuperscript{1}Department of Mathematics, University of Wisconsin--Madison, 480 Lincoln Drive, Madison, WI 53706, USA\\%
    {}\textsuperscript{*}mandreou@math.wisc.edu (Corresponding Author)\\
    {}\textsuperscript{\textdagger}chennan@math.wisc.edu\\
    {}\textsuperscript{\textdaggerdbl}zjkliyingda@gmail.com\\[2ex]%
    \normalsize
    \today
}
\begin{document}

\maketitle

\begin{abstract}
    Data assimilation (DA) combines partial observations with dynamical models to improve state estimation. Filter-based DA uses only past and present data and is the prerequisite for real-time forecasts. Smoother-based DA exploits both past and future observations. It aims to fill in missing data, provide more accurate estimations, and develop high-quality datasets. However, the standard smoothing procedure requires using all historical state estimations, which is storage-demanding, especially for high-dimensional systems. This paper develops an adaptive-lag online smoother for a large class of complex dynamical systems with strong nonlinear and non-Gaussian features, which has important applications to many real-world problems. The adaptive lag allows the utilization of observations only within a nearby window, thus reducing computational complexity and storage needs. Online lag adjustment is essential for tackling turbulent systems, where temporal autocorrelation varies significantly over time due to intermittency, extreme events, and nonlinearity. Based on the uncertainty reduction in the estimated state, an information criterion is developed to systematically determine the adaptive lag. Notably, the mathematical structure of these systems facilitates the use of closed analytic formulae to calculate the online smoother and adaptive lag, avoiding empirical tunings as in ensemble-based DA methods. The adaptive online smoother is applied to studying three important scientific problems. First, it helps detect online causal relationships between state variables. Second, the advantage of reduced computational storage expenditure is illustrated via Lagrangian DA, a high-dimensional nonlinear problem. Finally, the adaptive smoother advances online parameter estimation with partial observations, emphasizing the role of the observed extreme events in accelerating convergence.
\end{abstract}

\section{Introduction} \label{sec:1}

Complex turbulent nonlinear dynamical systems (CTNDSs) have broad applications across various fields \cite{frisch1995turbulence, majda2006nonlinear, vallis2017atmospheric}. These systems are characterized by their high dimensionality and multiscale structures, with strong nonlinear interactions occurring between state variables at different spatiotemporal scales. Extreme events, intermittency, and non-Gaussian probability density functions (PDFs) are also commonly observed \cite{farazmand2019extreme, denny2009prediction, mohamad2015probabilistic}. 

State estimation in CTNDSs is essential for parameter estimation, prediction, optimal control, and generating complete datasets \cite{evensen2022data, stengel1994optimal, ruiz2013estimating}. However, the turbulent nature of the dynamics can amplify small errors in the model structure, spatiotemporal solutions, or initial conditions when relying solely on forecasts. Data assimilation (DA), which integrates observations with system dynamics, is widely used to improve state estimation \cite{kalnay2003atmospheric, lahoz2010data, majda2012filtering, evensen2009data, law2015data}. Given the inevitable uncertainty in state estimation, especially for the unobserved variables of CTNDSs, probabilistic approaches via Bayesian inference are natural choices. The model provides a prior distribution, while observations inform the likelihood. These are then combined to form the posterior distribution for state estimation.

DA can be classified into two categories, based on when the observational data are incorporated. Filtering uses observations only up to the current time \cite{majda2012filtering, evensen2009data, law2015data}. Serving as the initialization, filter-based state estimation is a prerequisite for real-time forecasts. In contrast, smoothing leverages data from the entire observation period \cite{law2015data, rauch1965maximum, chen2020efficient, sarkka2023bayesian}, including future data, which makes it highly effective for optimal state estimation in offline data postprocessing. This helps to fill in missing values, minimize bias, and create complete datasets \cite{uppala2005era}. With the extra information from future observations, smoothing often produces more accurate and less uncertain state estimates than filtering. When the system dynamics and observational mappings are linear, with additive Gaussian noise, the corresponding filtering and smoothing methods are the Kalman(--Bucy) filter and the Rauch--Tung--Striebel (RTS) smoother, respectively \cite{rauch1965maximum, kalman1961new, bucy1987filtering}, where the posterior distribution is Gaussian and can be computed using closed-form analytical solutions.

Due to the intrinsic nonlinear dynamics and non-Gaussian statistics of CTNDSs, analytic solutions for DA are rarely available. As a result, various numerical and approximate methods have been developed, including the ensemble Kalman filter/smoother and the particle (or sequential Monte Carlo) filter/smoother \cite{evensen2009data, law2015data, anderson2001ensemble, delmoral1997nonlinear, liu1998sequential, kitagawa1996monte}. These methods are widely used but often face tremendous computational costs, especially in high-dimensional systems \cite{kuo2005lifting}, which limits the number of particles or ensemble members, potentially causing biases and numerical instabilities \cite{gottwald2013mechanism, harlim2010catastrophic, snyder2008obstacles}. Empirical tuning techniques, such as noise inflation, localization, and resampling, are widely used in practice to mitigate these issues \cite{anderson2001ensemble, chen2020predicting, hol2006resampling, greybush2011balance}. However, these ad hoc tuning methods are usually quite challenging to implement systematically. Closed-form analytic solutions for DA are thus highly desirable, as they improve computational efficiency, stability, and accuracy, especially in capturing non-Gaussian features, including intermittency and extreme events. They also facilitate the theoretical analysis of the error and uncertainty during state estimation.

Instead of refining DA schemes directly, computational challenges in state estimation can be addressed by developing approximate models that yield analytic solutions for the posterior distribution. While linear approximations allow for standard methods like the Kalman filter or RTS smoother, linearizing a strongly nonlinear system often leads to biases and instabilities. An alternative is a recently developed class of nonlinear systems that includes many turbulent models in geophysics, fluids, engineering, and neuroscience \cite{liptser2001statisticsI, liptser2001statisticsII, chen2018conditional}. Despite their nonlinear dynamics and non-Gaussian statistics, the conditional distributions of the unobserved state variables given the observations, which are precisely the posterior distributions in the DA context, are Gaussian, leading to the term conditional Gaussian nonlinear systems (CGNSs). The CGNS framework allows the use of closed analytic formulae for solving these conditional distributions, helping develop efficient algorithms for filtering, smoothing, and sampling without the ad hoc tuning often needed in ensemble-based DA methods. CGNSs have been utilized as surrogate models in various applications, including DA, prediction, preconditioning, and machine learning \cite{chen2022conditional, chen2024cgnsde, chen2014predicting}.

The standard smoother-based state estimation procedure involves executing a forward pass for filtering across the entire observational period, followed by a backward pass for smoothing \cite{rauch1965maximum, chen2020efficient}. However, the standard offline smoother requires storing the filter solution for the entire duration before initiating the backward pass, which requires substantial computational storage, particularly in high-dimensional systems. Due to the wide application of smoother-based state estimation, it is of practical importance to develop a computationally efficient and accurate algorithm that has the potential to significantly reduce storage requirements.

This paper develops a forward-in-time online smoother algorithm with adaptive lags for the CGNS framework, eliminating the need for a full backward pass. The online smoother sequentially updates the current state as new observations become available. By doing so, it effectively addresses the computational storage issue. While online schemes exist for the RTS smoother and ensemble-based methods, the CGNS online smoother has several unique advantages. First, despite the intrinsic nonlinearity of the underlying dynamics, closed analytic formulae are available to compute the nonlinear online smoother, providing precise and accurate solutions which avoid numerical and sampling errors as in ensemble-based methods. Second, due to the turbulent nature of the system, observations influence the estimated state only within a short time window, which enhances computational efficiency and reduces storage needs. Third, different from fixed-lag smoothers \cite{kitagawa1996monte, cappe2010inference, olsson2008sequential}, the lag in the CGNS smoother is adaptively determined. Online lag adjustment is essential for studying turbulent systems, where temporal autocorrelation varies significantly over time due to intermittency, extreme events, and nonlinearity. A fixed lag usually either overuses storage (when overestimated) or introduces a large bias (if underestimated). In contrast, an adaptive lag implicitly optimizes the use of data and computational storage. Finally, the adaptive lag is systematically determined using an information criterion based on the uncertainty reduction in the posterior distribution. It emphasizes the importance of the posterior uncertainty and differs from some of the existing adaptive lag selection criteria that rely solely on the posterior mean \cite{poddar2022adaptive}. As closed analytic formulae are available for posterior distributions, the information gain can be computed efficiently and accurately.

The adaptive online smoother for CGNSs is employed to study three important scientific problems. First, the online update of the smoother estimate allows for quantification of the improvement in state estimation by incorporating future information. This facilitates the inference of causal relationships between the state variables. A nonlinear dyad model with strong non-Gaussian features is utilized for such a study. Second, the CGNS framework is applied to Lagrangian data assimilation, which is a high-dimensional nonlinear problem that has a significant storage requirement \cite{chen2014information, chen2015noisy, chen2024lagrangian}. The online smoother allows for the estimation of the unobserved flow states based on Lagrangian tracers. This study highlights the role of the adaptive-lag online smoother in reducing computational storage needs, when compared to its fixed-lag variant. Finally, the online smoother facilitates developing an online parameter estimation algorithm with partial observations. It helps reveal the role of the observed intermittent extreme events in advancing parameter estimation.

The remainder of this paper is organized as follows. Section \ref{sec:2} introduces the CGNS modeling framework, including the equations for the optimal nonlinear filter and offline smoother state estimation. Section \ref{sec:3} presents the adaptive online smoother. In Section \ref{sec:4}, the application of the adaptive online smoother to the three key problems is demonstrated. Section \ref{sec:5} includes the conclusion. The appendices contain the analysis and proofs, as well as additional details.

\section{The Conditional Gaussian Nonlinear System Modeling Framework} \label{sec:2}

Throughout this paper, \textbf{boldface} letters are exclusively used to denote vectors for the sake of mathematical clarity. In this regard, we use \textbf{l}owercase boldface letters to denote column vectors, while \textbf{U}ppercase boldface letters denote matrices. The only exception to this rule is $\vw$ (with some subscript or superscript), which denotes a Wiener process. Although in this work this always corresponds to a column vector, we instead use an uppercase letter due to literary tradition.

Let $t$ denote the time variable, with $t\in[0,T]$, where $T>0$ is allowed to be infinite. Let $(\Omega, \cF, \pp)$ be a complete probability space and $\{\cF_t\}_{t\in[0,T]}$ to be a filtration of sub-$\sigma$-algebras of $(\Omega, \cF)$, which we assume is augmented (i.e., complete and right-continuous). For every filtration there exists a smallest such augmented filtration refining $\{\cF_t\}_{t\in[0,T]}$ (known as its completion), so this is without loss of generality. We let $\big(\mathbf{x}(t,\omega),\mathbf{y}(t,\omega)\big)$, for $t\in[0,T]$ and $\omega\in\Omega$ (for the rest of this work we drop the event or sample space dependence for notational simplicity, but it is always implied), be a partially observable $(S,\mathcal{A})$-valued stochastic process, where $\vx$ is the observable component, while $\vy$ is the unobservable component. The theory that follows can be applied mutatis mutandis to any partially observable stochastic process that takes values over a measurable space $(S,\mathcal{A})$ where $S$ is a separable Hilbert space (finite-dimensional or not) over a complete scalar ground field and $\mathcal{A}$ is a $\sigma$-algebra of $S$, but for this work it suffices to consider complex-valued finite-dimensional processes with respect to the usual Euclidean inner product. As such, we let $S=\mathbb{C}^{k+l}$ and $\mathcal{A}=\mathcal{B}_{\mathbb{C}^{k+l}}\equiv\mathcal{B}_{\rr^{2(k+l)}}$, with $\vx$ being a $k$-dimensional vector and $\vy$ an $l$-dimensional one, where $\mathcal{B}_{\rr^{2(k+l)}}$ is the Borel $\sigma$-algebra of $\rr^{2(k+l)}$, since $\mathrm{dim}_{\rr}(\mathbb{C}^{k+l})=2(k+l)$. We assume that
\begin{equation*}
    (\vx,\vy)=\big((x_1(t),\ldots,x_k(t), y_1(t),\ldots,y_l(t)),\cF_t\big), \ t\in[0,T],
\end{equation*}
meaning the partially observable random process is (jointly) adapted to the filtration $\{\cF_t\}_{t\in[0,T]}$, i.e., for all times $t\in[0,T]$ the random vector defined by $\big(\vx(t,\cdot)^\tran,\vy(t,\cdot)^\tran\big)^\tran:\Omega\to S$ is an $(\cF_t;\mathcal{A})$-measurable function. Specifically, this implies that the natural filtration of $\cF$ with respect to $\{\vx(s)\}_{s\leq t}$, which is the sub-$\sigma$-algebra generated by the observable processes for times $s\leq t$ and is defined by
\begin{equation*}
    \cF_t^\vx:=\sigma\big(\{ \mathbf{x}(s)\}_{s\leq t}\big)=\big\{\mathbf{x}(s)^{-1}[A]=\mathbf{x}(s,\cdot)^{-1}[A]: A\in\mathcal{A}, \ s\leq t\big\},
\end{equation*}
satisfies $\cF_t^\vx\subseteq \cF_t$, since $\vx$ is adapted to the filtration $\{\cF_t\}_{t\in[0,T]}$ by construction and by definition $\cF_t^\vx$ is the smallest such filtration. We call this natural filtration the observable $\sigma$-algebra (at time $t$) for the remainder of this work.

It is known that, at each time instant $t\in[0,T]$, the optimal estimate in the mean-square sense of some measurable function $\mathbf{h}(t,\vx,\vy)$ on the basis of the observations up to time $T'\in[t,T]$, $\{\mathbf{x}(s)\}_{s\leq T'}$, is exactly its conditional expectation conditioned on the observable $\sigma$-algebra at time $T'$, $\mathbb{E}\big[\vh(t,\vx,\vy)|\cF_{T'}^\vx\big]$. This is known as the a-posteriori mean and this assertion of optimality rests on the tacit assumption that $\mathbb{E}\big[\left\|\vh(\vx,\vy)\right\|_2^2\big]$ is finite, where $\left\|\cdot\right\|_2$ denotes the usual Euclidean norm over $\mathbb{C}^{\mathrm{dim}(\mathbf{h})}$ \cite{liptser2001statisticsI, liptser2001statisticsII}. Usually $\vh$ is a function of the unobserved process $\vy$ and in this work we exclusively use $\vh=\vy$ as to recover the optimal filter and smoother conditional statistics of the hidden process when conditioning on the observations.

The goal of optimal state estimation is to characterize the posterior states using a system of stochastic differential equations (SDEs), known as the optimal nonlinear filter or smoother equations depending on whether we condition on current observations or the entire observational period, respectively. In general, without making specific assumptions about the structure of the processes $\vh$ and $\vx$, determining $\ee{\vh|\cF_{T'}^\vx}$ is challenging. CGNSs resolve this by having the key advantage of conditional Gaussian posterior distributions, which can be written down using closed analytic formulae. The linear state-observation system filtered by the classical Kalman--Bucy filter \cite{kalman1961new, bucy1987filtering} is the simplest example of a CGNS \cite{liptser2001statisticsII, chen2016nonlinear}. Despite the conditional Gaussianity, these coupled systems remain highly nonlinear, with the associated marginal and joint distributions being highly non-Gaussian, which allows the systems to capture many realistic turbulent features.

\subsection{Conditionally Gaussian Processes} \label{sec:2.1}

In its most general form, a conditional Gaussian system of processes consists of two diffusion-type processes defined by the following nonlinear system of stochastic differentials given in Itô form \cite{liptser2001statisticsII, chen2018conditional}:
\begin{align}
    \rmd\vxt &= \big(\ml^\vx(t,\vx)\vyt+\vf^\vx(t,\vx)\big)\rmd t+\ms_1^\vx(t,\vx)\rmd \vwt{1}+\ms_2^\vx(t,\vx)\rmd \vwt{2}, \label{eq:condgauss1}\\
    \rmd\vyt &= \big(\ml^\vy(t,\vx)\vyt+\vf^\vy(t,\vx)\big)\rmd t+\ms_1^\vy(t,\vx)\rmd \vwt{1}+\ms_2^\vy(t,\vx)\rmd \vwt{2}, \label{eq:condgauss2}
\end{align}
where
\begin{equation*}
    \vw_1=\big((W_{11}(t),\ldots,W_{1d}(t)),\cF_t\big) \quad\text{and}\quad \vw_2=\big((W_{21}(t),\ldots,W_{2r}(t)),\cF_t\big),
\end{equation*}
are two mutually independent complex-valued Wiener processes (i.e., both their real and imaginary parts are mutually independent real-valued Wiener processes with standardized covariances) and almost every path of $\vx$ and $\vy$ is in $C^0\big([0,T];\mathbb{C}^k\big)$ and $C^0\big([0,T];\mathbb{C}^l\big)$, respectively. The elements of the vector- and matrix-valued functions of multiplicative factors ($\ml^\vx,\ml^\vy$), forcings ($\vf^\vx,\vf^\vy$), and noise feedbacks ($\ms_1^\vx,\ms_2^\vx,\ms_1^\vy,\ms_2^\vy$) are assumed to be nonanticipative (adapted) functionals over the measurable time-function cylinder
\begin{equation*}
    \big(C^{0,k}_T,\mathscr{B}^k_T\big):=\left([0,T]\times C^0\big([0,T];\mathbb{C}^k\big), \mathcal{B}\big([0,T]\big)\otimes \mathcal{B}\big(C^0\big([0,T];\mathbb{C}^k\big)\big)\right),
\end{equation*}
where $\otimes$ denotes the tensor-product $\sigma$-algebra on the underlying product space, i.e.,
\begin{equation*}
    \mathscr{B}^k_T=\mathcal{B}\big([0,T]\big)\otimes\mathcal{B}\big(C^0\big([0,T];\mathbb{C}^k\big)\big)=\sigma\left(\big\{A\times B: A\in\mathcal{B}\big([0,T]\big), B\in\mathcal{B}\big(C^0\big([0,T];\mathbb{C}^k\big)\big)\big\}\right),
\end{equation*}
with $\mathcal{B}\big(C^0\big([0,T];\mathbb{C}^k\big)\big)$ being the $\sigma$-algebra generated by the topology of compact convergence on the space of continuous functions from $[0,T]$ to $\mathbb{C}^k$, $C^0\big([0,T];\mathbb{C}^k\big)$. It is important to emphasize here the fact that in a CGNS, the unobservable component $\vy$ enters the dynamics in a conditionally linear manner, whereas the observable process $\vx$ can enter into the coefficients of both equations in any measurably nonlinear way.

Many CTNDSs fit into the CGNS modeling framework. Some well-known classes of these systems are physics-constrained nonlinear stochastic models (for example the noisy versions of the Lorenz models, low-order models of Charney--DeVore flows, and a paradigm model for topographic mean flow interaction) \cite{majda2012physics, harlim2014ensemble}, stochastically coupled reaction-diffusion models used in neuroscience and ecology (for example the stochastically coupled FitzHugh--Nagumo models and the stochastically coupled SIR epidemic models), and spatiotemporally multiscale models for turbulence, fluids, and geophysical flows (for example the Boussinesq equations with noise and the stochastically forced rotating shallow-water equation) \cite{chen2018conditional}. This modeling framework has also been exploited to develop realistic low-order stochastic models for the Madden--Julian oscillation (MJO) and Arctic sea ice \cite{chen2014predicting, chen2022efficient}.

In addition to modeling many natural phenomena, the CGNS framework and its closed analytic DA formulae have been applied to study many theoretical and practical problems. It has been utilized to develop a nonlinear Lagrangian data assimilation algorithm, allowing rigorous analysis to study model error and uncertainty, as well as recovery of turbulent ocean flows with noisy observations from Lagrangian tracers \cite{chen2014information, chen2015noisy, chen2024lagrangian}. The analytically solvable DA scheme has also been applied to the prediction and state estimation of the non-Gaussian intermittent time series of the MJO and the monsoon intraseasonal variabilities, in addition to the filtering of the stochastic skeleton model for the MJO \cite{chen2014predicting, chen2016filtering, chen2016nonlinear}. Notably, the efficient DA procedure also helps develop a rapid algorithm to solve the high-dimensional Fokker--Planck equation \cite{chen2017beating, chen2018efficient}. Worth highlighting is that the ideas of the CGNS modeling framework and the associated DA procedures have also been applied to develop cheap exactly-solvable forecast models in dynamic stochastic superresolution of sparsely observed turbulent systems \cite{branicki2013dynamic, keating2012new}, build stochastic superparameterization for geophysical turbulence \cite{majda2014new}, and design efficient multiscale DA schemes via blended particle filters for high-dimensional chaotic systems \cite{majda2014blended}.

A set of sufficient regularity conditions needs to be assumed a-priori so that the main results of the CGNS framework can be established. We enforce the same set of assumptions as in the work of Andreou \& Chen \cite{andreou2024martingale}, which we outline in Appendix \ref{sec:app1}, with the rationale behind the adoption of each one being provided there (and in the references therein). These conditions are sufficient to show that the posterior distributions of the unobserved variables, when conditioning on the observational data, are Gaussian, as stated in the following theorem; this is exactly why \eqref{eq:condgauss1}--\eqref{eq:condgauss2} is called a CGNS. For the following, we abuse notation for clarity and instead write $\big(\cdot\big|\vx(s),s\leq t\big)$ to indicate the fact that we are conditioning on the observable $\sigma$-algebra at time $t$, $\big(\cdot\big|\cF_t^\vx\big)$.

\begin{thm}[\textbf{Conditional Gaussianity}] \label{thm:condgaussianity}
    Let $\big(\vxt,\vyt\big)$ satisfy \eqref{eq:condgauss1}--\eqref{eq:condgauss2} and assume that the regularity conditions \textbf{\blue{(1)}}--\textbf{\blue{(8)}} in Appendix \ref{sec:app1} hold. Additionally, assume that the initial conditional distribution $\pp\big(\vy(0)\leq \boldsymbol{\alpha}_0\big|\vx(0)\big)$\footnote{The event $\big\{\vy(s)\leq \boldsymbol{\alpha}=(\alpha_{1},\ldots,\alpha_{l})^\tran\big\}$ is to be understood coordinate-wise: $\big\{j=1,\ldots,l: \big(\mathrm{Re}(y_j(s))\leq \mathrm{Re}(\alpha_{j}),\mathrm{Im}(y_j(s))\leq \mathrm{Im}(\alpha_{j})\big)\big\}$.} is ($\pp$-almost surely) Gaussian, $\mathcal{N}_l\big(\vm{f}(0),\mr{f}(0)\big)$\footnote{Whenever we refer to a complex-valued multivariate Gaussian or normal distribution in this work, we mean a circularly-symmetric one (i.e., with a zero pseudo-covariance or relation matrix) \cite{lapidoth2017foundation}.}, and mutually independent from the Wiener processes $\vw_1$ and $\vw_2$, where
    \begin{equation*}
        \vm{f}(0):=\mathbb{E}\big[\vy(0)\big|\vx(0)\big] \quad\text{and}\quad \mr{f}(0):=\mathbb{E}\big[(\vy(0)-\vm{f}(0))(\vy(0)-\vm{f}(0))^\dagger\big|\vx(0)\big],
    \end{equation*}
    with $\cdot^\dagger$ denoting the Hermitian transpose operator. Furthermore, assume $\pp\big(\mathrm{tr}(\mr{f}(0))<+\infty\big)=1$, where $\mathrm{tr}(\cdot)$ denotes the trace operator, meaning that the initial estimation mean-square error between $\vy(0)$ and $\vm{f}(0)$ is almost surely finite. Then, for any $t_j$ such that $0\leq t_1 < t_2 < \cdots < t_n\leq t$, with $t\in[0,T]$, and $\va_1,\ldots,\va_n\in\mathbb{C}^l$, the conditional distribution
    \begin{equation*}
        \pp\big(\vy(t_1)\leq \va_1,\ldots,\vy(t_n)\leq \va_n\big|\vx(s),s\leq t\big),
    \end{equation*}
    is ($\pp$-almost surely) Gaussian.
\end{thm}

\begin{proof}[\textbf{\underline{Proof}}] This is Theorem 12.6 in Liptser \& Shiryaev \cite{liptser2001statisticsII}, which is the multi-dimensional analog of Theorem 11.1. Thorough details are also provided in Kolodziej \cite{kolodziej1980state}. For the analogous result in the case of discrete time (i.e., where the CGNS consists of stochastic difference equations and $\vx$ is observed at discrete times instants), see Theorem 13.3 of Liptser \& Shiryaev \cite{liptser2001statisticsII}, with the respective sufficient assumptions given in Subchapter 13.2.1.

The proof, regardless of continuous- or discrete-time, uses the conditional characteristic function method and a conditional version of the law-uniqueness theorem \cite{kolodziej1980state, yuan2016some}.
\end{proof}

\subsection{Analytically Solvable Filter and Smoother Posterior Distributions} \label{sec:2.2}

With Theorem \ref{thm:condgaussianity} established, it is then possible to yield the optimal nonlinear filter state estimation equations as showed in the following theorem, where the subscript ``$\,\text{\normalfont{f}}\,$" is used to denote the filter conditional Gaussian statistics, which appropriately stands for filter. The filter conditional Gaussian statistics are also known as the filter posterior mean and filter posterior covariance under the Bayesian inference dynamics framework.

\begin{thm}[\textbf{Optimal Nonlinear Filter State Estimation Equations}] \label{thm:filtering}
    Let the assumptions of Theorem \ref{thm:condgaussianity} and the additional regularity conditions \textbf{\blue{(9)}}, \textbf{\blue{(11)}}, and \textbf{\blue{(12)}}, which are outlined in Appendix \ref{sec:app1}, to hold. Then, for any $t\in[0,T]$, the $\cF^\vx_t$-measurable Gaussian statistics of the Gaussian conditional distribution
    \begin{equation*}
        \pp\big(\vyt\leq \va\big|\vx(s),s\leq t\big) \overset{\d}{\sim}\mathcal{N}_l\big(\vmt{\nf},\mrt{\nf}\big),
    \end{equation*}
    defined as
    \begin{equation*}
        \vmt{\nf}:=\mathbb{E}\big[\vyt\big|\vx(s),s\leq t\big] \quad\text{and}\quad \mrt{\nf}:=\mathbb{E}\big[(\vyt-\vmt{\nf})(\vyt-\vmt{\nf})^\dagger\big|\vx(s),s\leq t\big],
    \end{equation*}
    are the unique continuous solutions of the system of optimal nonlinear filter equations:
    \begin{align}
    \d \vmt{\nf}&=(\ml^\vy\vm{\nf}+\vf^\vy)\d t+(\ms^\vy\circ \ms^\vx+\mr{\nf}(\ml^\vx)^\dagger)(\ms^\vx\circ\ms^\vx)^{-1}(\d \vxt -(\ml^\vx\vm{\nf}+\vf^\vx)\d t), \label{eq:filter1}\\
    \d \mrt{\nf}&=\big(\ml^\vy\mr{\nf}+\mr{\nf}(\ml^\vy)^\dagger+\ms^\vy\circ\ms^\vy-(\ms^\vy\circ \ms^\vx+\mr{\nf}(\ml^\vx)^\dagger)(\ms^\vx\circ\ms^\vx)^{-1}(\ms^\vx\circ \ms^\vy+\ml^\vx\mr{\nf})\big)\d t\label{eq:filter2},
    \end{align}
    with initial conditions $\vm{\nf}(0)=\ee{\vy(0)\big|\vx(0)}$ and $\mr{\nf}(0)=\ee{(\vy(0)-\vm{\nf}(0))(\vy(0)-\vm{\nf}(0))^\dagger\big|\vx(0)}$, where the noise interactions through the Gramians (with respect to rows) are defined as
    \begin{gather*}
        (\ms^\vx\circ \ms^\vx)(t,\vx):=\ms_1^\vx(t,\vx)\ms_1^\vx(t,\vx)^\dagger+\ms_2^\vx(t,\vx)\ms_2^\vx(t,\vx)^\dagger, \\
        (\ms^\vy\circ \ms^\vy)(t,\vx):=\ms_1^\vy(t,\vx)\ms_1^\vy(t,\vx)^\dagger+\ms_2^\vy(t,\vx)\ms_2^\vy(t,\vx)^\dagger, \\
        (\ms^\vx\circ \ms^\vy)(t,\vx):=\ms_1^\vx(t,\vx)\ms_1^\vy(t,\vx)^\dagger+\ms_2^\vx(t,\vx)\ms_2^\vy(t,\vx)^\dagger,\quad (\ms^\vy\circ \ms^\vx)(t,\vx):=(\ms^\vx\circ \ms^\vy)(t,\vx)^\dagger.
    \end{gather*}
    Furthermore, if the initial covariance matrix $\mr{\nf}(0)$ is positive-definite ($\pp$-almost surely),  then all the matrices $\mrt{\nf}$, for $t\in[0,T]$, remain positive-definite ($\pp$-almost surely).
\end{thm}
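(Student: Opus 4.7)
The plan is to exploit the conditional Gaussianity established in Theorem~\ref{thm:condgaussianity}, which collapses the filtering problem from an infinite-dimensional PDE for the full posterior into a finite-dimensional system for only the first two conditional moments $\vmt{\nf}$ and $\mrt{\nf}$. With this reduction, the derivation naturally splits into four pieces: (i) build the innovations process, (ii) derive the mean equation via the Fujisaki-Kallianpur-Kunita (FKK) representation, (iii) derive the Riccati equation via It\^o applied to $\vyt\vyt^\dagger$, and (iv) establish well-posedness together with preservation of positive-definiteness.

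For (i), I would set $\d\mathbf{I}(t):=\d\vxt-(\ml^\vx\vmt{\nf}+\vf^\vx)\d t$ and verify, using $\cF_t^\vx$-measurability of all coefficients and conditional linearity of $\vy$ in the $\vx$-drift, that $\mathbf{I}$ is an $\cF_t^\vx$-adapted Wiener process with infinitesimal covariance $(\ms^\vx\circ\ms^\vx)\d t$ via L\'evy's characterization; this uses $\ee{\ml^\vx\vyt+\vf^\vx|\cF_t^\vx}=\ml^\vx\vmt{\nf}+\vf^\vx$. For (ii), the FKK representation gives $\d\vmt{\nf}=(\ml^\vy\vmt{\nf}+\vf^\vy)\d t+\mathbf{K}(t)(\ms^\vx\circ\ms^\vx)^{-1}\d\mathbf{I}(t)$, where $\mathbf{K}(t)$ is the sum of the direct noise cross-coupling and the instantaneous signal-observation cross-covariance. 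Under the posterior Gaussian law, the latter evaluates to $\mrt{\nf}(\ml^\vx)^\dagger$, yielding $\mathbf{K}(t)=\ms^\vy\circ\ms^\vx+\mrt{\nf}(\ml^\vx)^\dagger$, and substituting for $\d\mathbf{I}(t)$ recovers \eqref{eq:filter1} exactly.

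For (iii), I would apply the It\^o product rule separately to $\vyt\vyt^\dagger$ using \eqref{eq:condgauss2} and to $\vmt{\nf}\vmt{\nf}^\dagger$ using the mean equation just derived, subtract, and take conditional expectation given $\cF_t^\vx$. Here the Gaussianity from Theorem~\ref{thm:condgaussianity} is essential: it makes every centered third-order conditional moment of $\vyt-\vmt{\nf}$ vanish, so the It\^o corrections collapse into clean matrix products of $\mrt{\nf}$ with the coefficient matrices, producing exactly \eqref{eq:filter2}. Crucially, because every coefficient is $\cF_t^\vx$-measurable, the residual martingale part has zero conditional quadratic variation, so \eqref{eq:filter2} is in fact a pathwise random ODE rather than a genuine SDE — the hallmark structural advantage of the CGNS framework.

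The last step (iv) is where I expect the main obstacle to lie. Local existence and uniqueness of a continuous pair $(\vmt{\nf},\mrt{\nf})$ on any interval on which $\mrt{\nf}\succ 0$ follow from the regularity hypotheses in Appendix~\ref{sec:app1} applied to the locally Lipschitz right-hand sides. To rule out degeneracy or explosion of $\mrt{\nf}$, I would pass to the information matrix $\mathbf{P}(t):=\mrt{\nf}^{-1}$: differentiating $\mathbf{P}(t)\mrt{\nf}=\mathbf{Id}$ converts the Riccati into a Lyapunov-type equation for $\mathbf{P}(t)$ whose right-hand side is pathwise bounded on $[0,T]$ under the assumed regularity, which shows $\mathbf{P}(t)$ stays finite and positive-definite, equivalently that $\mrt{\nf}$ stays positive-definite and bounded. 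The structural input that makes this work is the Schur-complement inequality $\ms^\vy\circ\ms^\vy-(\ms^\vy\circ\ms^\vx)(\ms^\vx\circ\ms^\vx)^{-1}(\ms^\vx\circ\ms^\vy)\succeq 0$, which follows from the joint positive-semidefiniteness of the full noise covariance of \eqref{eq:condgauss1}--\eqref{eq:condgauss2} and prevents the observational information-gain term in \eqref{eq:filter2} from ever overwhelming the signal-diffusion term $\ms^\vy\circ\ms^\vy$.
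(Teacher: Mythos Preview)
Your proposal is correct and, in substance, is exactly the classical innovations/FKK derivation that underlies the reference the paper cites. The paper itself does not give a self-contained proof of this theorem: its proof block simply points to Theorem~12.7 of Liptser \& Shiryaev (the multi-dimensional analog of Theorems~12.1 and~12.3), to Kolodziej, and to a martingale-free alternative in Andreou \& Chen that passes from discrete to continuous time via a formal limit under the extra assumptions \textbf{(10)}--\textbf{(11)}. Your steps (i)--(iii) are precisely the skeleton of the Liptser--Shiryaev argument: build the innovations process, apply the FKK representation to get the mean equation, and then use conditional Gaussianity from Theorem~\ref{thm:condgaussianity} to close the second-moment equation by killing the centered third conditional moments. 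Your step (iv) --- controlling $\mrt{\nf}$ by passing to the information matrix $\mathbf{P}(t)=\mrt{\nf}^{-1}$ and invoking the Schur-complement nonnegativity of $\ms^\vy\circ\ms^\vy-(\ms^\vy\circ\ms^\vx)(\ms^\vx\circ\ms^\vx)^{-1}(\ms^\vx\circ\ms^\vy)$ --- is a clean way to handle the positive-definiteness preservation that the cited references also establish.

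One small remark on your phrasing in (iii): ``subtract, and take conditional expectation given $\cF_t^\vx$'' slightly understates what is needed. Projecting the It\^o differential of $\vyt\vyt^\dagger$ onto $\cF_t^\vx$ is itself an application of the FKK representation (now with $\vh=\vyt\vyt^\dagger$), not a bare conditional expectation; the gain term there is where the conditional third moments enter and are annihilated by Gaussianity. You clearly have the right mechanism in mind, but if you write this out fully that is the place to be explicit.
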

\begin{proof}[\textbf{\underline{Proof}}] This is Theorem 12.7 in Liptser \& Shiryaev \cite{liptser2001statisticsII}, which is the multi-dimensional analog of Theorems 12.1 and 12.3. Thorough details are also provided in Kolodziej \cite{kolodziej1980state}. For the analogous result in the case of discrete time, see Theorem 13.4 of Liptser \& Shiryaev \cite{liptser2001statisticsII}, which outlines the corresponding optimal recursive nonlinear filter difference equations, with the respective sufficient assumptions given in Subchapter 13.2.1. For a martingale-free proof see Theorem 2 in Andreou \& Chen \cite{andreou2024martingale} (although it additionally requires assumption \textbf{\blue{(10)}} found in Appendix \ref{sec:app1}, as to pass from the discrete-time filter to the continuous one via a formal limit).
\end{proof}

The form of the filter mean equation can be intuitively explained within the Bayesian inference framework for DA. The first two terms on the right-hand side of \eqref{eq:filter1}, namely $\ml^\vy\vm{f}+\vf^\vy$,  represent the forecast mean, derived from the process of the unobserved variables in \eqref{eq:condgauss2}. The remaining terms account for correcting the prior mean state, incorporating information from partial observations. The matrix factor in front of the innovation, or pre-fit measurement residual $\d \vxt -(\ml^\vx\vm{\nf}+\vf^\vx)\d t$, is analogous to the Kalman gain in classical Kalman filter theory: ${(\ms^\vy\circ \ms^\vx+\mr{\nf}(\ml^\vx)^\dagger)(\ms^\vx\circ\ms^\vx)^{-1}}$. This factor determines the weight of the observations when updating the model-predicted state. Even in cases where the variables $\vy$ of the unobserved process do not explicitly depend on $\vx$, such as in the Lagrangian DA, the observational process \eqref{eq:condgauss1} still couples the observed and unobserved components. This coupling allows the observations to impact state estimation and correct the forecast. Finally, observe that the equation driving the evolution of the covariance tensor $\mr{f}$ is a random Riccati equation \cite{kandil2003matrix, bishop2019stability}, since the coefficients depend on the observable random variables $\vx$.

As already stated, the CGNS framework also enjoys closed analytic formulae for the recovery of the optimal smoother state estimation. The smoother posterior distribution at time $t\in[0,T]$ exploits the observational information in the entire period $[0,T]$ and therefore it allows for a more accurate and less uncertain estimated state compared to the filter solution. Solving the optimal nonlinear smoother equations requires applying a forward pass (filtering) from $t=0$ to $t=T$, which is then followed by a backward pass (smoothing) from $t=T$ to $t=0$ \cite{law2015data, rauch1965maximum, chen2020efficient, sarkka2023bayesian}. In the theorem that follows, the optimal nonlinear smoother equations, which run backwards in time, are showed, with the subscript ``$\,\text{\normalfont{s}}\,$" denoting the smoother conditional Gaussian statistics, which appropriately stands for smoother. The smoother conditional Gaussian statistics are also known as the smoother posterior mean and smoother posterior covariance under the Bayesian inference dynamics framework.

\begin{thm}[\textbf{Optimal Nonlinear Smoother State Estimation Backward Equations}] \label{thm:smoothing}
    Let the assumptions of Theorem \ref{thm:condgaussianity} and the additional regularity conditions \textbf{\blue{(9)}}, \textbf{\blue{(11)}}, and \textbf{\blue{(12)}}, which are outlined in Appendix \ref{sec:app1}, to hold. In addition, assume
    \begin{equation} \label{eq:positivedefinite}
        \pp\left( \underset{t\in[0,T]}{\mathrm{inf}}\{ \mathrm{det}(\mr{\nf}(t))\}>0\right)=1,
    \end{equation}
    and define the following auxiliary matrices:
    \begin{align}
        \ma(t,\vx)&:=\ml^\vy(t,\vx)-(\ms^\vy\circ \ms^\vx)(t,\vx)(\ms^\vx\circ \ms^\vx)^{-1}(t,\vx)\ml^\vx(t,\vx)\in\mathbb{C}^{l \times l}, \label{eq:auxiliarymata}\\
        \mb(t,\vx) &:=  (\ms^\vy\circ \ms^\vy)(t,\vx)-(\ms^\vy\circ \ms^\vx)(t,\vx)(\ms^\vx\circ \ms^\vx)^{-1}(t,\vx)(\ms^\vx\circ \ms^\vy)(t,\vx)\in\mathbb{C}^{l \times l}. \label{eq:auxiliarymatb}
    \end{align}
    Then, for any $T\geq t\geq 0$ ($t$ running backward), the $\cF_T^\vx$-measurable Gaussian statistics of the smoother posterior distribution
    \begin{equation*}
        \pp\big(\vyt\big|\vx(s), s\leq T\big)\overset{\d}{\sim}\mathcal{N}_l\big(\vm{\ns}(t),\mr{\ns}(t)\big),
    \end{equation*}
    defined as
    \begin{equation*}
        \vmt{\ns}:=\mathbb{E}\big[\vyt\big|\vx(s),s\leq T\big] \quad\text{and}\quad \mrt{\ns}:=\mathbb{E}\big[(\vyt-\vmt{\ns})(\vyt-\vmt{\ns})^\dagger\big|\vx(s),s\leq T\big],
    \end{equation*}
    are the unique continuous solutions to the system of optimal nonlinear smoother backward differential equations:
    \begin{align}
        \smooth{\d \vm{\ns}}&=-\left(\ml^\vy\vm{\ns}+\vf^\vy-\mb\mr{\nf}^{-1}(\vm{\nf}-\vm{\ns})\right)\d t+(\ms^\vy\circ \ms^\vx)(\ms^\vx\circ \ms^\vx)^{-1}\big( \smooth{\d\vx}+(\ml^\vx\vm{\ns}+\vf^\vx)\d t \big), \label{eq:revbackinter1} \\
         \smooth{\d \mr{\ns}}&= -\big((\ma+\mb\mr{\nf}^{-1})\mr{\ns}+\mr{\ns}(\ma+\mb\mr{\nf}^{-1})^\dagger-\mb\big)\d t. \label{eq:revbackinter2}
    \end{align}
    The backward-arrow notation in \eqref{eq:revbackinter1}--\eqref{eq:revbackinter2} is to be understood as:
    \begin{equation*}
        \smooth{\d \vm{\ns}}:=\lim_{\Delta t\to 0} \big(\vm{\ns}(t)-\vm{\ns}(t+\Delta t)\big), \quad\smooth{\d \mr{\ns}}:=\lim_{\Delta t\to 0} \big(\mr{\ns}(t)-\mr{\ns}(t+\Delta t)\big), \quad \smooth{\d \vx}:=\lim_{\Delta t\to 0} \big(\vx(t)-\vx(t+\Delta t)\big).
    \end{equation*}
    In other words, the notation $\smooth{\frac{\d\cdot}{\d t}}$ corresponds to the negative of the usual derivative, which means the system in \eqref{eq:revbackinter1}--\eqref{eq:revbackinter2} is to be solved backward over $[0,T]$. The ``starting" values for the smoother posterior statistics, $\big(\vm{\ns}(T),\mr{\ns}(T)\big)$, are the same as those of the corresponding filter estimates at the endpoint $t=T$, $\big(\vm{\nf}(T),\mr{\nf}(T)\big)$.
\end{thm}
\begin{proof}[\textbf{\underline{Proof}}] This is Theorem 12.10 in Liptser \& Shiryaev \cite{liptser2001statisticsII}. For the analogous result in the case of discrete time, see Theorem 13.12, which outlines the corresponding optimal recursive nonlinear smoother backward difference equations, with the respective sufficient assumptions given in Subchapters 13.2.1 and 13.3.8. For a martingale-free proof see Theorem 3 in Andreou \& Chen \cite{andreou2024martingale} (although it additionally requires assumptions \textbf{\blue{(10)}} and \textbf{\blue{(13)}} found in Appendix \ref{sec:app1}, as to pass from the discrete-time smoother to the continuous one via a formal limit).
\end{proof}

\section{Optimal Online Smoother} \label{sec:3}

As aforementioned, the standard smoothing procedure involves a forward pass using a filtering method followed by a backward pass to obtain the optimal smoother state estimation \cite{law2015data, rauch1965maximum, chen2020efficient, sarkka2023bayesian}. Recalculating the smoother statistics from scratch whenever new observations become available is computationally and storage-intensive. Therefore, an online version of the optimal nonlinear smoother, requiring only forward-in-time updates, is highly desirable, as it can update estimated states sequentially with new data. Fortunately, the CGNS framework provides exact, closed-form solutions for deriving an online smoother. These facilitate efficiency and the understanding of the mathematical and numerical properties of the CGNS.

\subsection{Optimal Online Forward-In-Time Discrete Smoother} \label{sec:3.1}

We begin by adopting a time discretization scheme, determined by the rate at which observational data arrive, similar to numerical integration. Although our framework allows for irregular observation intervals, for simplicity, we assume the data arrive at a regular, uniform rate, which is common in practice. Thus, observations are available sequentially at times $t_0=0<t_1<t_2<\cdots<t_n<\cdots<+\infty$, where $\dt_j\equiv\dt=t_{j+1}-t_j,$ for all $j$. We also assume that $\dt$ is sufficiently small to ensure stability and consistency of the discrete-time schemes for numerical integration of the CGNS equations, as well as of the optimal nonlinear filter and smoother equations, thus ensuring their convergence \cite{andreou2024martingale}.

In what follows, the superscript notation $\cdot^{\, j}$ is used to denote the discrete approximation to the continuous form of the respective vector or matrix functional when evaluated on $t_j$, for example $\ml^{\vx,j}:=\ml^\vx(t_j,\vx(t_j))$. We write the CGNS of equations \eqref{eq:condgauss1}--\eqref{eq:condgauss2} in a discrete fashion using the Euler--Maruyama scheme \cite{kloeden1992numerical, gardiner2009stochastic}. Then the time discretization of \eqref{eq:condgauss1}--\eqref{eq:condgauss2} is simply given by
\begin{align}
    \vx^{j+1} &= \vx^{j}+\left(\ml^{\vx,j}\vy^j+\vf^{\vx,j}\right)\dt+\ms_1^{\vx,j}\sqrt{\dt} \ve_1^j+\ms_2^{\vx,j}\sqrt{\dt} \ve_2^j, \label{eq:discretecondgauss1}\\
    \vy^{j+1} &= \vy^{j}+\left(\ml^{\vy,j}\vy^j+\vf^{\vy,j}\right)\dt+\ms_1^{\vy,j}\sqrt{\dt} \ve_1^j+\ms_2^{\vy,j}\sqrt{\dt} \ve_2^j, \label{eq:discretecondgauss2}
\end{align}
where $\ve_1^j$ and $\ve_2^j$ are mutually independent complex standard Gaussian random noises. The explicit nature and order of the temporal discretization used here is sufficient for the CGNS framework we are working with, meaning implicit or higher-order discretization schemes are not needed \cite{andreou2024martingale}.

Given a series of realizations for the observable $\vx$, $\big\{\vx^0,\vx^1,\ldots,\vx^n\big\}$, where $\vx^j$ was obtained $\dt$ time units after $\vx^{j-1}$ for $j=1,\ldots,n$, we let $\vm{s}^{j,n}$ and $\mr{s}^{j,n}$ denote the discrete smoother posterior mean and the discrete smoother posterior covariance, respectively, when evaluated at time $t=t_j$ and conditioned on this realization of $\vx$ up to time $t=t_n$, where $0\leq j\leq n$ (see Theorem \ref{thm:smoothing}). In other words, we define
\begin{gather*}
    \vm{\normalfont{s}}^{j,n}:=\mathbb{E}\big[\vy^j\big|\vx^s,s\leq n\big],\quad \mr{\normalfont{s}}^{j,n}:=\mathrm{Cov}\big(\vy^j,\vy^j\big|\vx^s,s\leq n\big)=\mathbb{E}\big[(\vy^j-\vm{\normalfont{s}}^{j,n})(\vy^j-\vm{\normalfont{s}}^{j,n})^\dagger\big|\vx^s,s\leq n\big].
\end{gather*}
Notice how we explicitly note the dependence of the smoother state estimates on the length of the observational period. As is known by the discrete-time counterpart of Theorem \ref{thm:condgaussianity} (see Theorem 13.3 in Liptser \& Shiryaev \cite{liptser2001statisticsII}), the smoother posterior distribution $\pp\big(\vy^j\big|\vx^s, s=0,\ldots,n\big)$ is ($\pp$-almost surely) conditionally Gaussian, $\mathcal{N}_l\big(\vm{\normalfont{s}}^{j,n},\mr{\normalfont{s}}^{j,n}\big)$, and it is possible to show that the conditional mean $\vm{\normalfont{s}}^{j,n}$ and conditional covariance $\mr{\normalfont{s}}^{j,n}$ of the discrete smoother at time step $t_j$ when conditioning up to the $n$-th observation satisfy the following recursive backward difference equations for $n\in\mathbb{N}$ and $j=0,1,\ldots,n-1$ (under the regularity conditions \textbf{\blue{(1)}}--\textbf{\blue{(13)}} in Appendix \ref{sec:app1} and all other assumptions thus far) \cite{andreou2024martingale}:
\begin{align}
    \vm{\normalfont{s}}^{j,n}&=\vm{\nf}^j+\mathbf{E}^j\big(\vm{\normalfont{s}}^{j+1,n}-(\mathbf{I}_{l\times l}+\ml^{\vy,j}\dt)\vm{\nf}^j-\vf^{\vy,j}\dt\big)+\mathbf{F}^j\big(\vx^{j+1}-\vx^{j}-(\ml^{\vx,j}\vm{\nf}^j+\vf^{\vx,j})\dt\big), \label{eq:discretesmoother1} \\
    \mr{\normalfont{s}}^{j,n}&=\mr{f}^j+\mathbf{E}^j\big(\mr{\normalfont{s}}^{j+1,n}(\mathbf{E}^j)^\dagger-(\mathbf{I}_{l\times l}+\ml^{\vy,j}\dt)\mr{f}^j\big)-\mathbf{F}^j\ml^{\vx,j}\mr{f}^j\dt, \label{eq:discretesmoother2}
\end{align}
where $\mathbf{I}_{l\times l}$ is the $l\times l$ identity matrix and the auxiliary matrices appearing in \eqref{eq:discretesmoother1}--\eqref{eq:discretesmoother2} are given up to leading-order $O(\dt)$, since we are using the Euler--Maruyama scheme to temporally discretize the CGNS of equations, as
\begin{align}
    \mathbf{E}^j&:=\mathbf{I}_{l\times l}+\big((\ms^\vy\circ\ms^\vx)^j\big((\ms^\vx\circ\ms^\vx)^{j}\big)^{-1}\mathbf{G}^{\vx,j}-\mathbf{G}^{\vy,j}\big)\dt+O(\dt^2)\in\mathbb{C}^{l\times l}, \label{eq:discretesmootherauxmat1}\\
    \begin{split}
        \mathbf{F}^j&:=-\mr{\nf}^j\Big((\mathbf{K}^j)^{\dagger}+\big((\mathbf{G}^{\vx,j})^\dagger\mathbf{K}^j\mr{\nf}^j(\mathbf{K}^j)^{\dagger}-(\mr{\nf}^j)^{-1}(\mathbf{H}^{j})^\dagger\mr{\nf}^j(\mathbf{K}^j)^\dagger+(\ml^{\vy,j})^{\dagger}(\mathbf{K}^{j})^\dagger\big)\dt\\
        &\hspace{1.5cm} -(\ml^{\vx,j})^\dagger\big(\big((\ms^\vx\circ\ms^\vx)^{j}\big)^{-1}+\mathbf{K}^j\mr{\nf}^j(\mathbf{K}^j)^\dagger\dt\big)\Big)+O(\dt^2)\in\mathbb{C}^{l\times k},
    \end{split} \label{eq:discretesmootherauxmat2}
\end{align}
where
\begin{equation}
\begin{gathered}
    \mathbf{G}^{\vx,j}:=\ml^{\vx,j}+(\ms^\vx\circ\ms^\vy)^j(\mr{\nf}^j)^{-1}\in\mathbb{C}^{k\times l}, \quad \mathbf{G}^{\vy,j}:=\ml^{\vy,j}+(\ms^\vy\circ\ms^\vy)^j(\mr{\nf}^j)^{-1}\in\mathbb{C}^{l\times l}, \\
    \mathbf{H}^j:=(\mr{\nf})^{-1}\big(\ml^{\vy,j}\mr{f}^j+\mr{f}^j(\ml^{\vy,j})^{\dagger}+(\ms^\vy\circ\ms^\vy)^j\big)\in\mathbb{C}^{l\times l}, \quad
    \mathbf{K}^j:=\big((\ms^\vx\circ\ms^\vx)^{j}\big)^{-1}\mathbf{G}^{\vx,j}\in\mathbb{C}^{k\times l}.
\end{gathered} \label{eq:auxiliarymatrices}
\end{equation}
We also note here that in the absence of noise cross-interaction, i.e., $\ms^\vx\circ\ms^\vy\equiv \mathbf{0}_{k\times l}$, \eqref{eq:discretesmootherauxmat1} and \eqref{eq:discretesmootherauxmat2} simplify significantly, where up to leading-order $O(\dt)$ their expressions reduce down to \cite{andreou2024martingale}:
\begin{align}
    \mathbf{E}^j:=\mathbf{I}_{l\times l}-\mathbf{G}^{\vy,j}\dt+O(\dt^2),\quad 
    \mathbf{F}^j:= \mathbf{G}^{\vy,j}\mr{f}^j(\ml^{\vx,j})^{\dagger}\big((\ms^\vx\circ\ms^\vx)^{j}\big)^{-1}\dt+O(\dt^2). \label{eq:simplediscretesmootherauxmat}
\end{align}

Using \eqref{eq:discretesmoother1}--\eqref{eq:discretesmoother2}, the following theorem outlines the procedure for obtaining an optimal online forward-in-time discrete smoother state estimate for the posterior Gaussian statistics. The proof of this result is provided in Appendix \ref{sec:app2}.

\begin{thm}[\textbf{Optimal Online Forward-In-Time Discrete Smoother}] \label{thm:onlinesmoother}
Let $\big(\vxt,\vyt\big)$ satisfy \eqref{eq:condgauss1}--\eqref{eq:condgauss2} and assume the validity of the assumptions in Theorem \ref{thm:smoothing} and all regularity conditions outlined in Appendix \ref{sec:app1}, \textbf{\blue{(1)}}--\textbf{\blue{(13)}}. Suppose now the observational data for the observed process, $\vx^0,\vx^1,\vx^2,\ldots$, are given sequentially. When a new observation, denoted by $\vx^n$ for $n\in\mathbb{N}$, becomes available, it is utilized to update all the existing optimal state estimates at time instants $t_j$ for $0\leq j\leq n-1$ and it then provides a new state estimate at $j=n$. The discrete smoother posterior distribution $\pp\big(\vy^j\big|\vx^s, 0\leq s\leq n\big)$ is conditionally Gaussian,
\begin{equation} \label{eq:onlinegaussianity}
    \pp\big(\vy^j\big|\vx^s, 0\leq s\leq n\big) \overset{\d}{\sim}\mathcal{N}_l\big(\vm{s}^{j,n},\mr{s}^{j,n}\big),
\end{equation}
and the conditional mean $\vm{s}^{j,n}$ and conditional covariance $\mr{s}^{j,n}$ for $n\in\mathbb{N}$ and $0\leq j \leq n-1$ satisfy the following recursive backward difference equations:
\begin{align}
    \vm{s}^{j,n}&=\vm{s}^{j,n-1}+\mathbf{D}^{j,n-2}\big(\vm{s}^{n-1,n}-\vm{f}^{n-1}\big), \label{eq:onlinerecursive1}\\
    \mr{s}^{j,n}&=\mr{s}^{j,n-1}+\mathbf{D}^{j,n-2}\big(\mr{s}^{n-1,n}-\mr{f}^{n-1}\big)(\mathbf{D}^{j,n-2})^\dagger,\label{eq:onlinerecursive2}
\end{align}
where the update matrix $\mathbf{D}^{j,n-1}$, or $\mathbf{D}^{j,n-2}$ after the trivial reindexing $n-1\leadsto n-2$ in the following (without loss of generality), is defined in a forward-in-time fashion as
\begin{alignat}{3}
    &\mathbf{D}^{n,n-1}&&:=\mathbf{I}_{l\times l}, \hspace{1cm} && \text{ for } n\in\mathbb{N}, \nonumber \\
    &\mathbf{D}^{n-1,n-1}&&:=\mathbf{E}^{n-1}, \hspace{1cm} && \text{ for } n\in\mathbb{N}, \label{eq:updatematrix1}\\
    &\mathbf{D}^{j,n-1}&&:=\mathbf{D}^{j,n-2}\mathbf{E}^{n-1}, \hspace{1cm} && \text{ for } n\in\mathbb{N}_{\geq 2}, \text{ and } j=0,1,\ldots,n-2, \nonumber
\end{alignat}
where $\mathbf{E}^j$ is given up to leading-order in \eqref{eq:discretesmootherauxmat1}. For $n\in\mathbb{N}$ we have
\begin{align}
    \vm{s}^{n-1,n}&=\mathbf{E}^{n-1}\vm{f}^{n}+\mathbf{b}^{n-1}, \label{eq:onlineauxiliary1}\\
    \mr{s}^{n-1,n}&=\mathbf{E}^{n-1}\mr{f}^{n}(\mathbf{E}^{n-1})^\dagger+\mathbf{P}^{n-1}_n, \label{eq:onlineauxiliary2}
\end{align}
with the $\mathbf{b}^{n-1}$ and $\mathbf{P}^{n-1}_n$ auxiliary residual terms being defined by
\begin{align}
    \begin{split}
    \mathbf{b}^{n-1}&:=\vm{f}^{n-1}-\mathbf{E}^{n-1}\big((\mathbf{I}_{l\times l}+\ml^{\vy,n-1}\dt)\vm{\nf}^{n-1}+\vf^{\vy,n-1}\dt\big)\\
    &\hspace{1.5cm}+\mathbf{F}^{n-1}\big(\vx^{n}-\vx^{n-1}-(\ml^{\vx,n-1}\vm{\nf}^{n-1}+\vf^{\vx,n-1})\dt\big),
    \end{split} \label{eq:onlineauxiliary3}\\
    \mathbf{P}^{n-1}_{n}&:= \mr{f}^{n-1}-\mathbf{E}^{n-1}(\mathbf{I}_{l\times l}+\ml^{\vy,n-1}\dt)\mr{f}^{n-1}-\mathbf{F}^{n-1}\ml^{\vx,n-1}\mr{f}^{n-1}\dt, \label{eq:onlineauxiliary4}
\end{align}
for $\mathbf{E}^n$ and $\mathbf{F}^n$ given up to leading-order in \eqref{eq:discretesmootherauxmat1} and \eqref{eq:discretesmootherauxmat2}, respectively. For $j=n$, we have by definition that $\vm{s}^{n,n}=\vm{f}^n$ and $\mr{s}^{n,n}=\mr{f}^n$, since the smoother and filter posterior Gaussian statistics coincide at the end point per Theorem \ref{thm:smoothing}.
\end{thm}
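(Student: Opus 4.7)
The plan is to derive the online recursion directly from the offline backward recursion \eqref{eq:discretesmoother1}--\eqref{eq:discretesmoother2}, exploiting one crucial structural observation: the auxiliary matrices $\mathbf{E}^j,\mathbf{F}^j$ and the filter statistics $\vm{\nf}^j,\mr{\nf}^j$ appearing on the right-hand side depend only on the index $j$ and on observations up to time $t_j$, not on the terminal index $n$. Consequently, if one writes \eqref{eq:discretesmoother1}--\eqref{eq:discretesmoother2} twice at the same index $j$, once with terminal index $n$ and once with $n-1$, and subtracts, every non-smoother term cancels identically and one is left with the clean linear propagations
\begin{equation*}
    \vm{\ns}^{j,n}-\vm{\ns}^{j,n-1}=\mathbf{E}^j\bigl(\vm{\ns}^{j+1,n}-\vm{\ns}^{j+1,n-1}\bigr), \qquad \mr{\ns}^{j,n}-\mr{\ns}^{j,n-1}=\mathbf{E}^j\bigl(\mr{\ns}^{j+1,n}-\mr{\ns}^{j+1,n-1}\bigr)(\mathbf{E}^j)^\dagger.
\end{equation*}

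From here I would iterate (telescope) these identities forward from index $j$ up to $n-1$. The mean difference accumulates the left-multiplied product $\mathbf{E}^j\mathbf{E}^{j+1}\cdots\mathbf{E}^{n-2}$, which by the forward-in-time definition \eqref{eq:updatematrix1} is precisely $\mathbf{D}^{j,n-2}$; the covariance difference accumulates the same product on the left and its Hermitian transpose on the right. At the top of the telescope one encounters $\vm{\ns}^{n-1,n-1}$ and $\mr{\ns}^{n-1,n-1}$, which collapse to $\vm{\nf}^{n-1}$ and $\mr{\nf}^{n-1}$ by the terminal condition of the smoother noted at the end of Theorem \ref{thm:smoothing}. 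Substituting yields exactly \eqref{eq:onlinerecursive1}--\eqref{eq:onlinerecursive2}. To obtain the ``driver'' \eqref{eq:onlineauxiliary1}--\eqref{eq:onlineauxiliary2} that feeds the online update, I would set $j=n-1$ in the offline recursion \eqref{eq:discretesmoother1}--\eqref{eq:discretesmoother2}, apply the other terminal identity $\vm{\ns}^{n,n}=\vm{\nf}^n$, $\mr{\ns}^{n,n}=\mr{\nf}^n$, and group terms: the coefficients of $\vm{\nf}^n$ and $\mr{\nf}^n$ are $\mathbf{E}^{n-1}$ and $\mathbf{E}^{n-1}(\,\cdot\,)(\mathbf{E}^{n-1})^\dagger$ respectively, and the remaining pieces match the residual definitions $\mathbf{b}^{n-1}$ in \eqref{eq:onlineauxiliary3} and $\mathbf{P}^{n-1}_n$ in \eqref{eq:onlineauxiliary4} verbatim. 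The conditional Gaussianity statement \eqref{eq:onlinegaussianity} is independent of the update formulae and is simply the discrete analogue of Theorem \ref{thm:condgaussianity} invoked at the terminal index $n$.

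The proof is thus a telescoping linear-algebra computation with no new probabilistic input beyond what Theorems \ref{thm:condgaussianity} and \ref{thm:smoothing} already supply. The main obstacle I expect is purely bookkeeping of the double index: distinguishing $\mathbf{D}^{j,n-2}$ from $\mathbf{D}^{j,n-1}$, the reindexing $n-1\leadsto n-2$ noted in the statement, and the symmetry of the covariance recursion. In particular, the right-hand factor in \eqref{eq:onlinerecursive2} deserves careful verification against the telescope, since left and right multiplication accumulate through the \emph{same} chain $\mathbf{E}^j\cdots\mathbf{E}^{n-2}$ rather than through two chains differing by one factor. This is the one place where an off-by-one could silently slip in; everything else is mechanical substitution.
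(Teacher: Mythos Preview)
Your approach is correct and is in fact a cleaner route than the paper's. The paper first unrolls the backward recursion \eqref{eq:discretesmoother1}--\eqref{eq:discretesmoother2} completely to obtain explicit closed-form sums
\[
    \vm{\ns}^{j,n}=\mathbf{D}^{j,n-1}\vm{\nf}^n+\mathbf{b}^j+\sum_{r=j+1}^{n-1}\mathbf{D}^{j,r-1}\mathbf{b}^r,\qquad
    \mr{\ns}^{j,n}=\mathbf{D}^{j,n-1}\mr{\nf}^n(\mathbf{D}^{j,n-1})^\dagger+\mathbf{P}^j_{j+1}+\sum_{r=j+1}^{n-1}\mathbf{D}^{j,r-1}\mathbf{P}^r_{r+1}(\mathbf{D}^{j,r-1})^\dagger,
\]
and only then subtracts the $n$ and $n-1$ versions, at which point all but the last summand cancel. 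You instead subtract first at the level of the one-step recursion and telescope the resulting homogeneous difference relation, which bypasses the explicit sums entirely and keeps the bookkeeping lighter. The paper's detour does buy something: the unrolled formulas above are of independent interest (they are used in the discussion of $\mathbf{D}^{j,n-2}$ in Section~\ref{sec:3.2} and Algorithm~\ref{algo:onlinesmoother}), whereas your argument produces only the increment. Your closing remark about the right-hand factor in \eqref{eq:onlinerecursive2} is well placed: the telescope indeed yields $(\mathbf{D}^{j,n-2})^\dagger$ on both sides, consistent with the paper's own proof in Appendix~\ref{sec:app2}, so the $(\mathbf{D}^{j,n-1})^\dagger$ printed in the statement is a typo.
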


Based on this theorem, the algorithm associated with the online forward-in-time discrete smoother update equations \eqref{eq:onlinerecursive1}--\eqref{eq:onlinerecursive2} is outlined in Algorithm \ref{algo:onlinesmoother}. There, we use a compact expression for the update matrix $\mathbf{D}^{j,n-2}$,
\begin{equation} \label{eq:updatematrixcompact}
    \mathbf{D}^{j,n-2}=\overset{\mathlarger{\curvearrowright}}{\prod^{n-2}_{i=j}} \mathbf{E}^i=\mathbf{E}^j\mathbf{E}^{j+1}\cdots \mathbf{E}^{n-2},
\end{equation}
which is equivalent to \eqref{eq:updatematrix1}, with the details of this equivalence being shown in Appendix \ref{sec:app2}. (The curved arrow pointing to the right above the product symbol in \eqref{eq:updatematrixcompact} indicates the order or direction with which we expand said product.)

\begin{algorithm}
\caption{\textbf{Optimal Online Forward-In-Time Discrete Smoother}} \label{algo:onlinesmoother}
\KwData{$\vx^0$,  $\vm{f}^0=\vm{f}(0)$, $\mr{f}^0=\mr{f}(0)$, $\dt$}
\KwResult{Discrete Smoother Gaussian Statistics $\left\{\vm{s}^{j,n}\right\}_{0\leq j\leq n}$ and $\left\{\mr{s}^{j,n}\right\}_{0\leq j\leq n}$ for $n\in\mathbb{N}$}
\For{$n\in\mathbb{N}$}{
    Receive new observation $\vx^n$ after time $\dt$\;
    Compute $\vm{s}^{n,n}=\vm{f}^n$ through \eqref{eq:filter1}: \\ \hspace{0.2cm}$\vm{f}^n \gets \vm{f}^{n-1}+(\ml^{\vy,n-1}\vm{f}^{n-1}+\vf^{\vy,n-1})\dt+(\mr{f}^{n-1}(\ml^{\vx,n-1})^{\dagger}+(\ms^\vy\circ\ms^\vx)^{n-1})$ \\
    \hspace{1.2cm}$\times\big((\ms^\vx\circ \ms^\vx)^{n-1}\big)^{-1}\big(\vx^{n}-\vx^{n-1}-(\ml^{\vx,n-1}\vm{f}^{n-1}+\vf^{\vx,n-1})\dt\big)$\;
    Compute $\mr{s}^{n,n}=\mr{f}^n$ through \eqref{eq:filter2}: \\ \hspace{0.2cm}$\mr{f}^n \gets \mr{f}^{n-1}+\big(\ml^{\vy,n-1}\mr{f}^{n-1}+\mr{f}^{n-1}(\ml^{\vy,n-1})^{\dagger}+(\ms^\vy\circ\ms^\vy)^{n-1}$ \\
    \hspace{1.3cm}$-(\mr{f}^{n-1}(\ml^{\vx,n-1})^{\dagger}+(\ms^\vy\circ\ms^\vx)^{n-1})\big((\ms^\vx\circ\ms^\vx)^{n-1}\big)^{-1}(\ml^{\vx,n-1}\mr{f}^{n-1}+(\ms^\vx\circ\ms^\vy)^{n-1})\big)\dt$\;
    Compute $\mathbf{E}^{n-1}$, $\mathbf{b}^{n-1}$, and $\mathbf{P}^{n-1}_n$ through \eqref{eq:discretesmootherauxmat1}, \eqref{eq:onlineauxiliary3}, and \eqref{eq:onlineauxiliary4}, respectively: \\
    \hspace{0.2cm}$\mathbf{E}^{n-1}\gets \mathbf{I}_{l\times l}+\big((\ms^\vy\circ\ms^\vx)^{n-1}\big((\ms^\vx\circ\ms^\vx)^{n-1}\big)^{-1}\mathbf{G}^{\vx,n-1}-\mathbf{G}^{\vy,n-1}\big)\dt$\;
    \hspace{0.2cm}$\mathbf{b}^{n-1}\gets \vm{f}^{n-1}-\mathbf{E}^{n-1}\big((\mathbf{I}_{l\times l}+\ml^{\vy,n-1}\dt)\vm{\nf}^{n-1}+\vf^{\vy,n-1}\dt\big)$ \\ \hspace{1.6cm}$+\mathbf{F}^{n-1}\big(\vx^{n}-\vx^{n-1}-(\ml^{\vx,n-1}\vm{\nf}^{n-1}+\vf^{\vx,n-1})\dt\big)$\;
    \hspace{0.2cm}$\mathbf{P}_n^{n-1}\gets \mr{f}^{n-1}-\mathbf{E}^{n-1}(\mathbf{I}_{l\times l}+\ml^{\vy,n-1}\dt)\mr{f}^{n-1}-\mathbf{F}^{n-1}\ml^{\vx,n-1}\mr{f}^{n-1}\dt$\;

    Compute $\vm{s}^{n-1,n}$ and $\mr{s}^{n-1,n}$ through \eqref{eq:onlineauxiliary1} and \eqref{eq:onlineauxiliary2}, respectively: \\
    \hspace{0.2cm}$ \vm{s}^{n-1,n}=\mathbf{E}^{n-1}\vm{f}^{n}+\mathbf{b}^{n-1}$\;
    \hspace{0.2cm}$ \mr{s}^{n-1,n}=\mathbf{E}^{n-1}\mr{f}^{n}(\mathbf{E}^{n-1})^\dagger+\mathbf{P}^{n-1}_n$\;

    \For{$j=0:n-1$}{
        $\displaystyle\mathbf{D}^{j,n-2}\gets\overset{\mathlarger{\curvearrowright}}{\prod^{n-2}_{i=j}} \mathbf{E}^i$\;
        $\vm{s}^{j,n} \gets \vm{s}^{j,n-1}+\mathbf{D}^{j,n-2}\big(\vm{s}^{n-1,n}-\vm{f}^{n-1}\big)$\;
        $\mr{s}^{j,n}=\mr{s}^{j,n-1}+\mathbf{D}^{j,n-2}\big(\mr{s}^{n-1,n}-\mr{f}^{n-1}\big)(\mathbf{D}^{j,n-2})^\dagger$\;
}
}
\end{algorithm}

\subsection{Understanding How the Online Forward-In-Time Discrete Smoother Works} \label{sec:3.2}

Figure \ref{fig:onlinesmoother} presents a schematic diagram that intuitively explains how the online smoother operates using forward filtering and backward smoothing estimates. Although the figure focuses on the online discrete smoother mean for brevity, the same logic (as well as subsequent analysis) applies to the smoother covariance. In the diagram, the first superscript index corresponds to rows, while the second corresponds to columns. The discussion centers on the last red dashed box (last column), representing the discrete smoother estimates at the current observation (the $n$-th observation) over all time instants $j=0,\ldots,n$. Each column (fixed second index) illustrates the backward smoothing process for a given batch realization $\{\vx^0,\ldots,\vx^n\}$, involving a forward pass to compute the filter estimate (Theorem \ref{thm:filtering}) and a backward pass to calculate the smoother estimate (Theorem \ref{thm:smoothing}), indicated by the red arrows. Each row (fixed first index) shows how the discrete online smoother estimates at a fixed time $t_j$ are updated as new observations arrive sequentially, using the online forward-in-time smoother algorithm. Information flows into $\vm{s}^{j,n}$ from the filter estimates $\vm{f}^{j}$ and $\vm{f}^{n}$. The former, $\vm{f}^{j}$, influences it through forward online-smoothing in its row, transitioning towards $\vm{s}^{j,n-1}$ and to the next columns using the online forward-in-time smoother \eqref{eq:onlinerecursive1}. (Through $\vm{s}^{j,n-1}$, the information in $\vm{f}^{n-1}$ is also captured via the classical backward smoother in its column.) The latter, $\vm{f}^{n}$, affects it via the backward smoother \eqref{eq:discretesmoother1} in its column, implicitly through $\vm{s}^{n-1,n}$ in \eqref{eq:onlinerecursive1}. These deductions, stemming from Figure \ref{fig:onlinesmoother}, are explicitly and mathematically expressed by the fact that the amount of updated information incurred at time step $t_j$, due to the new observation $\vx^n$, is proportional to the update at $t_{n-1}$ via the matrix $\mathbf{D}^{j,n-2}$, since
\begin{equation*}
    \vm{s}^{j,n}-\vm{s}^{j,n-1}=\mathbf{D}^{j,n-2}\big(\vm{s}^{n-1,n}-\vm{f}^{n-1}\big) \quad\text{and}\quad \vm{f}^{n-1}=\vm{s}^{n-1,n-1}.
\end{equation*}
Essentially, $\vm{s}^{n-1,n}-\vm{f}^{n-1}=\vm{s}^{n-1,n}-\vm{s}^{n-1,n-1}$ amounts to the innovation in the mean stemming from the new observational data, which is then being weighted by the optimal gain tensor $\mathbf{D}^{j,n-2}$.

\begin{figure}[ht]
\begin{center}
\includegraphics[width=0.7\textwidth]{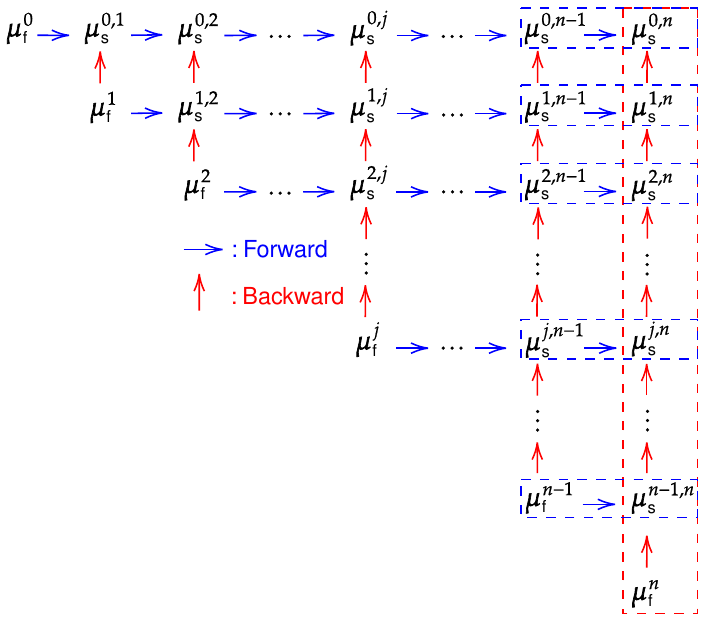}
\end{center}
\caption{Schematic diagram of how the online discrete smoother update works. Only the smoother mean $\vm{s}^{\cdot,\cdot}$ is depicted; the same diagram applies to the smoother covariance $\mr{s}^{\cdot,\cdot}$ as well, without loss of generality.}
\label{fig:onlinesmoother}
\end{figure}

\subsubsection{How the Online Forward-In-Time Discrete Smoother's Mechanisms Guide the Development of an Adaptive-Lag Algorithm} \label{sec:3.2.1}

It becomes apparent from \eqref{eq:onlinerecursive1}--\eqref{eq:onlinerecursive2} that to understand how the online smoother update works, as well as to figure out how information dissipates through it, we need to study the mathematical properties of the update tensor given in \eqref{eq:updatematrix1}, or compactly in \eqref{eq:updatematrixcompact}, which in turn implies the thorough study of the auxiliary operator
\begin{equation*}
    \mathbf{E}^{j}=\mathbf{I}_{l\times l}-\big(\mathbf{G}^{\vy,j}-(\ms^\vy\circ\ms^\vx)^{j}\big((\ms^\vx\circ\ms^\vx)^{j}\big)^{-1}\mathbf{G}^{\vx,j}\big)\dt+O(\dt^2),
\end{equation*}
for $j=0,1,\ldots,n-2$. Such an analysis will also assist in the formulation of an adaptive lag variant for the online discrete smoother, that implicitly optimizes the use of data for the reduction of the computational overhead. 

Based on the definitions of \eqref{eq:auxiliarymata}--\eqref{eq:auxiliarymatb}, it is easy to see that
\begin{equation*}
    \mathbf{E}^{j}=\mathbf{I}_{l\times l}-\big(\ma^j+\mb^j(\mr{f}^j)^{-1}\big)\dt+O(\dt^2)=\mathbf{I}_{l\times l}-\big(\ma^j\mr{f}^j+\mb^j\big)(\mr{f}^j)^{-1}\dt+O(\dt^2).
\end{equation*}
As such, it is immediate by this form of $\mathbf{E}^j$, that if $\dt$ is sufficiently small and
\begin{equation} \label{eq:ejspectralradiuscondition}
    \mathbf{0}_{l\times l} \prec  \mathbf{G}^{\vy,j}-(\ms^\vy\circ\ms^\vx)^j\big((\ms^\vx\circ\ms^\vx)^{j}\big)^{-1}\mathbf{G}^{\vx,j}=\ma^j+\mb^j(\mr{f}^j)^{-1},
\end{equation}
where ``$\prec$" is to be understood in the Loewner partial ordering sense over the convex cone of nonnegative-definite matrices, then by definition the spectral radius of $\mathbf{E}^j$ is less than $1$ uniformly over $j$:
\begin{equation} \label{eq:ejspectralradius}
    \varrho(\mathbf{E}^j)<1, \ \forall j=0,1,\ldots,n-1.
\end{equation}
Notably, \eqref{eq:ejspectralradiuscondition} is a necessary condition to establish mean-square stability of the smoother Gaussian statistics \cite{khasminskii2012stochastic, neckel2013random, lu2021mathematical, arnold2014random}, as can be clearly seen by the damping in the backward random differential equation that the smoother covariance satisfies in \eqref{eq:revbackinter2} of Theorem \ref{thm:smoothing}, with its discrete counterpart given in \eqref{eq:discretesmoother2}, and where the same damping coefficient also appears in the smoother mean's linear evolution equation, since after a few algebraic manipulations we have
\begin{equation*}
    \smooth{\d \vm{\ns}}=\big(-(\ma+\mb\mr{\nf}^{-1})\vm{\ns}+\vf^\vy-\mb\mr{\nf}^{-1}\vm{\nf}\big)\d t+(\ms^\vy\circ \ms^\vx)(\ms^\vx\circ \ms^\vx)^{-1}\big( \smooth{\d\vx}+\vf^\vx\d t \big),
\end{equation*}
with its discrete counterpart given in \eqref{eq:discretesmoother1}. Probing into \eqref{eq:ejspectralradius}, note that the filter covariance matrix, $\mr{f}^j$, reflects uncertainties in both the unobserved and observed dynamics, as it is conditioned on the current observable $\sigma$-algebra. If the observed variables have much less uncertainty than the unobserved process, the eigenvalues of $(\ms^\vy\circ\ms^\vy)^j$ will be much larger than those of $\mr{f}^j$, as suggested by discretizing the random Riccati equation in \eqref{eq:filter2}, since the quadratic term is weighted by $\mc:=(\ml^\vx)^\dagger(\ms^\vx\circ\ms^\vx)^{-1}\ml^\vx$ and then subtracted from the nonnegative-definite stochastic forcing $\ms^\vy\circ\ms^\vy$ (in the update of the forecast uncertainty). Consequently, by this reasoning and \eqref{eq:discretesmootherauxmat1}, the spectrum of $\mathbf{E}^j$ should concentrate near the origin and within the unit disk in the complex plane, thus satisfying \eqref{eq:ejspectralradius}. Recalling that the control on the spectrum of $\mathbf{E}^j$ is determined by the exponential mean-square stability of the smoother posterior statistics, in this scenario observations become highly informative; they provide sufficient information to update the estimated state over a short time interval, with an exponentially-fast decaying impact at distant past time instants. 

This is further reflected in the spectral radius of the update matrix, $\mathbf{D}^{j,n-2}$, which explicitly determines the influence of the observations on the discrete smoother updates during the recursive backward difference equations in \eqref{eq:onlinerecursive1}--\eqref{eq:onlinerecursive2}. Numerical examples, such as the nonlinear dyad-interaction model in Section \ref{sec:4.1}, show that even with intermittent instabilities and noise cross-interactions it exhibits an overall exponential decay backwards in time (i.e., in $j$ for each $n$), with an accelerating rate under the observance of extreme events in the observed time series. Therefore, any potent adaptive-lag strategy accompanying the forward-in-time online smoother in \eqref{eq:onlinerecursive1}--\eqref{eq:onlinerecursive2} should: 
\begin{itemize}
    \item[(a)] Yield larger lag values during the generation of extreme events, where observations significantly inform and contribute to the smoother update, thus quickly resolving the intermittent instabilities. During this observational period (i.e., for these $n$), the spectral radius of $\mathbf{D}^{j,n-2}$ remains significant (close to $1$) for time instants (i.e., for $j$) which cover the initiation period of the extreme event.

    \item[(b)] Produce smaller lags at periods of large signal-to-noise ratios in the observed variables, where the prior state estimates from the preceding observations are sufficiently skillful (e.g., during the demise of extreme events). During this observational period (i.e., for these $n$), the spectral radius of $\mathbf{D}^{j,n-2}$ showcases remarkable rapid exponential decay at time instants (i.e., for $j$) prior to and during the extreme event.
\end{itemize}
See Panels (g)--(i) of Figure \ref{fig:Dyad_Interaction_Fig_2} and Panels (a)--(c) of Figure \ref{fig:LDA_Ice_Floes_Fig_3} for numerical examples.

In general, except in trivial cases where the unobserved state space is one-dimensional (e.g., see Section \ref{sec:3.2.2}), condition \eqref{eq:ejspectralradius} on the constituents of the product defining the update matrix $\mathbf{D}^{j,n-2}$ in \eqref{eq:updatematrixcompact} does not guarantee that the update tensor $\mathbf{D}^{j,n-2}$ itself will also have a subunit spectral radius for $j$ sufficiently far back from $n-1$. (Such a condition is necessary for establishing rigorous mathematical results of convergence for \eqref{eq:onlinerecursive1}--\eqref{eq:onlinerecursive2}, and reflects the decaying impact region of $\vx^n$ on $\vy^j$'s online smoother state estimation.) This is because, generally, the spectral radius of a product is not dominated by the product of the associated spectral radii unless rather restrictive conditions are assumed, with sufficient conditions including simultaneous triangularization of the factors over $\mathbb{C}$ \cite{radjavi2000simultaneous} (e.g., the factors commute, as a consequence of Gelfand's theorem) or the factors being radial matrices \cite{goldberg1974matrices}. Consequently, predicting how each observation affects state estimation through the online smoother updates can be complex and highly dependent on the specific structure of the CGNS. This variability underscores the need for a simple numerical procedure to assess the impact region of each new observation, as described by the recursive backward difference equations for the discrete smoother posterior Gaussian statistics in \eqref{eq:onlinerecursive1}--\eqref{eq:onlinerecursive2}, which in turn aids in constructing an effective adaptive lag strategy for the online smoother. Nevertheless, even in the ideal setting where control on the spectrum of $\mathbf{D}^{j,n-2}$ can be established, an online smoother which defines its adaptive lag based on the eigenvalues of the update matrix incurs computational costs in the order of $O(l^3)$, which for high-dimensional systems is inexcusable. This is why, in Section \ref{sec:3.4}, we develop an information theory-based adaptive-lag approach to the online discrete smoother which efficiently approximates the observational impact region, while implicitly and effectively capturing the essence of the exact method defined by the spectral properties of the update matrix $\mathbf{D}^{j,n-2}$.

As the system dimension and $n$ increase, the computational overhead for updating the online smoother becomes significant, particularly due to storing the auxiliary matrices needed for calculating the update matrix $\mathbf{D}^{j,n-2}$, specifically the $\mathbf{E}^j$ matrices, as well as the filter and online smoother covariance matrices. Additionally, constant recalculation of the covariance tensor can lead to $O(l^3)$ computations per update, especially when $k\approx l$. Given that the impact of new observations on past states decays exponentially (linked to the aforementioned spectral properties of $\mathbf{D}^{j,n-2}$ and $\mathbf{E}^j$), it is advantageous to develop an adaptive-lag (or fixed-lag) online smoother \cite{kitagawa1996monte, cappe2010inference, olsson2008sequential}, where the impact region of or lag at each observation is adaptively computed (or finite and predetermined) based on the dynamical and statistical properties of the CGNS.

\subsubsection{Behavior of the Online Forward-In-Time Discrete Smoother for a Two-Dimensional Linear System with Additive Noise} \label{sec:3.2.2}

To make things more concrete, we show that in the case of a simple two-dimensional linear Gaussian model, the value of $\mathbf{E}^j\equiv E^j$ (which in this case is just a scalar), is necessarily in $(-1,1)$, thus concretely establishing \eqref{eq:ejspectralradius} and by extension, due to the one-dimensional observable and unobserved state spaces, leads to $\mathbf{D}^{j,n-2}\equiv D^{j,n-2}\in(-1,1)$ for this specific example. Consider the following two-dimensional linear Gaussian model with additive noise:
\begin{align*}
    \d x&=\big(\lambda_xy+f_x(t,x)\big)+\sigma_x\d W_1,\\
    \d y&=\big(\lambda_yy+f_y(t,x)\big)+\sigma_y\d W_2,
\end{align*}
where $\lambda_x\in\rr$, $\lambda_y<0$, $\sigma_x>0$, and $\sigma_y>0$ are constants and $f_x(t,x)$ is such that the $x$-dynamics are stable. As before, $x$ and $y$ are the observed and unobserved variables, respectively. Note that we assume the absence of noise cross-interaction, without loss of generality. Here, due to the linearity of the system, $\lambda_y$ needs to be negative in order to guarantee the existence of the statistical equilibrium state of the coupled system when $y$ is assumed to be the unobservable (this also becomes apparent from \eqref{eq:equilibriumcoveqn} later on) \cite{gardiner2009stochastic, khasminskii2012stochastic, liu2019stochastic}. Since $\lambda_x$, $\lambda_y$ and the noise feedbacks are constant, the equilibrium solution for the filter covariance, $R_{\text{f}}^{\text{eq}}$, can thus be solved via the following steady state Riccati equation (see \eqref{eq:filter2}) \cite{chen2014information, chen2015noisy, gardiner2009stochastic}:
\begin{equation} \label{eq:equilibriumcoveqn}
    2\lambda_yR_{\text{f}}^{\text{eq}}+\sigma_y^2=\frac{(R_{\text{f}}^{\text{eq}}\lambda_x)^2}{\sigma_x^2},
\end{equation}
which after solving for $R_{\text{f}}^{\text{eq}}$ gives,
\begin{equation}  \label{eq:equilibriumcov}
    R_{\text{f}}^{\text{eq}}=\frac{\lambda_y\sigma_x^2+\sigma_x\sqrt{\Psi}}{\lambda_x^2},
\end{equation}
where $\Psi=\lambda_y^2\sigma_x^2+\lambda_x^2\sigma_y^2$. Observe now that by plugging-in \eqref{eq:equilibriumcov} into the expression found in \eqref{eq:ejspectralradiuscondition},
\begin{equation*}
    \mathbf{G}^\vy-(\ms^\vy\circ\ms^\vx)(\ms^\vx\circ\ms^\vx)^{-1}\mathbf{G}^\vx\equiv G^y,
\end{equation*}
which is stationary in time and simplifies down to $G^y$ because there is no noise cross-interaction, then we retrieve
\begin{align*}
    G^y&=\lambda_y+\frac{\sigma_y^2}{R_{\text{f}}^{\text{eq}}}=\frac{1}{R_{\text{f}}^{\text{eq}}}\left(\lambda_y\frac{\lambda_y\sigma_x^2+\sigma_x\sqrt{\Psi}}{\lambda_x^2}+\sigma_y^2\right)\\
    &=\frac{1}{R_{\text{f}}^{\text{eq}}}\left(\frac{\lambda_y^2\sigma_x^2+\lambda_y\sigma_x\sqrt{\Psi}+\lambda_x^2\sigma_y^2}{\lambda_x^2}\right)=\frac{\sqrt{\Psi}}{\lambda_x^2R_{\text{f}}^{\text{eq}}} \big(\sqrt{\Psi}+\lambda_y\sigma_x\big)>0,
\end{align*}
where we have used the definition of $\Psi$, the fact that $\sqrt{\Psi}+\lambda_y\sigma_x=\sqrt{\lambda_y^2\sigma_x^2+\lambda_x^2\sigma_y^2}+\lambda_y\sigma_x>0$, and that $R_{\text{f}}^{\text{eq}}>0$. This validates condition \eqref{eq:ejspectralradiuscondition}, for sufficiently small step size $\dt$, and so the auxiliary constant $E^j$ is always within the interval $(-1,1)$, meaning \eqref{eq:ejspectralradius} is satisfied. By extension, since everything is just a scalar, this implies $\mathbf{D}^{j,n-2}\equiv D^{j,n-2}\in(-1,1)$ because of \eqref{eq:updatematrixcompact} (see Appendix \ref{sec:app2}).

\subsection{Fixed-Lag Strategy for the Online Smoother} \label{sec:3.3}
The fixed-lag smoother is the simplest method for conserving computational and storage resources, as it assumes a uniformly predetermined impact region for each new observation. Setting an a-priori impact region incurs no additional computational cost. However, this approach has significant drawbacks for most turbulent and complex dynamical systems: it may assign an unnecessarily long fixed lag, wasting storage with diminishing returns in state estimation, or a short fixed lag that introduces substantial biases in the estimated state. Nevertheless, it serves as a building block for the adaptive-lag online smoother.

The fixed-lag methodology is defined as follows: we assume a predetermined impact region size for each new observation, denoted as lag $L\in\{1,\ldots,n-1\}$ in time steps (the edge cases $L=0$ and $L=n$ are discussed later). All subsequent measurements can be expressed in simulation time units by multiplying by $\dt$. We also assume that $L$ is uniform across $n$, the number of observations. The fixed-lag online smoother Gaussian statistics are then given by the following formulae:
\begin{align}
    \vm{s}^{j,n}\equiv \vm{s}^{j,n}(L)&:=\begin{cases}
        \vm{s}^{j,n-1}, & \text{ for } 0\leq j\leq n-1-L\\
        \vm{s}^{j,n-1}+\mathbf{D}^{j,n-2}\big(\vm{s}^{n-1,n}-\vm{f}^{n-1}\big), &\text{ for } n-L\leq j\leq n-1,
    \end{cases} \label{eq:fixedlag1}\\
    \mr{s}^{j,n}\equiv \mr{s}^{j,n}(L)&:=\begin{cases}
        \mr{s}^{j,n-1}, & \text{ for } 0\leq j\leq n-1-L\\
        \mr{s}^{j,n-1}+\mathbf{D}^{j,n-2}\big(\mr{s}^{n-1,n}-\mr{f}^{n-1}\big)(\mathbf{D}^{j,n-2})^\dagger, &\text{ for } n-L\leq j\leq n-1,
    \end{cases} \label{eq:fixedlag2}
\end{align}
with the details of the online discrete smoother being outlined in Algorithm \ref{algo:onlinesmoother} of Section \ref{sec:3.1}. For $j=n$, as already discussed, we always have $\vm{s}^{n,n}=\vm{f}^n$ and $\mr{s}^{n,n}=\mr{f}^n$. As such, we have for $\forall n\in\mathbb{N}$ and $j=0,\ldots,n-1$, that $\vm{s}^{j,n}=\vm{s}^{j,n}(L)$ and $\mr{s}^{j,n}=\mr{s}^{j,n}(L)$, where this explicitly denotes the dependence on the predetermined fixed lag value of $L$.

We mention here that the edge cases of $L=0$ and $L=n$ actually recover the optimal filter and (offline) smoother state estimations, respectively, under the consideration that the current $n$ observations define a complete time series for the observable variables. It is not hard to see why this is the case. Observe that when $L=0$, then we only work with the first branch of the definitions for the fixed-lag smoother Gaussian statistics in \eqref{eq:fixedlag1} and \eqref{eq:fixedlag2}, and as such with each new observation we just carry over the estimated state from the corresponding time instant $t_j$ at the previous iteration, i.e., from the last observation. As such, since $\vm{s}^{n,n}=\vm{f}^n$ and $\mr{s}^{n,n}=\mr{f}^n$, for every $n\in\mathbb{N}_0$, then it is immediate that when $L=0$, with each new observation we just carry-on with the forward-pass and calculate the optimal filter Gaussian statistics at the new observation and then carry over all the previous state estimates. But since at each point we are calculating just the filter state, then we are essentially doing a forward-pass as the observations come in, and as such we only recover the filter Gaussian statistics. On the other hand, when $L=n$, then we only work with the second branch of the definitions for the fixed-lag smoother Gaussian statistics in \eqref{eq:fixedlag1} and \eqref{eq:fixedlag2}, which means that with each new observation we continue the forward-pass and obtain the new filter estimate at the end-point ($j=n$), but using this new state estimation we do a full backward-pass update and obtain the smoother estimates at each and every time instant. As such, we are essentially doing a full forward- and backward-pass with the arrival of each observation, which means we are recovering with each new observational data the offline smoother posterior Gaussian statistics, fully, for this complete observational period.

\subsection{Online Smoother with an Adaptive Lag Determined Using Information Theory} \label{sec:3.4}

The adaptive-lag method for defining optimal online smoother state estimates seeks to dynamically determine the impact region of each new observation while minimizing storage costs. To achieve a consistent strategy for calculating the required lag for each observation, we leverage fundamental tools from information theory and the conditional Gaussian structure of the CGNS framework. This approach enables the optimally estimated state from the online forward-in-time discrete smoother to more effectively capture intermittency, extreme events, and nonlinear dynamics that influence the transient behavior of complex and turbulent systems. In the following subsections, we first introduce key concepts from information theory, which contribute to the development of the adaptive-lag strategy for the online smoother.

\subsubsection{Information Theory Fundamentals: Relative Entropy and the Signal--Dispersion Decomposition} \label{sec:3.4.1}

Due to the turbulent dynamics, new observational data influence the estimated states through the online smoother update only within a finite time interval, with this impact exhibiting exponential decay over time. Therefore, it is natural to quantify the information gain from applying the optimal online forward-in-time discrete smoother (posterior), with varying lengths of lags, compared to simply carrying forward the previous state estimate (prior). By doing so, the optimal lag can be discovered. We utilize empirical information theory to develop a measure that quantifies this information gain \cite{cover2005elements}, reflecting the additional information in the posterior distribution beyond the prior \cite{majda2012lessons, branicki2012quantifying, giannakis2012information}. In the information-theoretic framework, this is naturally achieved through their respective PDFs, by using a certain type of distance function that measures the statistical discrepancy between them. A general class of such functions is known as $f$-divergences \cite{csiszar2004information, liese2006divergences}.

While the general framework of $f$-divergences is rather versatile, there is a certain choice of $f$ which takes great advantage of the form of the CGNS framework, and especially of the conditional Gaussianity of the posterior distributions. Specifically, by choosing $f(z)=z\log z$, then we recover the so called Kullback--Leibler divergence or relative entropy \cite{kullback1951information, kullback1997information}. In this case, we denote the divergence with $\mathcal{P}$, and is simply given as \cite{majda2010quantifying, kleeman2011information}
\begin{equation} \label{eq:relativeentropy}
    \mathcal{P}(p,q)=\int_{\mathbb{C}^n} p(\mathbf{u})\log\left(\frac{p(\mathbf{u})}{q(\mathbf{u})}\right)\d\mathbf{u},
\end{equation}
where the integration with respect to the state vector is to be understood in the Lebesgue sense\footnote{We implicitly assume in \eqref{eq:relativeentropy} that $p$ and $q$ are the PDFs of two probability distributions, $\pp$ and $\mathbb{Q}$, respectively, with $\pp\ll\mathbb{Q}$ (i.e., $\pp$ is dominated by $\mathbb{Q}$), so the integration is taken over the support of $q$, where both $\pp$ and $\mathbb{Q}$ are dominated by the Lebesgue measure.}. For our framework, we read $\mathcal{P}(p,q)$ as the information gain of $p$ from $q$, as in the information that has been gained by using the true density $p$ in lieu of the approximation $q$. The relative entropy enjoys the lucrative properties of non-negativity and of invariance under general nonlinear changes of state variables \cite{cai2002mathematical}. These properties, among others, allow the relative entropy to be widely utilized in assessing the information gain and quantifying model error, model sensitivity, predictability, and statistical response for stochastic CTNDSs \cite{branicki2014quantifying, kleeman2002measuring, delsole2004predictability, delsole2005predictability, branstator2010two, liu2016predictability, andreou2024statistical}.

The relative entropy ensures that the extreme events related to tail probabilities are not underestimated. The ratio of the two PDFs within the logarithm quantifies the gap in the tail probability, allowing for an unbiased characterization of statistical differences. Many CTNDSs modeled through the CGNS framework generate intermittent, rare, and extreme events. Thus, using a logarithmic generator function effectively resolves tail probability events, aiding in quantifying information gain for the adaptive lag criterion, uncertainty reduction of posterior solutions, and causal inference. In addition, relative entropy benefits from this logarithmic structure, providing a simple, closed-form formula when both distributions are Gaussian. This is ideal for the online smoother update, where both the prior and posterior distributions are Gaussian, and differ only in their statistics. Consequently, this facilitates the efficient calculation of the adaptive lag, thus further reducing computational costs and storage requirements. Specifically, when both $\pp\overset{\d}{\sim}\mathcal{N}_N\big(\vm{p},\mr{p}\big)$ and $\mathbb{Q}\overset{\d}{\sim}\mathcal{N}_N\big(\vm{q},\mr{q}\big)$, where $N=\mathrm{dim}(\mathbf{u})$ is the dimension of the phase space, then the relative entropy adopts an explicit and simple formula \cite{majda2006nonlinear, kleeman2002measuring}:
\begin{equation} \label{eq:signaldispersion}
    \mathcal{P}(p,q)=\frac{1}{2}(\vm{p}-\vm{q})^\dagger\mr{q}^{-1}(\vm{p}-\vm{q})+\frac{1}{2}\left(\mathrm{tr}(\mr{p}\mr{q}^{-1})-N-\log(\mathrm{det}(\mr{p}\mr{q}^{-1}))\right),
\end{equation}
where $\mathrm{tr}(\cdot)$ and $\mathrm{det}(\cdot)$ are the trace and determinant of a matrix, respectively. The first quadratic form term in \eqref{eq:signaldispersion} is called the signal and measures the information gain in the mean weighted by the model or approximation covariance, whereas the second term is called the dispersion and involves only the covariance ratio $\mr{p}\mr{q}^{-1}$. This is why the expression in \eqref{eq:signaldispersion} is known as the signal--dispersion decomposition of the relative entropy for Gaussian variables.

\subsubsection{Development of the Adaptive-Lag Strategy for the Online Smoother} \label{sec:3.4.2}

Here we outline the information-theoretic approach for defining the adaptive lag at the acquirement of a new observation. Let us assume that we have currently observed the $n$-th measurement, $\vx^n$, which is used to update all the existing online smoother Gaussian statistics at time instants $0\leq j\leq n-1$, and then provide a new state estimate at $j=n$. As shown in Theorem \ref{thm:onlinesmoother}, that update happens via \eqref{eq:onlinerecursive1}--\eqref{eq:onlinerecursive2}. We would like to define an adaptive lag at the $n$-observation, $L_n\in\{1,\ldots,n-1\}$, such that only ``the most informative or impactful" of these updates are actually carried out, in other words,
\begin{align}
    \vm{s}^{j,n}\equiv \vm{s}^{j,n}(L_n)&:=\begin{cases}
        \vm{s}^{j,n-1}, & \text{ for } 0\leq j\leq n-1-L_n\\
        \vm{s}^{j,n-1}+\mathbf{D}^{j,n-2}\big(\vm{s}^{n-1,n}-\vm{f}^{n-1}\big), &\text{ for } n-L_n\leq j\leq n-1,
    \end{cases} \label{eq:adaptivelagmean}\\
    \mr{s}^{j,n}\equiv \mr{s}^{j,n}(L_n)&:=\begin{cases}
        \mr{s}^{j,n-1}, & \text{ for } 0\leq j\leq n-1-L_n\\
        \mr{s}^{j,n-1}+\mathbf{D}^{j,n-2}\big(\mr{s}^{n-1,n}-\mr{f}^{n-1}\big)(\mathbf{D}^{j,n-2})^\dagger, &\text{ for } n-L_n\leq j\leq n-1,
    \end{cases} \label{eq:adaptivelagcov}
\end{align}
with the details of the online discrete smoother being outlined in Algorithm \ref{algo:onlinesmoother} of Section \ref{sec:3.1}. For $j=n$, as already discussed, we always have $\vm{s}^{n,n}=\vm{f}^n$ and $\mr{s}^{n,n}=\mr{f}^n$.

To define the adaptive lag $L_n$ using the information theory principles from Section \ref{sec:3.4.1}, we quantify the information gain from updating the state estimate at the $j$-th time step with the $n$-th observation, compared to using the estimate from the previous $(n-1)$-st observation.
In the CGNS framework, two possible Gaussian distributions arise for the online smoother state estimation. The first, $p^{j,n}_{\text{updated}}$, is the optimal posterior distribution obtained through the online forward-in-time smoother, discussed in Section \ref{sec:3.1}. The second, $p^{j,n}_{\text{lagged}}$, is the prior distribution, where no update is made and the previous state estimate at time $t_j$ is carried over instead. The posterior or the ``updated'' distribution has Gaussian statistics given by
\begin{equation*}
    \vm{s}^{j,n-1}+\mathbf{D}^{j,n-2}\big(\vm{s}^{n-1,n}-\vm{f}^{n-1}\big) \quad\text{and}\quad
    \mr{s}^{j,n-1}+\mathbf{D}^{j,n-2}\big(\mr{s}^{n-1,n}-\mr{f}^{n-1}\big)(\mathbf{D}^{j,n-2})^\dagger,
\end{equation*}
for its posterior mean and covariance tensor, respectively, as indicated in \eqref{eq:onlinerecursive1}--\eqref{eq:onlinerecursive2}. On the other hand, the prior or ``lagged" distribution has the corresponding Gaussian statistics of
\begin{equation*}
    \vm{s}^{j,n-1} \quad\text{and}\quad
    \mr{s}^{j,n-1},
\end{equation*}
i.e., no update (or innovation, under the Kalman gain terminology) is being made or procured due to the newly acquired observation. This is precisely described by the first branch in \eqref{eq:adaptivelagmean} and \eqref{eq:adaptivelagcov}.

With the posterior or ``updated" conditional Gaussian distribution denoting the true distribution, while the prior or ``lagged" conditional Gaussian distribution denotes the approximation due to the adaptive lag (by not carrying out the online smoother update), we can then naturally take advantage of the signal--dispersion decomposition formula for the relative entropy given in \eqref{eq:signaldispersion}. As such, we have that the information gain or uncertainty reduction incurred by carrying out the update through the online smoother at time $t_j$ by using the newly obtained $n$-th observation, is equal to
\begin{align}
\begin{split}
    \mathcal{P}\left(p^{j,n}_{\text{updated}}, p^{j,n}_{\text{lagged}}\right)=&\ \frac{1}{2}\big(\vm{s}^{n-1,n}-\vm{f}^{n-1}\big)^\dagger(\mathbf{D}^{j,n-2})^\dagger(\mr{s}^{j,n-1})^{-1}\mathbf{D}^{j,n-2}\big(\vm{s}^{n-1,n}-\vm{f}^{n-1}\big)\\
    &+\frac{1}{2}\big(\mathrm{tr}(\mathbf{Q}^{j,n})-l-\mathrm{ln}(\mathrm{det}(\mathbf{Q}^{j,n}))\big),
\end{split} \label{eq:lackinfoupdate}
\end{align}
where $\mathbf{Q}^{j,n}$ is defined as a covariance ratio matrix
\begin{align}
\begin{split}
    \mathbf{Q}^{j,n}&:=\big(\mr{s}^{j,n-1}+\mathbf{D}^{j,n-2}\left(\mr{s}^{n-1,n}-\mr{f}^{n-1}\right)(\mathbf{D}^{j,n-2})^\dagger\big)(\mr{s}^{j,n-1})^{-1}\\
    &=\mathbf{I}_{l\times l}+\big(\mathbf{D}^{j,n-2}\left(\mr{s}^{n-1,n}-\mr{f}^{n-1}\right)(\mathbf{D}^{j,n-2})^\dagger\big)(\mr{s}^{j,n-1})^{-1}.
\end{split} \label{eq:covarianceratio}
\end{align}
The first term in \eqref{eq:lackinfoupdate} corresponds to the signal, while the latter to the dispersion, terms already introduced in Section \ref{sec:3.4.1}. With \eqref{eq:lackinfoupdate}--\eqref{eq:covarianceratio} in-hand, we can now outline the procedure for determining $L_n$. Having a predefined upper lag bound, hereby denoted by $b\in\mathbb{N}$, we calculate
\begin{equation*}
    \mathcal{P}\left(p^{j,n}_{\text{updated}}, p^{j,n}_{\text{lagged}}\right) \text{ for } R_n\leq j\leq n-1,
\end{equation*}
where $R_n:=\max\{n-1-b, 1\}$, thus totaling
\begin{equation*}
    n-R_n=\begin{cases}
        b+1, & \text{ if } n-1-b\geq 1 \\
        n-1, & \text{ if } n-1-b<1,
    \end{cases}
\end{equation*}
values. Two possible approaches to define the adaptive lag, both yielding similar results after adjusting hyperparameters, are: we can either use the sequence of relative entropies $\left\{ \mathcal{P}\left(p^{j,n}_{\text{updated}}, p^{j,n}_{\text{lagged}}\right)\right\}_{R_n\leq j\leq n-1}$, or apply the local standard deviation (LSDev) $\{\sigma^{j,n}\}_{R_n\leq j\leq n-1}$ to the original entropy sequence and use that instead. The LSDev of the entropy sequence serves as a proxy for the first derivative of the information gain (when estimated via a finite difference scheme), efficiently identifying regions of stagnation or rapid change in it. Specifically $\sigma^{j,n}$, for a moving window span size of $w\equiv 1 (\mathrm{mod}\ 2)$, is the standard deviation of the neighborhood centered at the corresponding input value (or pixel in image filtering) over $j$, extending $(w-1)/2$ positions to the left and to the right, for each $n$. This can potentially allow for more effective adaptive lag selection. Both methods behave similarly in most numerical experiments, including all case studies in Section \ref{sec:4}, especially when the information gain sequence behaves exponentially with respect to $j$ for each $n$. In such cases, both the relative entropy and its LSDev exhibit the same growth, making the approaches roughly equivalent. For simplicity, in what follows, we present the adaptive lag procedure using the LSDev. However, the original relative entropy sequence of information gains could also be used directly. Naturally, when using $\{\sigma^{j,n}\}_{R_n\leq j\leq n-1}$, we additionally have to consider the dependence of the procedure on the window size being used to define the moving LSDev. Taking this into consideration, for a moving window span size of $w\in\mathbb{N}_{\geq 3}$, with $w\equiv 1 (\mathrm{mod}\ 2)$, which gives $\{\sigma^{j,n}\}_{R_n\leq j\leq n-1}=\{\sigma^{j,n}(w)\}_{R_n\leq j\leq n-1}$, and a tolerance parameter $\delta\in(0,1)$, the definition of the adaptive lag at the $n$-th observation in this framework is given as
\begin{equation} \label{eq:adaptlaginfodef1}
    L_n\equiv L_n(b,w,\delta):=n-1-\mathrm{max}\big\{R_n\leq j\leq n-1:\sigma^{j,n}(w)<\delta\big\}\Big(\in\{0,1,\ldots,b\}\Big),
\end{equation}
where if no such maximizer exists, then we simply set
\begin{equation} \label{eq:adaptlaginfodef2}
    L_n:=\min\{n-1, b\}.
\end{equation}
(Throughout all numerical case studies in this work, we set $w=7$ which is MATLAB's default value for \texttt{stdfilt()}.) Notice that for larger values of $\delta$ we allow for more relaxed or smaller lags, while for smaller tolerance values we push the adaptive lag values towards their predefined upper bound, $b$. Per our methodology, and as noted in \eqref{eq:adaptlaginfodef1}, we have that $\forall n\in\mathbb{N}$, $L_n\in\{0,1,\ldots,b\}$.

As aforementioned, it is possible to instead use the sequence of information gains directly to define the adaptive lag,
\begin{equation} \label{eq:adaptlaginfodeforiginalseq}
    L_n\equiv L_n(b,\delta):=n-1-\mathrm{max}\Big\{R_n\leq j\leq n-1:\mathcal{P}\left(p^{j,n}_{\text{updated}}, p^{j,n}_{\text{lagged}}\right)<\delta\Big\},
\end{equation}
where again if no such maximiser exists we simply use \eqref{eq:adaptlaginfodef2}. For most cases this approach leads to the same results after a slight modification to the tolerance parameter being used (and for appropriate values of $w$ in \eqref{eq:adaptlaginfodef1}; see the numerical results in Figure \ref{fig:LDA_Ice_Floes_Fig_3} of Section \ref{sec:4.2}, where both $\sigma^{j,n}$ and $\mathcal{P}\left(p^{j,n}_{\text{updated}}, p^{j,n}_{\text{lagged}}\right)$ behave as $C^n_1e^{C^n_2j}$ for $R_n\leq j\leq n-1$ and $C^n_1,C^n_2>0$). In such settings, if we use $\delta=O(10^{-d})$ with the sequence of local standard deviations, for $d\in\mathbb{N}$, then we can get similar results when using the original sequence of relative entropies by using $\delta=O(10^{-d+s/2})$ where $s\in\mathbb{N}$ is defined through the time step as $\dt=O(10^{-s})$, since the former is approximately the first derivative of the latter with respect to the time step, as already mentioned. Furthermore, the simplicity and flexibility of the adaptive lag framework in \eqref{eq:adaptlaginfodef1} or \eqref{eq:adaptlaginfodeforiginalseq} further allows for the inclusion of a penalty for large lag values. This is achieved by modifying the criterion (e.g., by using a monotonic polynomial of time as the penalization) and then choosing the largest lag for which this penalized information gain (or its penalized local standard deviation) remains small under the tolerance $\delta$, while prioritizing smaller lags.

It is important to note here that in practice, i.e., from a computational standpoint, the adaptive lag in \eqref{eq:adaptlaginfodef1} and \eqref{eq:adaptlaginfodeforiginalseq} is calculated by a backtracking search approach, where $j$ starts from $n-1$ and moves backwards towards $R_n$ until the condition in \eqref{eq:adaptlaginfodef1} or \eqref{eq:adaptlaginfodeforiginalseq} is met. Therefore, it is not required to calculate the full $\left\{ \mathcal{P}\left(p^{j,n}_{\text{updated}}, p^{j,n}_{\text{lagged}}\right)\right\}_{R_n\leq j\leq n-1}$ or $\{\sigma^{j,n}\}_{R_n\leq j\leq n-1}$ sequences and then identify the corresponding ($\mathrm{arg}$)$\max$.

\section{Applications of the Adaptive-Lag Online Smoother} \label{sec:4}
This section demonstrates the application of the online smoother in addressing three key scientific challenges: (a) state estimation and causality detection, especially in the presence of intermittency and extreme events, (b) evaluating computational and storage efficiency in a high-dimensional Lagrangian data assimilation application, and (c) developing an online parameter estimation algorithm with partial observations.

The simulations for all the numerical experiments in this work were carried out on an AMD Ryzen\textsuperscript{TM} 7 5800H mobile CPU of 8 cores, 16 threads, with an average clock speed of 3.2 GHz (turbo boosts up to 4.4 GHz), with a TDP of 45W. In terms of memory, a two-memory-module configuration was used, with 2x16GB DDR4 RAM sticks at 3200MT/s.

\subsection{Detecting Causal Evidence Using the Adaptive-Lag Online Smoother} \label{sec:4.1}

Causal inference helps identify the mechanisms that drive the dynamics of a system. It advances more accurate predictions, better control, and improved decision-making under uncertainty \cite{sun2015causal, hlavavckova2007causality}. It also reveals the role of extreme events and intermittency in modulating the dynamical evolution of the underlying system. By clarifying cause-effect relationships, causal inference enhances our ability to model, simulate, and intervene in complex systems. This subsection demonstrates how the adaptive online smoother and the associated information gain, using a simple nonlinear dyad model, can reveal causal links by showing how time-delayed information in one process improves state estimation in another. The model generates intermittency and extreme events, and the smoother identifies how long temporal information is needed for causal detection during different phases.

The dyad model considered here is the following stochastic model with quadratic nonlinearities, multiplicative and cross-interacting noise, and physics-constraints (energy conservation in the quadratic nonlinear terms):
\begin{align}
    \d u(t)&=\big(-d_uu(t)+\gamma u(t)v(t)+F_u\big)\d t+\sigma_u\d W_u(t), \label{eq:dyad1} \\
    \d v(t)&=\big(-d_vv(t)-\gamma u(t)^2+F_v\big)\d t+\sigma_{vu}u(t)\d W_u(t)+\sigma_v\d W_v(t), \label{eq:dyad2}
\end{align}
where all parameters are constants, with $d_u,d_v,\sigma_u,\sigma_v>0$, and $\gamma,\sigma_{vu},F_u,F_v\in\rr$, while $W_u$ and $W_v$ are two mutually independent Wiener processes. In \eqref{eq:dyad1}--\eqref{eq:dyad2}, The variable $u$ represents an observed or resolved mode in the turbulent signal, interacting with the unresolved mode $v$ through quadratic nonlinearities. As $v$ influences the damping of $u$, variations in $v$ cause the time series of $u$ to display intermittency and extreme events. The system \eqref{eq:dyad1}--\eqref{eq:dyad2} fits exactly the CGNS framework in \eqref{eq:condgauss1}--\eqref{eq:condgauss2}, with $\vx=x:=u$ and $\vy=y:=v$. The following parameter values are used for this numerical case study:
\begin{gather*}
    d_u = 0.5,\quad \gamma = 3, \quad F_u = 1, \quad \sigma_u=0.6, \\
    d_v = 0.5, \quad F_v = 0.3, \quad \sigma_{vu}=0.8, \quad \sigma_v=1.
\end{gather*}
The parameters are assigned to induce intermittent and extreme events in the observable variable $u$ through the stochastic damping generated by the unobserved variable $v$, with the goal of producing highly non-Gaussian PDFs for both observable and unobserved dynamics. These values are also chosen to address potential issues of observability \cite{majda2012filtering, ogata2010modern, dorf2017modern}, which is linked to causal detection using the adaptive online smoother. The coupled system in \eqref{eq:dyad1}--\eqref{eq:dyad2} loses practical observability when the observed process $u$ provides no information about the unobserved variable $v$. Intuitively, this occurs during phases where $u \approx 0$, rendering $v$ ineffective in contributing to the dynamics of $u$. A numerical integration time step of $\dt=0.005$ is used, which also defines the uniform rate of obtaining the observations for the online smoother, with a total simulation time of $T = 60$ units.

Figure \ref{fig:Dyad_Interaction_Fig_1} presents various plots comparing the filter and offline smoother posterior estimated states at different levels (this is equivalent to an online smoother which uses $L_n=n$ at each $t_n=n\dt$, where $n$ is the current number of observations; see Section \ref{sec:3.3}). Panel (a) displays the trajectory of the observed variable $u$, while Panel (b) displays that for the unobserved variable $v$. For the latter, we also include the posterior means calculated using Theorems \ref{thm:filtering} and \ref{thm:smoothing}, along with the first two standard deviations away from the mean, represented by accordingly colored shaded regions. The posterior estimated states overall follow the true values with relatively low uncertainty. As expected, the smoother provides a more accurate estimation during the uncertain quiescent periods of the unobserved variable. During the extreme events which are characterized by a high signal-to-noise ratio, both methods lead to smaller uncertainty. Panel (c) plots the information gain of the filter and smoother posterior distributions relative to the prior statistical attractor (or equilibrium distribution) of $v$ on a logarithmically scaled y-axis. Since both the equilibrium (computed numerically using many long simulated trajectories; see Panel (f)), $p_{\text{att}}(v)$, and posterior distributions (at each time instant) are Gaussian, the information gain in Panel (c) is calculated using the signal--dispersion decomposition formula of relative entropy in \eqref{eq:signaldispersion}. The temporal evolution of relative entropy reveals that the information gain from the smoother posterior Gaussian distribution is almost everywhere larger than that from the filter solution, as is expected. Notably, the information gain corresponding to the onset of the intermittent phases of $u$ shoots up due to the strong signal-to-noise ratios. Note that the information gain using the smoother is more significant than the filter at these phases and the more uncertain quiescent phases due to additional observational information. Panel (d) shows the time-averaged PDF of $u$ generated by its signal over $t\in[20,60]$, alongside the corresponding Gaussian fit, i.e., the normal distribution with the same mean and standard deviation as the truth over the associated period. Due to the intermittent extreme events, we can discern that the induced density is highly non-Gaussian, with an apparent positive skewness and a heavy tail. Panel (e) is the same as (d) but for $v$, which also includes the densities corresponding to the filter and smoother solutions. We can again see that the density corresponding to $v$ displays some non-Gaussian features. Furthermore, as is expected, the PDF of the smoother solution is better equipped to approximate that of the truth compared to the filter one by better resolving the tail behavior and first few ordered moments.

\begin{figure}[!ht]
\centering
    \includegraphics[width=\textwidth]{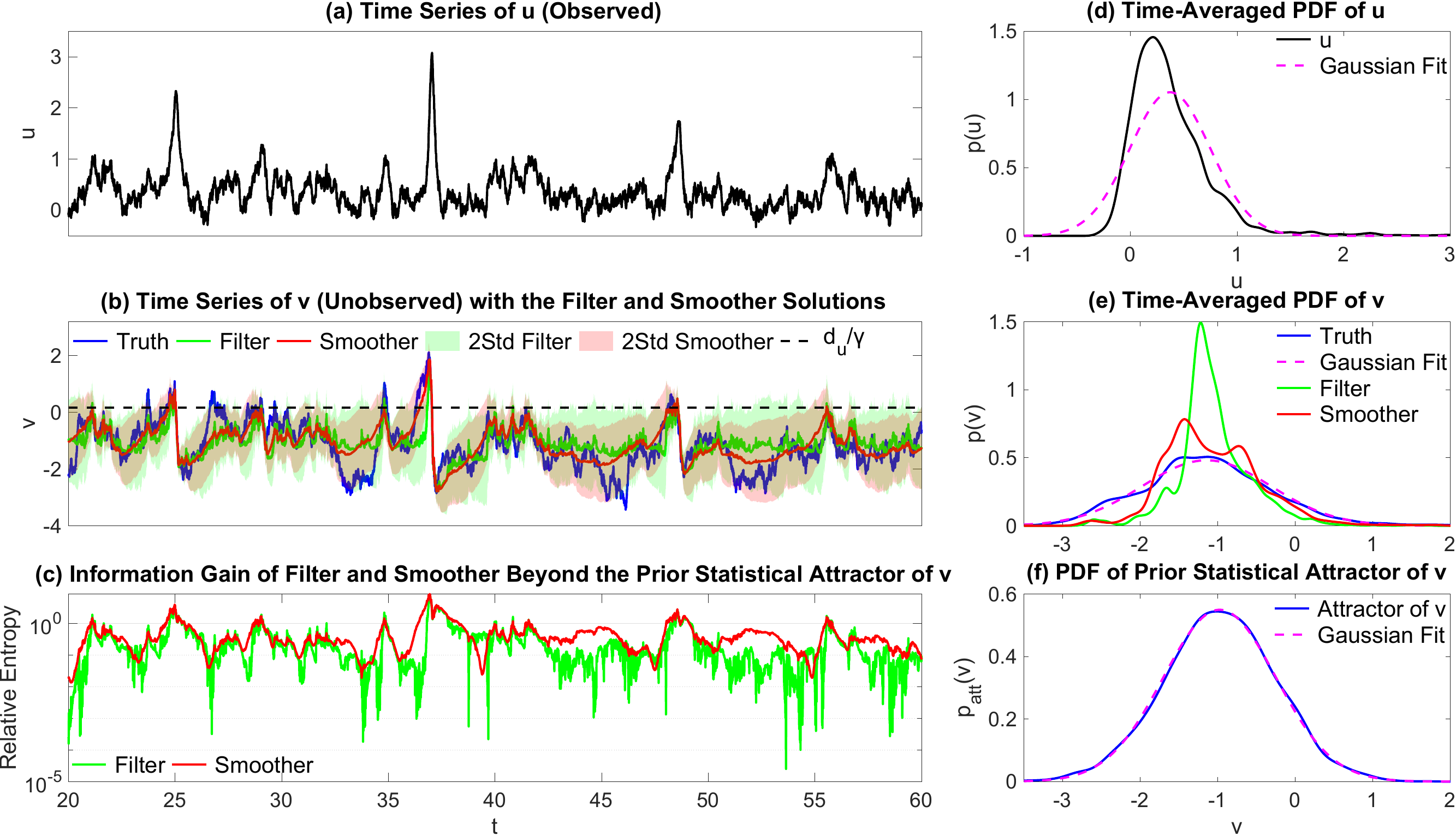}
    \caption{Panel (a): Trajectory of the observable variable $u$. Panel (b): True trajectory of the unobserved variable $v$ (in blue), alongside the filter (in green) and offline smoother (in red) posterior mean time series. Their respective first two standard deviations away from the mean state are also plotted through correspondingly colored shaded regions, with the standard deviation in this case being exactly equal to the square root of the respective posterior covariance. The threshold above which $v$ acts as anti-damping to $u$, i.e., the line $y=d_u/\gamma=1/6$, is also plotted in a dashed black line. Panel (c): Temporal evolution of the information gain of the filter and smoother posterior Gaussian statistics beyond the statistical attractor of the unobserved variable. This is calculated using the signal--dispersion decomposition of the relative entropy in \eqref{eq:signaldispersion}. A logarithmic scale is used for the y-axis. Panel (d): Time-averaged PDF of $u$ calculated using the observations over $t\in[20,60]$ (in black), alongside its Gaussian fit density defined by the mean and standard deviation of the signal in the same time period (in dashed magenta). Panel (e): Same as (d) but for the unobserved variable $v$, together with the PDFs corresponding to the filter and smoother posterior mean time series.  Panel (f): PDF of the unobserved variable's prior statistical attractor, $p_{\text{att}}(v)$.}
    \label{fig:Dyad_Interaction_Fig_1}
\end{figure}

Figure \ref{fig:Dyad_Interaction_Fig_2} focuses on a shorter intermittent period $\big(t\in[32.5,40]\big))$ to assess the performance of the adaptive-lag online smoother using the original sequence of information gains, $\left\{ \mathcal{P}\left(p^{j,n}_{\text{updated}}, p^{j,n}_{\text{lagged}}\right)\right\}_{R_n\leq j\leq n-1}$, with hyperparameters $b=600$ (which corresponds to $b\dt=3$ simulation time units) and $\delta=10^{-4}$. For this case study, similar results can be obtained when instead using the LSDev sequence, $\{\sigma^{j,n}\}_{R_n\leq j\leq n-1}$, as already discussed in Section \ref{sec:3.4.2}, by appropriately adjusting $\delta$ (see also Panel (g) of Figure \ref{fig:Dyad_Interaction_Fig_Appendix}). Panel (a) shows the signals of $u$ and $v$ over this period, while Panels (b)--(f) illustrate how the online smoother adjusts its state estimation as new observations arrive at five different time instants during this period. The online smoother is compared against the filter solution, which can also produce real-time estimated states with the sequential arrival of observations. We focus on Panels (c) and (d). When using the filter solution, the estimation of $v$ before $u$ reaches the peak of the extreme event contains significant errors. This is because the growth of $v$, once it exceeds the threshold of $d_u/\gamma=1/6$ to become anti-damping, triggers the extreme event in $u$ (the cause). However, the extreme event (the effect) does not appear instantly, thus introducing a delay in the onset of these instabilities. This is why at $t=36.50$ (Panel (c)), without seeing the future information from the development of the extreme event, the causal variable $v$ cannot be accurately recovered by the filter. By extension, due to the data from the extreme event in the observed signal simply not being available yet at the current time instant, this, in a sense, contaminates the online smoother state, which faithfully follows the filter one regardless of how long the adaptive lag is. Nevertheless, at $t=37.00$ (Panel (d)), with the observed trajectory currently undergoing the effects of the instability caused by $v$, this allows the adaptive online smoother, which utilizes future information for state estimation unlike the filter one, to correct its previous estimated states based on the value of the calculated adaptive lag, at most for $b\dt=3$ time units into the past. This is explicitly seen by the online smoother posterior mean, which now better approximates the true signal compared to its state half a time unit ago, thus demonstrating its property to systematically correct biases when future data become available. This attribute is crucial to help identify the current state, therefore, outperforming the filter solution. This crucial discrepancy between the filter and online smoother estimates reveals that information in the dynamics flows from the current state of the unobserved variable (cause) to the future state of the observed one (effect), highlighting a causal relationship between these states. Panel (g) presents the adaptive lag values generated during this period, from using the sequence of information gains or relative entropies as outlined in \eqref{eq:adaptlaginfodeforiginalseq} and \eqref{eq:adaptlaginfodef2}. The lag in the online smoother likewise reflects the aforementioned time delay in the causal relationship, with a peak in adaptive lag values at the onset of the extreme event, i.e., the causal period. These are essential to correct past errors and effectively recover the state of the cause $v$, which triggered the extreme event. In contrast, smaller lag values are needed during the effect period, specifically during the demise of this intermittent event, which is characterized by a high signal-to-noise ratio where both the filter and smoother estimates can effectively capture the relatively deterministic behavior. As such, lower adaptive lag values are required since the ``lagged" distribution in \eqref{eq:lackinfoupdate} can yield a quickly decaying information gain function as $j$ decreases from $n-1$ to $R_n$ (or to an already small sequence of information gains for these values of $n$, i.e., below the threshold parameter $\delta$ uniformly in $j$), thus leading to small values for $L_n$ in \eqref{eq:adaptlaginfodeforiginalseq}. After all, the lag essentially functions as an interpolation between the past state estimates and a ``full" (since $b$ limits how much into the past we can correct) backward-run smoother solution of the online smoother. As such, a close-to-zero lag value signifies that more trust is put in the statistics from the previously calculated online posterior distributions since the observations obtained during this period have tiny impact regions.

Recall that $\left\{ \mathcal{P}\left(p^{j,n}_{\text{updated}}, p^{j,n}_{\text{lagged}}\right)\right\}_{R_n\leq j\leq n-1}$, defined by \eqref{eq:lackinfoupdate}--\eqref{eq:covarianceratio}, is used to determine the adaptive lags depicted in Panel (g) (via \eqref{eq:adaptlaginfodeforiginalseq} and \eqref{eq:adaptlaginfodef2}). We define the standardization of this sequence in the following manner:
\begin{equation} \label{eq:infogainstandardized}
    \frac{\displaystyle\left|\mathcal{P}\left(p^{j,n}_{\text{updated}}, p^{j,n}_{\text{lagged}}\right)-\mathcal{P}\left(p^{0,n}_{\text{updated}}, p^{0,n}_{\text{lagged}}\right)\right|}{\displaystyle\max_{R_n\leq j\leq n-1}\left\{\left|\mathcal{P}\left(p^{j,n}_{\text{updated}}, p^{j,n}_{\text{lagged}}\right)-\mathcal{P}\left(p^{0,n}_{\text{updated}}, p^{0,n}_{\text{lagged}}\right)\right|\right\}}, \quad \forall n\in\mathbb{N}, \quad R_n\leq j\leq n-1.
\end{equation}
(Other forms of standardization yield similar results, i.e., by dividing by the standard deviation instead of the maximum over $j$.) The standardization of the sequence of its LSDevs, $\{\sigma^{j,n}\}_{R_n\leq j\leq n-1}$, is similarly defined. Henceforth, when we refer to the standardized variant of these sequences, we imply the sequence in $j$ and $n$ defined by \eqref{eq:infogainstandardized}.

In Panel (h) of Figure \ref{fig:Dyad_Interaction_Fig_2}, the standardized sequence of information gains or relative entropies (as defined in \eqref{eq:infogainstandardized}) is plotted, on a logarithmic scale, as a function of $n$ (for $n\dt\in[32.5,40]$) and of $n\dt-j\dt$ (for $j\dt\in[R_n\dt,n\dt]$). (As a reminder, for the $n$-th observation of $u$, the last index $j\in\{R_n,\ldots,n-1\}$ for which $\mathcal{P}\left(p^{j,n}_{\text{updated}}, p^{j,n}_{\text{lagged}}\right)$ remains below the tolerance value of $\delta=10^{-4}$, is $n-1-L_n$.) Of significance here is the observational interval of $n\dt\in[35.5,37.5]$, which covers the cause and effect of the extreme event. Up until around $t=37.00$, which is on the cusp of the extreme event's completion and $u$'s subsequent decline, the cause is fully captured with a persisting standardized information gain throughout the period which generated said extreme event. This explains the increasing and large adaptive lags leading up to the extreme event (i.e., during its generation), depicted in Panel (g). But, as soon as the effect in $u$, generated by $v$, ceases to exist after $t=37.00$, then a causal role reversal is initiated, with $u$ now being the significant factor in the system's dynamics and in driving $v$'s decrease or damping via the $-\gamma u^2$ term. As a result, the standardized information gain shows an extremely rapid decay during this regime-switch period, where the high signal-to-noise ratio leads to negligible impact regions for the observations of $u$ on $v$'s online smoother state estimation (into the past, during updates). This has the ulterior outcome of smaller adaptive lag values during this period, as shown in Panel (g), compared to the other periods. Importantly, this behavior persists over the following quiescent period, whenever $n\dt-j\dt$ bridges over and before $t=37.00$.

Finally, in Panel (i) of Figure \ref{fig:Dyad_Interaction_Fig_2}, the spectral radii of the update matrices $\mathbf{D}^{j,n-2}=D^{j,n-2}$ (i.e., their absolute value), are plotted in the same manner as the standardized information gains shown in Panel (h) (though, not standardized). First, it is important to note that for all $j$ and any $n$ the spectral radii of the update matrices stay below the threshold of $1$, which ensures the convergence of the online smoother as an iterative contraction mapping. Second, the spectral radii, or in this trivial case of a one-dimensional latent state space, the absolute value of the update operator $\mathbf{D}^{j,n-2}=D^{j,n-2}$, for every $n$, behaves overall as an exponentially decreasing function as $j$ decreases down from $n-1$ (or as $n-j$ increases from $0$), which indicates the exponential decrease of the impact region of each new observation by the turbulent dynamics, as discussed in Section \ref{sec:3.2.1}. As discussed in the prequel and observed in Panel (h), similarly here for $n$ values such that $R_n\dt\leq j\dt\leq (n-1)\dt$ covers an extreme event, the emergence of an instability in the observed signal induces a rapid regime switch and an additional steep exponential decrease in the spectral radius values, i.e., the exact influence region of this observation, for when $j$ passes through these intermittent periods. Finally, and most remarkably, is the fact that the information-based criterion we have developed in Section \ref{sec:3.4.2}, after standardization, showcases an impressively similar temporal behavior (both in observational (in $n$) and natural (in $j$) time), with slightly differing scales, as the exact metric that defines the observations' impact or influence region on online smoother state estimation, that being the spectral radii of the update operators (see also Figure \ref{fig:LDA_Ice_Floes_Fig_3}). This is true not just during extreme events, as already discussed, but also during quiescent periods. This provides strong numerical corroboration for our methodology of defining the adaptive lag as an extremely cheap (when compared to the exact method) yet effective approach.

\begin{figure}[!ht]
\centering
    \includegraphics[width=\textwidth]{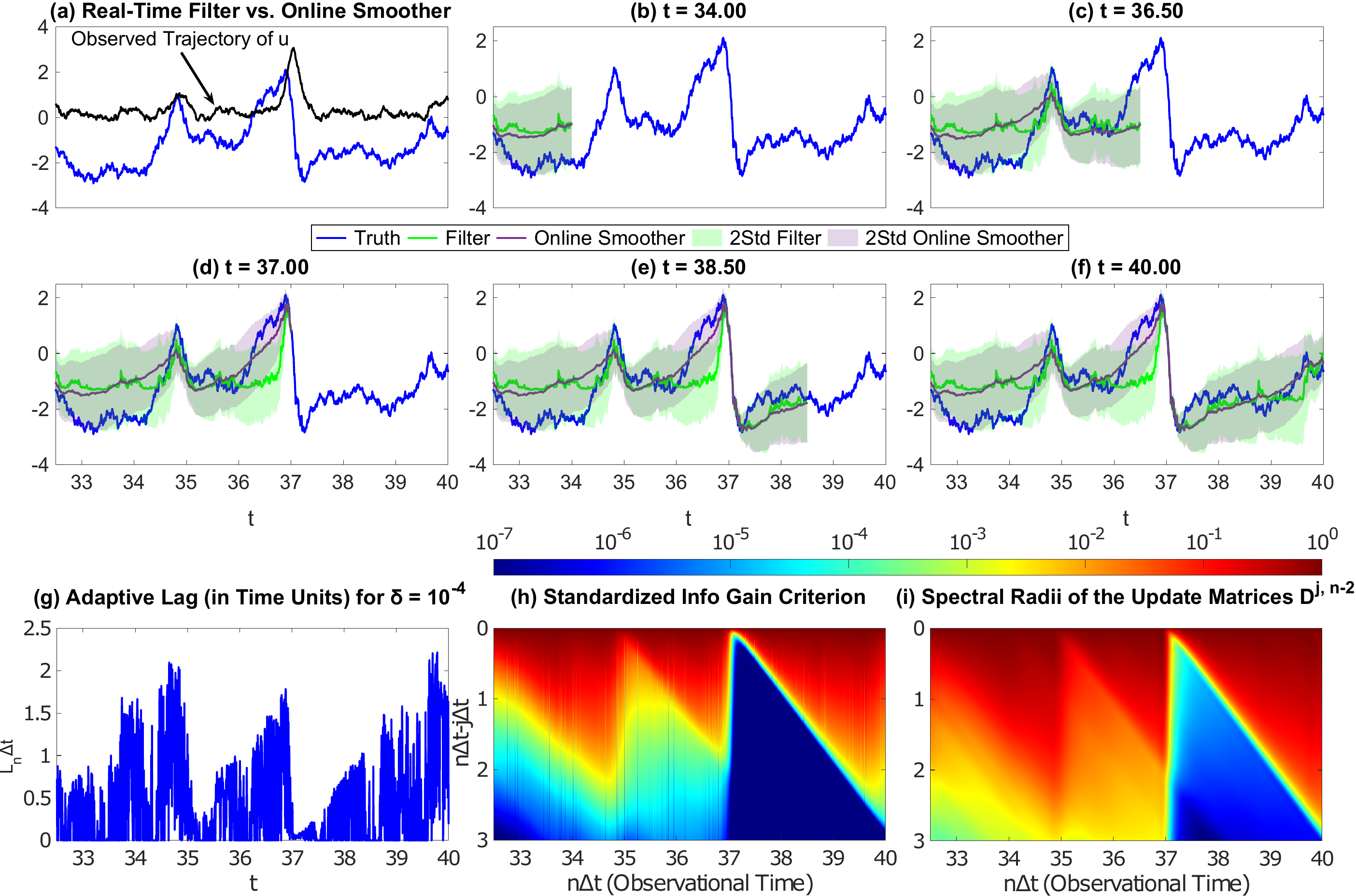}
    \caption{Panels (a)--(f): Real-time comparison between the filter posterior mean time series (in green) and the one generated from the adaptive-lag online smoother strategy by Theorem \ref{thm:onlinesmoother} (in purple). The observed trajectory of $u$ is plotted in Panel (a) (in black), while the true trajectory of $v$ is showed in all panels for reference (in blue). Panel (g): Adaptive lag values (measured in time units, i.e., $L_n\dt$) generated by the algorithm defined in Section \ref{sec:3.4.2}, specifically through \eqref{eq:adaptlaginfodeforiginalseq} and \eqref{eq:adaptlaginfodef2} (i.e., using the original sequence of relative entropies). Panel (h): Standardized information gain criterion in \eqref{eq:infogainstandardized} as a function of $n$ (for $n\dt\in[32.5,40]$) and of $n\dt-j\dt$ (for $j\dt\in[R_n\dt,n\dt]$). Plotted on a logarithmically scaled colormap. Panel (i): Same as Panel (h) but for the spectral radii of the update values $D^{j,n-2}$, i.e., $|D^{j,n-2}|$. Panels (h) and (i) share the same colorbar, and have their y-axis flipped.}
    \label{fig:Dyad_Interaction_Fig_2}
\end{figure}

For completeness, since a study into the trade-off between the computational and storage-wise complexity and lower-order pathwise error of the fixed- and adaptive-lag online discrete smoother solutions is provided for the high-dimensional Lagrangian data assimilation problem in Section \ref{sec:4.2} (see Figure \ref{fig:LDA_Ice_Floes_Fig_2}), a similar analysis for this dyad-interaction model is also carried out. It illustrates the computational storage advantages of the adaptive-lag online smoother compared to its fixed-lag variant, for a model defined by intermittent instabilities rather than high dimensionality. See Figure \ref{fig:Dyad_Interaction_Fig_Appendix} in Appendix \ref{sec:app3}.

\subsection{Lagrangian Data Assimilation of a High-Dimensional Flow Field} \label{sec:4.2}

Lagrangian data assimilation (LDA) is a specialized method that utilizes the trajectories of moving tracers (e.g., drifters or floaters) as observations used to recover the unknown flow field driving their motion \cite{chen2014information, chen2015noisy, apte2008bayesian, ide2002lagrangian, griffa2007lagrangian}. Unlike Eulerian observations, which are fixed at specific locations, Lagrangian drifters track the movement of fluid parcels over time \cite{castellari2001prediction, salman2008using, honnorat2009lagrangian}. This approach is particularly valuable for recovering ocean states in the mid-latitude using floats or in marginal ice zones using sea ice floes as Lagrangian tracers \cite{chen2022efficient, chen2021lagrangian, covington2022bridging}. However, due to the high dimensionality of the state space of the flow field, which is characterized by a large number of spectral modes or high-resolution mesh grids, running a forward-backward offline smoother for state estimation becomes computationally infeasible. Consequently, the adaptive online smoother is essential in practical applications.

The coupled tracer-flow model is given as follows,
\begin{align}
        \d \vx_{\ell}(t)&=\mathbf{v}_{\ell}(t)\d t+\ms^{\vx_{\ell}}\d \vw_{\vx_\ell}(t), \label{eq:lthfloelocation}\\
        \d \mathbf{v}_{\ell}(t)&=\beta\big(\mathbf{u}(t,\vx_{\ell})-\mathbf{v}_{\ell}(t)\big)\d t+\ms^{\mathbf{v}_{\ell}}\d \vw_{\mathbf{v}_\ell}(t), \label{eq:lthfloevelocity}\\
        \d \hat{u}_{\mathbf{k}}(t)&=\big(-d_{\mathbf{k}}\hat{u}_{\mathbf{k}}(t)+f_{\mathbf{k}}(t)\big)\d t+\sigma_{\mathbf{k}}\d W_{\mathbf{k}}(t),\quad \text{with~} \mathbf{u}(t,\vx_{\ell}(t))=\sum_{\mathbf{k}\in \mathscr{K}} \hat{u}_{\mathbf{k}}(t)e^{i\mathbf{k}\cdot\vx_{\ell}(t)}\mathbf{r}_{\mathbf{k}}, \label{eq:flowfieldspectral}
\end{align}
where $\vx_{\ell}$ and $\mathbf{v}_{\ell}$ denote the location and velocity of the $\ell$-th tracer, respectively, with $\ell=1,\ldots,L$, while $\mathbf{u}(t,\vx)$ is the ocean velocity field expressed through a spectral decomposition in its geostrophically balanced or potential vortical incompressible modes, using a Fourier basis in the doubly-periodic spatial domain $[-\pi,\pi]$ \cite{majda2003introduction}. The governing equations of the Fourier coefficients $\hat{u}_{\mathbf{k}}$ are expressed through a set of linear stochastic processes, where $\mathbf{k}\in\mathscr{K}$ denotes the Fourier wavenumber with $\mathscr{K}=[-K,K]^2\cap\mathbb{Z}^2$ for $K\geq 1$ denoting the two-dimensional discrete lattice collection of Fourier wavenumbers. Note that the stochastic noise mimics the turbulent effects of nonlinearity present in many practical systems, such as the quasi-geostrophic (QG) ocean model. This approximation is justified in many applications \cite{chen2014information, chen2015noisy}. Since the primary goal is to illustrate the effectiveness of the online smoother, this simplification is employed in the study here. By treating all the $\vx_{\ell}$ as the observational variables and all the $\mathbf{v}_{\ell}$ and $\hat{u}_{\mathbf{k}}$ as the variables for state estimation, the LDA system in \eqref{eq:lthfloelocation}--\eqref{eq:flowfieldspectral} belongs to the CGNS family. The parameter values in this simulation are the following. The wavenumber bound is $K=2$, such that there are in total 25 modes after excluding the zeroth mode, which corresponds to a time-varying
random background mean sweep, or background velocity field. The total simulation time is $T=5$ with the numerical integration time step (also the observational time frequency) being $\dt=0.005$. The total number of tracers is $L=18$. The model parameters are $f_{\mathbf{k}}(t)=0.15e^{\displaystyle 5\pi it}$, $\ms^{\vx_{\ell}}=0.005\pi\mathbf{I}_{2\times 2}$, $\beta=1$, $\ms^{\mathbf{v}_\ell}=0.1\mathbf{I}_{2\times 2}$. The damping and noise coefficients $d_{\mathbf{k}}=d_{-\mathbf{k}}$ and $\sigma_{\mathbf{k}}=\sigma_{-\mathbf{k}}$ are randomly drawn from $\mathcal{U}\big([0.5,1.5]\big)$ and $\mathcal{U}\big([0.15,0.25]\big)$, respectively, with the conjugacy condition establishing the reality of the underlying velocity field \cite{chen2014information, chen2015noisy}.

Figure \ref{fig:LDA_Ice_Floes_Fig_1} demonstrates the posterior state estimates using the filter (Theorem \ref{thm:filtering}) and smoother (Theorem \ref{thm:smoothing}) in recovering the underlying ocean flow field and the tracer velocity vectors. Note that the amplitudes for the ocean flow field (in blue) and tracer velocity (in the hot colormap) quiver plots are not shown on the same scale. Panels (a)–-(c) compare the spatial recovery of the ocean state and tracer velocity vectors at $t=2$. Both methods effectively reconstruct the tracer movement in direction and magnitude, with the smoother being slightly more effective. As for the flow velocity field, the smoother exhibits a much higher skill in its recovery compared to the filter solution. This is further illustrated in Panels (d)-–(e), where we compare the true time series with the filter and smoother posterior mean states over the observational period for the real part of the ocean Fourier mode $\mathbf{k}=(2,-1)^\tran$ of the ocean, and the zonal velocity of tracer \#1 (noted in Panels (a)--(c)). We can see that the smoother follows the truth more closely and with less uncertainty for both cases. This is observed uniformly among all Fourier modes and all tracer velocity vectors.

\begin{figure}[!ht]
\centering
    \includegraphics[width=\textwidth]{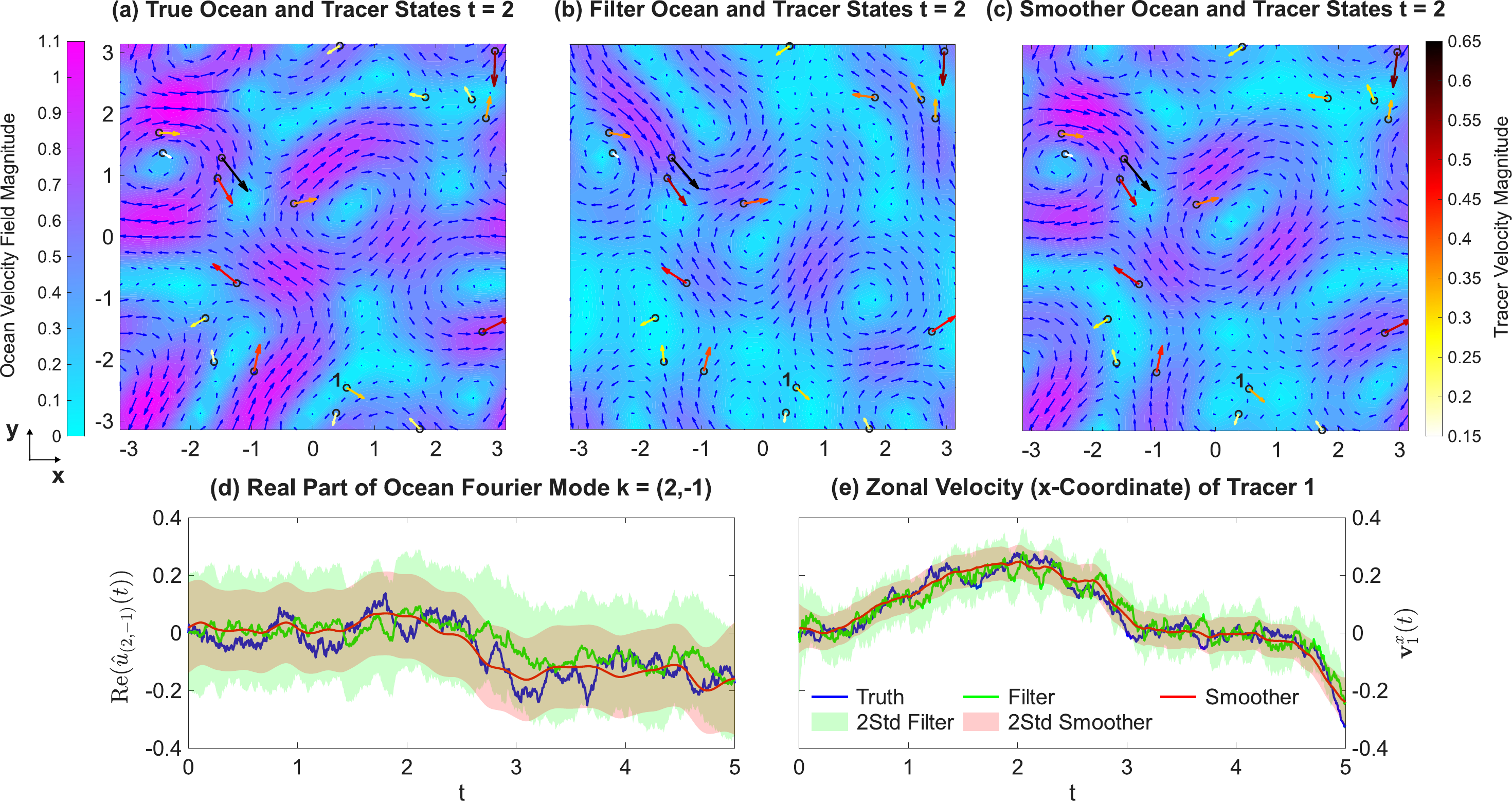}
    \caption{Panel (a): True state of the underlying flow field at $t=2$, where the colormap shows the amplitude and the quiver plot represents the velocity field. It is superimposed by the location of the tracers and their velocity vectors. The magnitudes of the quiver plot for the ocean's velocity field and velocity vectors for the tracers are on distinct scales. Panels (b)--(c): Similar to (a), but corresponding to the estimated state from the posterior mean, calculated through the filter and smoother, respectively. Panels (d)--(e): Comparison between the true time series (in blue) and the posterior mean time series of the filter (in green) and smoother (in red) solutions for the real part of the ocean Fourier mode $\mathbf{k}=(2,-1)^\tran$ and zonal velocity of tracer \#1 (labeled in Panels (a)--(c)).}
    \label{fig:LDA_Ice_Floes_Fig_1}
\end{figure}

Figure \ref{fig:LDA_Ice_Floes_Fig_2} compares the computational time (in seconds) and storage requirements (in gigabytes) for the fixed- and adaptive-lag strategies of the online smoother. (For details on how the storage and time values in Panels (a), (b), (d), and (e) are calculated, see Appendix \ref{sec:app4}.) Additionally, we present the normalized root-mean-square error (NRMSE) between the posterior mean time series and the truth as to assess the pathwise accuracy of these methods. The normalized RMSE is computed by dividing the RMSE (calculated in the temporal direction) by the standard deviation of the true signal \cite{lermusiaux1999data, hendon2009prospects}, such that the NRMSE indicates an error level comparable to the equilibrium variability. This metric is evaluated as a function of the fixed lag and tolerance parameter $\delta$ and showed on a logarithmically scaled x-axis, where for the latter a maximum lag bound of $b=300$ (corresponding to $b\dt=1.5$ simulation time units) is applied throughout, with the sequence of entropies used to define the adaptive lags (i.e., \eqref{eq:adaptlaginfodeforiginalseq} and \eqref{eq:adaptlaginfodef2}). Panels (a) and (b) display the fixed-lag results, while Panels (d) and (e) display the adaptive-lag ones. To provide robust results, computational times in Panels (b) and (e) are averaged over multiple runs to reduce external fluctuations. For the fixed-lag smoother, both storage and time increase algebraically (e.g., linearly) with lag, while the NRMSE converges exponentially, at a significant rate, from the filter solution (zero lag) to the one offline smoother one (maximum lag), indicating the sufficiency of using an overall short lag to significantly save on computational storage for this specific problem instance. On the other hand, the storage needs of the adaptive-lag smoother remain constant with respect to $\delta$, since a fixed $b$ is being used in all cases, while computational time increases algebraically to logarithmically as $\delta$ approaches zero. The former result showcases how the adaptive-lag online smoother implicitly minimizes storage requirements when compared to the fixed-lag variant, since for $b$ fixed $\delta$ can decrease freely to improve the recovery skill, with storage needs remaining unaffected. Note how by using a $b\dt$ value close to the uniform impact region of the observations in this LDA problem for a target pathwise recovery skill (e.g., $b\dt\approx0.35$ for a target NRMSE of around $0.35$, where we have diminishing returns with decreasing $\delta$; see Panel (a) in Figure \ref{fig:LDA_Ice_Floes_Fig_3}), there is a potential for significant storage savings to be had in this approach. Of course, the computational time spent unavoidably increases with decreasing $\delta$, since more linear-search checks in \eqref{eq:adaptlaginfodeforiginalseq} (or \eqref{eq:adaptlaginfodef1}) need to be carried out. In Panel (e) we also note at each data point on the time plot the percentage of time spent on the calculation of the adaptive lag (also averaged over many runs); in this calculation we include the time spent evaluating the update matrices $\mathbf{D}^{j,n}$ for $n$ and $R_n\leq j\leq n-1$, since they are required for the calculation of $L_n$ (see \eqref{eq:lackinfoupdate}--\eqref{eq:covarianceratio}). The NRMSE for the adaptive-lag method now decreases algebraically with $\delta$, demonstrating a sharper trade-off between computational resources and error compared to the fixed-lag approach. 

Panels (c) and (f) further illustrate the recovery skill of these approaches through the temporal average of the information gain for the fixed- and adaptive-lag online smoother strategies beyond the filter solution as functions of the fixed lag and tolerance parameter, respectively. Both are plotted on an x-axis that is scaled logarithmically and are calculated using the signal--dispersion formula of the relative entropy in \eqref{eq:signaldispersion}, since both distributions are Gaussian at each time instant. For reference, we show both the signal and dispersion parts, as well as the total information gain. For the fixed-lag smoother, since the zero-fixed-lag online smoother coincides with the filter solution, the information gain there is zero, but as the fixed lag increases, it rapidly and exponentially converges to the information gain value corresponding to the full-backward offline smoother beyond the filter estimates (see Section \ref{sec:3.3}). On the other hand, similarly to the results from Panels (d) and (e), we have that the temporally-averaged information gain beyond the filter solution is a logarithmic to algebraic function in $\delta$, where for large values of $\delta$ this is close to zero and approaches the information gain value of the full-backward offline smoother beyond the filter estimates as $\delta$ approaches zero. In both cases, these behaviors are demonstrated in both the signal and dispersion parts and, as such, in the total information gain as well.

\begin{figure}[!ht]
\centering
    \includegraphics[width=\textwidth]{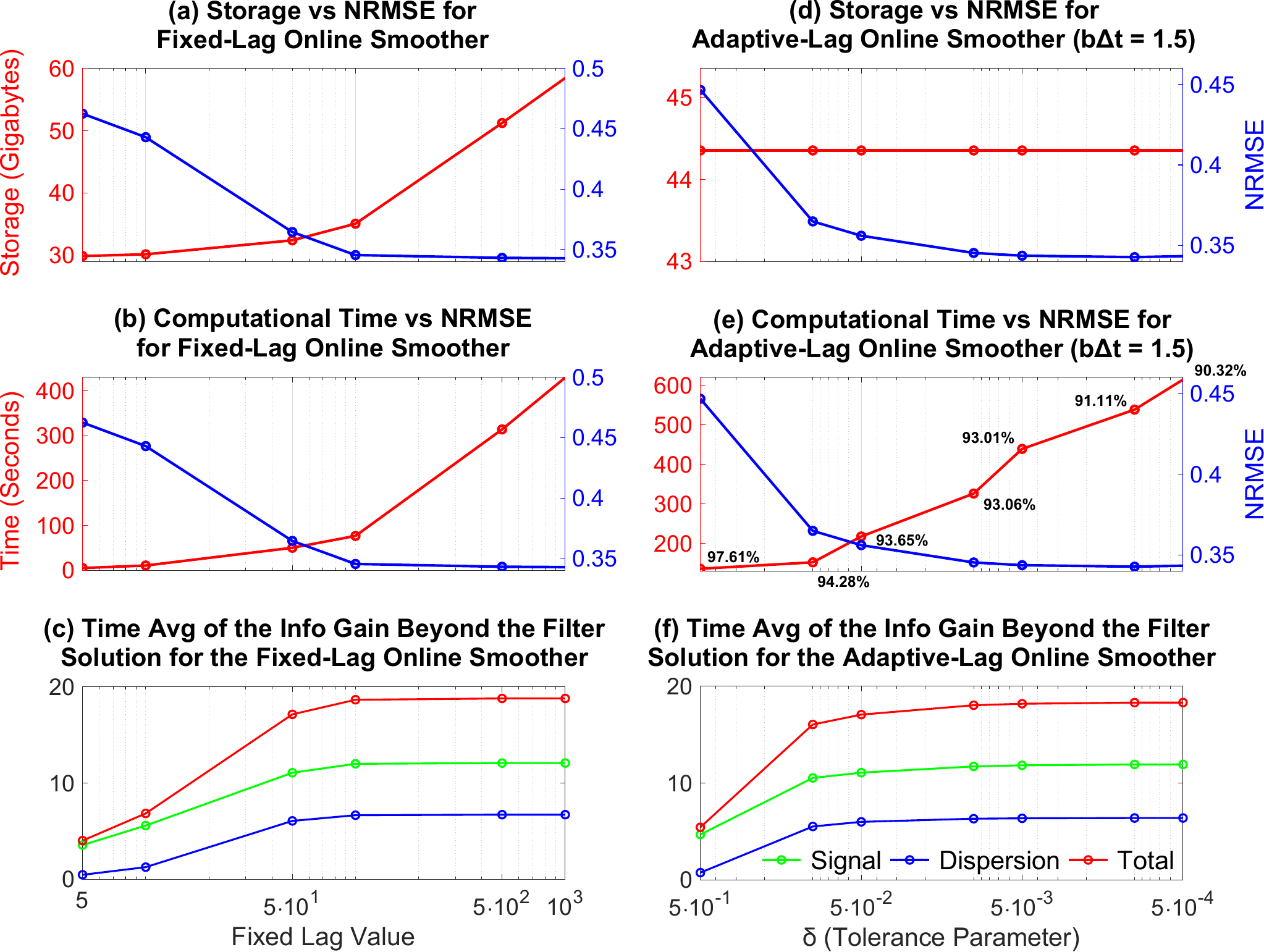}
    \caption{Panels (a)--(b) \& (d)--(e): Analysis of the trade-off between the computational complexity, measured in computational time (in seconds), and required storage capacity (in gigabytes), and lower-order pathwise error statistic of the NRMSE between the posterior mean time series and true signal, for the fixed-lag (Panels (a)--(b)) and adaptive-lag online smoother (Panels (d)--(e)). In Panel (e) we also show at each data point of the latter the percentage of time spent on the adaptive-lag calculation (including the calculation of the update matrices $\mathbf{D}^{j,n-2}$). Details on how the storage and time values are calculated are given in Appendix \ref{sec:app4}. Panels (c) \& (f): Temporal average of the information gain for the fixed- (Panel (c)) and adaptive-lag (Panel (f)) online smoother beyond the filter solution for various values of the fixed lag and tolerance parameter, respectively. Both the signal (in green) and dispersion (in blue) parts of the total information gain (in red) are shown. All panels are plotted on a semi-log x-axis, and the x-axes for the adaptive-lag online smoother  results are reversed for ease of comparison with the corresponding fixed-lag strategy plots.}
    \label{fig:LDA_Ice_Floes_Fig_2}
\end{figure}

Finally, in Figure \ref{fig:LDA_Ice_Floes_Fig_3}, we demonstrate the contrasting behavior of the adaptive-lag online smoother in the LDA problem when compared to the dyad-interaction model with intermittent instabilities from Section \ref{sec:4.1}. 
First, Panel (a), similar to Panel (g) in Figure \ref{fig:Dyad_Interaction_Fig_2}, illustrates the adaptive lag values (in time units) of the online smoother with hyperparameters $b=100$ (corresponding to $b\dt=0.5$ simulation time units), $\delta=0.05$ and calculated via the sequence of entropies through \eqref{eq:adaptlaginfodeforiginalseq} and \eqref{eq:adaptlaginfodef2}, as a function of time in $[2,5]$. Unlike the adaptive lag values generated from the dyad model in Section \ref{sec:4.1}, which highly depend on whether an extreme event is being observed or if we are currently at a period of dormancy, the adaptive lag value here meanders around a constant (about $0.3$ in simulation time units), via uniform, in amplitude, oscillations, and does so steadily in time which indicates the absence of intermittency in the system and the uniform impact each observation has on the online smoother solution. Similar to Panels (h) and (i) of Figure \ref{fig:Dyad_Interaction_Fig_2}, Panels (b) and (c) depict the standardized information gain sequence (see \eqref{eq:infogainstandardized}) and spectral radii of the update matrices $\mathbf{D}^{j,n-2}$ but for the LDA problem defined in this section, \eqref{eq:lthfloelocation}--\eqref{eq:flowfieldspectral}. Again, as with the results from the dyad-interaction model, the two metrics showcase significant temporal correlations, with their major discrepancies emerging in the differing scale and noisier behavior of the former. As before, this functions as numerical corroboration to the claim that our proposed information-theoretic approach to determining the observational impact region of the $n$-th observation onto the online discrete smoother update is dynamically- and statistically-consistent and effective in recovering the exact approach reflected by the spectrum of the update tensor $\mathbf{D}^{j,n-2}$, while being much more computationally efficient at the same time. But, it is also important to compare and contrast these results with those from Panels (g)--(i) of Figure \ref{fig:Dyad_Interaction_Fig_2}. Foremost, the spectral radii of the update tensors $\mathbf{D}^{j,n-2}$, which are high-dimensional matrices in this case, remain below $1$ throughout time and for all values of $n$. Also, in both case studies, these sequences illustrate an exponential behavior in $j$ for all values of $n$. However, for the LDA problem, we can see a uniform rate of exponential decay as $j$ decreases from $n-1$, across all observations (i.e., $n$). Recall that, in the dyad-interaction numerical experiment, this rate would change depending on whether a rare or extreme event was forming in the observable signal; it could even exhibit regime switches depending on the emergence of intermittent instabilities. Therefore, for the LDA problem where we recover the flow field's and tracers' velocities, the observations from the tracers' locations have a uniform impact on the online smoother estimate over time, which is a result of the stability that the model in \eqref{eq:lthfloelocation}--\eqref{eq:flowfieldspectral} has, as well as of the fact that the equilibrium distribution of the tracers is nearly uniform due to the incompressibility of the underlying ocean velocity field \cite{chen2014information}. This contrasts with the dyad-interaction model, which displays intermittent instabilities due to the latent dynamics acting as anti-damping in the observable ones on rare occasions. 

Lastly, in Panel (d) of Figure \ref{fig:LDA_Ice_Floes_Fig_3}, we showcase the standardized LSDev criterion, i.e., for $\sigma^{j,n}$ in \eqref{eq:infogainstandardized}, in the same manner as Panel (b). In this case study, where the absence of the intermittent instabilities allows for an almost consistent exponentially decaying behavior with a constant rate in $j$ for each $n$ for the adaptive-lag-defining criteria, Panels (b) and (d) numerically corroborate the note made in the second-to-last paragraph in Section \ref{sec:3.4.2}; we can yield analogous adaptive lag values with either sequence (of relative entropies $\left\{ \mathcal{P}\left(p^{j,n}_{\text{updated}}, p^{j,n}_{\text{lagged}}\right)\right\}_{R_n\leq j\leq n-1}$ or of their LSDevs $\{\sigma^{j,n}\}_{R_n\leq j\leq n-1}$) by a simple adjustment of the $\delta$ tolerance value.

\begin{figure}[!ht]
\centering
    \includegraphics[width=\textwidth]{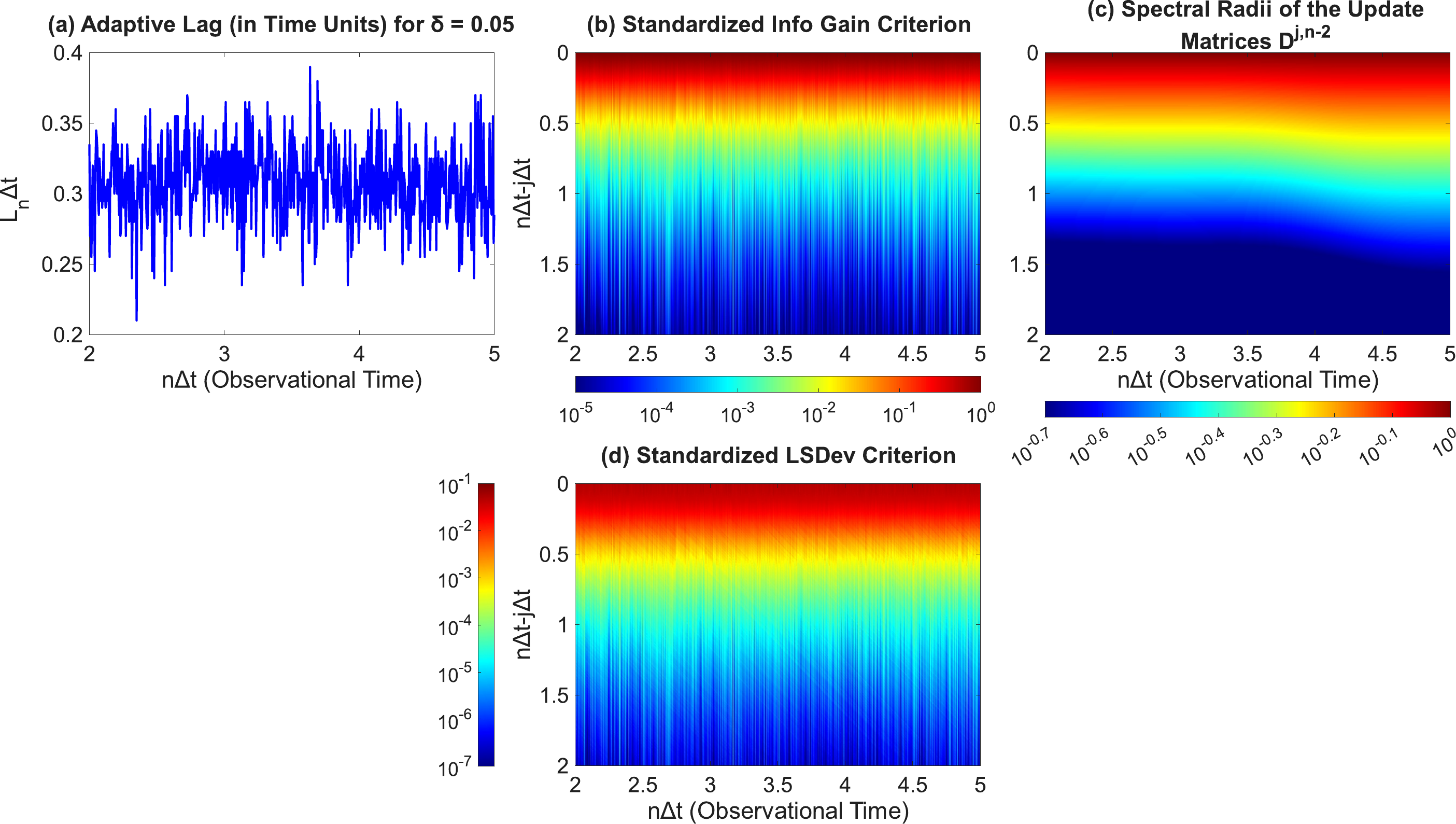}
    \caption{Panels (a)--(c): Similar to Panels (g)--(i) of Figure \ref{fig:Dyad_Interaction_Fig_2}, respectively. Panel (d): Same as Panel (b) but for the standardized LSDev sequence (i.e., for $\sigma^{j,n}$ in \eqref{eq:infogainstandardized}). All panels correspond to the LDA model in \eqref{eq:lthfloelocation}--\eqref{eq:flowfieldspectral}.}
    \label{fig:LDA_Ice_Floes_Fig_3}
\end{figure}

\subsection{Online Parameter Estimation of Nonlinear Systems with Only Partial Observations} \label{sec:4.3}

Parameter estimation and model identification for complex nonlinear dynamical systems are crucial for effective state estimation, DA, and forecasting. In partially observed nonlinear systems, parameter and state estimation typically occur simultaneously, requiring a solution that often necessitates complicated approximations and expensive numerical methods. Nevertheless, the closed analytic formulae of the online nonlinear smoother \eqref{eq:onlinerecursive1}--\eqref{eq:onlinerecursive2} facilitate an efficient online parameter estimation scheme for the CGNS \eqref{eq:condgauss1}--\eqref{eq:condgauss2} for when only the time series of $\mathbf{x}$ is observed. In this section, we incorporate the adaptive online smoother into an expectation-maximization (EM) algorithm \cite{sundberg1976iterative, ghahramani1996parameter, ghahramani1998learning, moon1996expectation} to alternately estimate parameters and hidden states, employing closed analytic formulae for both estimations.

\subsubsection{The Online EM Parameter Estimation Algorithm} \label{sec:4.3.1}

The basic EM algorithm is outlined in Appendix \ref{sec:app5}, and the mathematical details for the CGNS framework can be found in \cite{chen2020learning}. This reference serves as the foundation for developing an online version that utilizes the adaptive-lag online smoother for simultaneous state estimation and model identification, and also includes the tools necessary for incorporating into this algorithm's skeleton extensions like block decomposition for learning high-dimensional systems, sparse identification or regularization, and physical and other constraints on the parameters.

The parameter estimation aims at seeking an optimal estimation of the unknown parameters $\boldsymbol{\theta}$ by maximizing the log-likelihood function corresponding to the partially observed dynamics, also known as the marginal, partial, or conditional log-likelihood. Essentially, our goal is for model identification under the assumption that only a time series from the observable components $\vx$ can be obtained, with $\vy$ corresponding to the unobserved variables, which nevertheless interacts with $\vx$ in \eqref{eq:condgauss1}--\eqref{eq:condgauss2}. Since only the time series of $\vx$ is observed, the maximum likelihood estimator (MLE) for the marginal log-likelihood is given by
\begin{equation} \label{eq:mledef}
    \hat{\boldsymbol{\theta}}_{\text{MLE}}:=\underset{\boldsymbol{\theta}\in\Theta}{\mathrm{argmax}} \left\{\log\big(p(\vx;\boldsymbol{\theta})\big)\right\}=\underset{\boldsymbol{\theta}\in\Theta}{\mathrm{argmax}} \left\{\log\left(\int_{\vy}p\big(\vx,\vy;\boldsymbol{\theta}\big)\d\vy\right)\right\},
\end{equation}
where $\Theta$ denotes the underlying parameter space, which we assume to be a closed and convex subset of $\mathbb{C}^N$ for $N\in\mathbb{N}$, and the integration in \eqref{eq:mledef}, i.e., marginalization, is happening over the state space of $\vy$. The EM iteration alternates between performing the so-called expectation step (E-step), which estimates the hidden state of $\vy$ using the current estimate for the parameters, and a maximization step (M-step), which computes the parameters that maximize the expected marginal log-likelihood found in the E-step \cite{ghahramani1996parameter, dembo1986parameter, kokkala2014expectation}. Since $\vy$ is unobserved throughout time, the conditional expectation in the E-step needs to be taken for those measurable functions (measurable with respect to the current observable $\sigma$-algebra) containing $\vy$ when computing the log-likelihood estimate, which takes into account the uncertainty in the state estimation through the optimal a-posteriori statistics of $\vy$. As is shown in Appendix \ref{sec:app5}, the solution from the E-step is exactly the online smoother posterior distribution.

\subsubsection{A Numerical Example} \label{sec:4.3.2}

The model we study in this numerical simulation is as follows:
\begin{align}
    \d u(t)&=\big(-d_uu(t)+\gamma u(t)v(t)+F_u\big)\d t+\sigma_u\d W_u(t), \label{eq:emdyadmodel1}\\
    \d v(t)&=\big(-d_vv(t)-\gamma u(t)^2+F_v\big)\d t+\sigma_v\d W_v(t),\label{eq:emdyadmodel2}
\end{align}
where $d_u,d_v,\sigma_u,\sigma_v>0$ and $\gamma,F_u,F_v\in\rr$, with $W_u$ and $W_v$ being two mutually independent Wiener processes. The system \eqref{eq:emdyadmodel1}--\eqref{eq:emdyadmodel2} fits the CGNS framework for $\vx:=u$ and $\vy=v$. Similar to the model in \eqref{eq:dyad1}--\eqref{eq:dyad2} of Section \ref{sec:4.1}, the variable $v$ acts as a stochastic damping in the observable process, which leads to intermittent extreme events. The effect of these instabilities in the convergence of the algorithm is especially studied. We assume a time series of $u$ is continuously observed while there is no observation of $v$. To simplify the study in this section, we assume that the noise feedbacks $\sigma_u$ and $\sigma_v$ are known, although our framework does not require such a restriction \cite{chen2020learning}. Thus, the parameter vector for this model is as follows:
\begin{equation} \label{eq:parameterdyad}
    \boldsymbol{\theta}:=(d_u,\gamma,F_u,d_v,F_v)^\tran.
\end{equation}
The true parameter values are the following:
\begin{gather*}
    d_u = 1, \quad \gamma=3, \quad F_u = 1, \quad \sigma_u=0.5, \\
    d_v = 1, \quad F_v=0.2, \quad \sigma_v=1.
\end{gather*}

The numerical setup is as follows. The initial guess of the parameter values is $\boldsymbol{\theta}_0=(2,6,2,0.5,0.6)^\tran$. The numerical integration time step, which is also the observational frequency in such a continuous observational case, is $\dt=0.001$. The total length of observations is $T=200$ time units. Since implementing the online smoother requires a certain history of previous state estimation solutions, the observed time series of $u$ within a short initial window $[0,T_{\text{ini}}]$, with $T_{\text{ini}}=10$, is utilized to initially carry out the filtering solution and online smoother estimates, based on $\boldsymbol{\theta}_0$. This can be treated as a burn-in or learning period for the algorithm, and also aids in avoiding the possibility of short-time blowup. Afterwards, with the arrival of each new observation, we update the adaptive-lag online smoother state estimations via the procedure in \eqref{eq:adaptivelagmean}--\eqref{eq:adaptivelagcov} and then re-estimate the parameters using the MLE. The update happens at every observation to ensure stability. The hyperparameters in the adaptive-lag method are $b=1000$ (where $b\dt=1$ time unit) and $\delta=10^{-4}$, where we use the actual sequence of information gains to define the adaptive lag at each new observation, i.e., \eqref{eq:adaptlaginfodeforiginalseq}. For a brief discussion on the online EM parameter estimation algorithm's stability, sensitivity, and convergence skill with respect to the initial parameter values and length of the burn-in or learning period (for this specific model), through a purely empirical and simplified analysis, the interested reader is referred to Appendix \ref{sec:app6}.

Figure \ref{fig:EM_Parameter_Estimation_Fig} presents the results from using the online EM algorithm. Panels (a) and (b) display the true trajectories of the observed and unobserved variables, $u$ and $v$, respectively, displayed in blue. These trajectories are generated by running equations \eqref{eq:emdyadmodel1}--\eqref{eq:emdyadmodel2} forward in time. Alongside the true trajectories, the time series obtained by simulating the model with the estimated parameters at the final iteration of the online EM algorithm are included in red. Note that the same values for the Wiener processes are used for these two signals, allowing us to have a point-wise comparison. The recovery of the signal for both observable and unobserved variables is notably skillful. It is worth highlighting that, in Panel (b), the dashed horizontal lines indicate the true (in black) and online EM estimated (in green) anti-damping threshold lines, above which the unobserved variable effectively acts as anti-damping in the observable process, initiating intermittent instabilities in the signal. Their consistency is critical for retrieving the true dynamics using the estimated parameters. Panel (c) shows the adaptive lag values, measured in time units (i.e., $L_n\dt$) for the online smoother, illustrating that longer lags are typically required during the onset of the intermittent phases to effectively capture the dynamics. The efficacious recovery of the true state is also demonstrated through higher-order statistics, such as the PDFs and autocorrelation functions (ACFs) for both $u$ (in Panels (d)--(e)) and $v$ (in Panels (f)--(g)). The use of an upper lag bound of $b\Delta t=1$ time unit is enough to capture most of the information in the online smoother updates while not overspending on storage (for the chosen tolerance value of $\delta=10^{-4}$). This is showcased in the produced PDFs and ACFs, which indicate faithful model fidelity and memory recovery, respectively, where for the latter, this is true even for time-lags outside the interval of the produced adaptive lag values. Finally, Panels (h)--(m) present the trace plots produced by the online EM iterations for various parameters of interest.  The parameters in the observed process $u$ are nearly perfectly recovered, including the nonlinear feedback $\gamma$ and the deterministic constant forcing $F_v$ in the process of the unobserved variable. The only parameter that is not perfectly recovered is the damping $d_v$ in the process of the unobserved variable, with the algorithm converging to twice its true value. This is expected, as these parameters, along with the noise feedback, are typically the hardest to estimate due to observability issues, where their contribution to the observed process is relatively weak. Nevertheless, this inaccuracy does not hinder the model with the estimated parameters from reproducing key dynamical and statistical features, as shown in Panels (a)--(b) and (d)--(g).

An important observation from the trace plots is that extreme events in the observable signal enhance the convergence skill of the online EM algorithm within the CGNS framework. In Panel (a), four extreme events are marked as A, B, C, and D, characterized by a high signal-to-noise ratio. The trace plots in Panels (h)--(m) demonstrate that extreme events influence parameter convergence. The high signal-to-noise ratio mitigates the uncertainty in observations, resulting in strong observability of the system, which naturally facilitates parameter estimation \cite{chen2023causality}. This is particularly evident in the anti-damping threshold (Panel (m)), highlighting the role of these events in identifying the intermittent nature of the underlying dynamics.

\begin{figure}[!ht]
\centering
    \includegraphics[width=\textwidth]{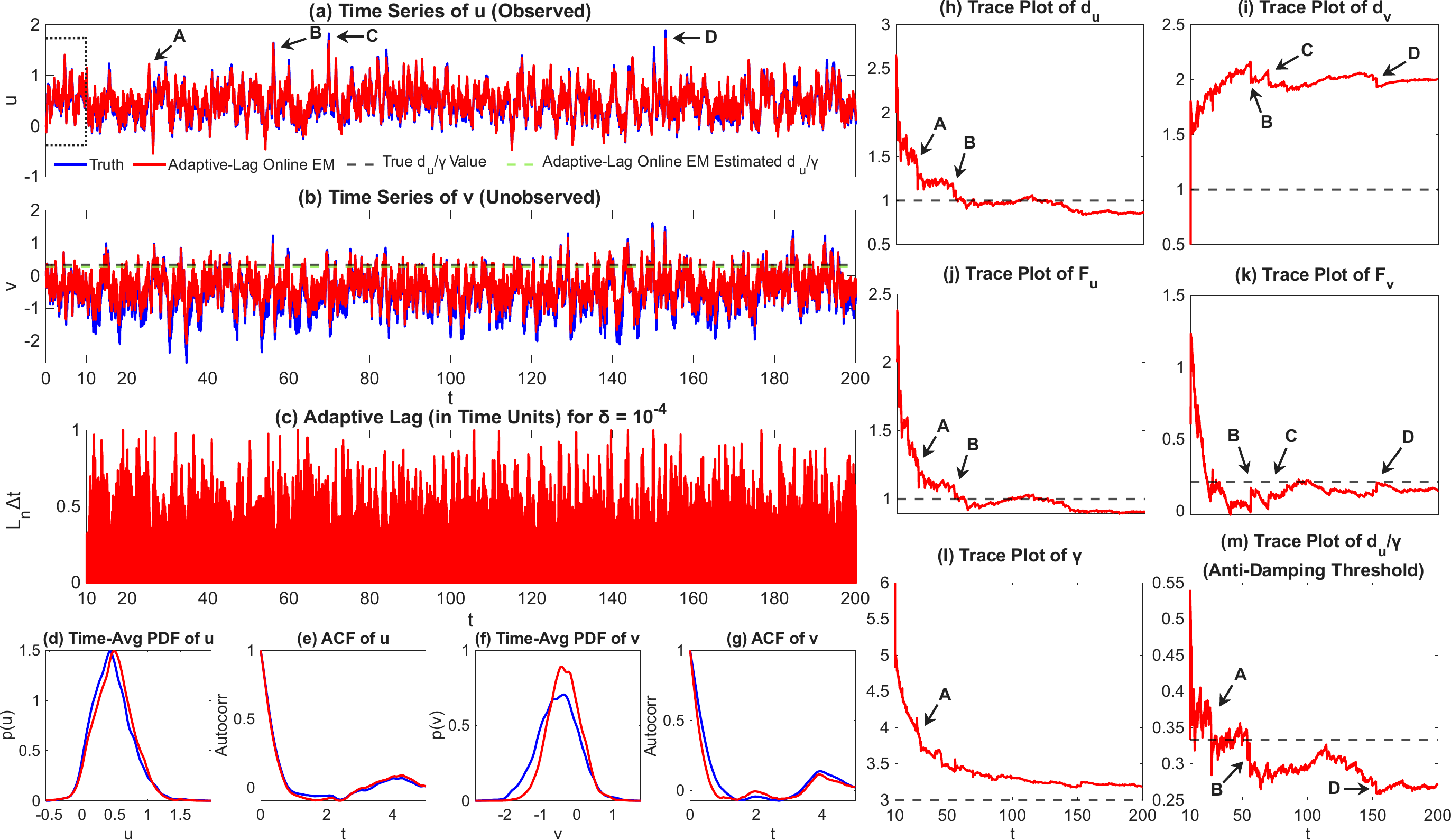}
    \caption{Panel (a): True trajectory of the observable variable $u$ over the simulation period (in blue), and the recovered trajectory from using the same white noise values used to generate the true signal (by utilizing the same seed number) but where the model in \eqref{eq:emdyadmodel1}--\eqref{eq:emdyadmodel2} is instead defined through the recovered parameter values at the last iteration of the online EM algorithm (in red). The dashed rectangular box denotes the period of running the filter and online smoother algorithms with $\boldsymbol\theta_0$, which is $[0,10]$. Four extreme events are marked in the observed time series (denoted by A, B, C, and D). Panel (b): Same as (a) but for the unobserved variable $v$. The true and online EM estimated anti-damping threshold $d_u/\gamma$ are marked by dashed lines, in black and green, respectively. Panel (c): Adaptive lag values generated using \eqref{eq:adaptlaginfodeforiginalseq} (measured in time units, i.e., $L_n\dt$), as a function of simulation time. Panels (d)--(g): PDFs of $u$ (Panel (d)) and $v$ (Panel (f)) and their ACFs (Panels (e) and (g), respectively), for both the true model parameters and the ones obtained at the last iteration of the online EM algorithm. Panels (h)--(m): Trace plots produced by the online EM algorithm.}
    \label{fig:EM_Parameter_Estimation_Fig}
\end{figure}

For the sake of completeness, in Figure \ref{fig:EM_Parameter_Estimation_Fig_Appendix}, we provide the results from instead using the fixed-lag online discrete smoother in the outlined online EM parameter estimation algorithm, when applied to this specific model identification problem. See Appendix \ref{sec:app7} for more info and a discussion comparing the fixed- and adaptive-lag online EM parameter estimation results.

\section{Conclusions} \label{sec:5}

In this paper, a computationally efficient algorithm for online adaptive-lag optimal smoother-based state estimation with partial observations is developed. Closed analytical formulae are available for this online smoother. It applies to CGNS, representing a rich class of CTNDSs with wide applications in neuroscience, ecology, atmospheric science, geophysics, and many other fields. Importantly, the adaptive-lag strategy allows the reduction of computational complexity and storage requirements, creating the potential for the algorithm to significantly outperform both offline smoothing and its fixed-lag counterpart operationally while achieving a comparable skill in state estimation. Notably, an information-theoretic criterion, exploiting uncertainty reduction across the entire posterior distribution, is developed to calculate the adaptive lag value, which helps in preserving dynamical and statistical consistency.

The adaptive-lag online smoother has been applied to a system with intermittent instabilities (Section \ref{sec:4.1}). It effectively recovers extreme events and discovers the causal dependence between different state variables. It is also applicable to high-dimensional systems, such as the LDA problem. The state estimation based on such an online smoother shows nearly the same accuracy as the standard forward-backward smoother while requiring significantly less computational storage under suitable choices of the algorithm parameters (Section \ref{sec:4.2}). In addition, the online adaptive-lag smoother facilitates real-time parameter estimation. The study highlights the importance of utilizing observed extreme events to accelerate the convergence of parameter estimation (Section \ref{sec:4.3}).

Potential future research topics in this direction are as follows. First, the closed-form expressions of the online smoother estimates can aid in assessing extreme events and their impact on the CGNS dynamics, as well as in identifying the most sensitive directions that exhibit rapid fluctuations during specific events using information-theoretic methods \cite{majda2010quantifying}. This is facilitated by the fact that the CGNS framework enjoys analytic formulae for such conditional distributions \cite{liptser2001statisticsII}. Second, the adaptive-lag online smoother can serve as a useful tool for causal inference and information flow analysis. On the one hand, one can exploit intermittent extreme events to improve state estimation, which helps reduce the model error in developing parsimonious surrogate models via causal inference. On the other hand, the conditional Gaussianity of the framework allows the application of transfer or causation entropy to explore these topics in future studies. Third, it is necessary to explicitly investigate how the adaptive-lag online smoother has the potential to outperform the fixed-lag alternative by uniformly reducing storage needs and computational complexity in CGNSs. Such a study will involve the derivation of  theoretical or empirical representations of these operational metrics, as well as of point-wise or distribution-based error measures during state estimation, as functions of the fixed lag and adaptive-lag tolerance parameter $\delta$ (for a fixed upper lag bound $b$). Finally, future work along this line also involves developing a rigorous investigation of the spectral properties of the update tensors $\mathbf{D}^{j,n-2}$ used in the online smoother, as well as of the matrices $\mathbf{E}^j$ that define them, potentially through suitable concentration inequalities. This includes studying the temporal behavior of the information-theoretic criterion for adaptive lags, $\mathcal{P}\left(p^{j,n}_{\text{updated}}, p^{j,n}_{\text{lagged}}\right)$ (and its LSDev). Numerical case studies have shown that these quantities showcase exponentially varying behavior over time, with the rate depending on whether a quiescent or intermittent signal is observed. Notably, the spectrum of $\mathbf{D}^{j,n-2}$ remains below the convergence threshold of $1$, which is reassuring. A rigorous analysis of these properties is a natural future topic.

\section*{Acknowledgments}

The authors thank the reviewers for their constructive feedback and valuable suggestions during the peer-review process.\\

\noindent N.C. is grateful to acknowledge the support of the Office of Naval Research (ONR) N00014-24-1-2244 and Army Research Office (ARO) W911NF-23-1-0118. M.A. is supported as a research assistant under these grants.

\begin{appendices} \label{sec:app}

\numberwithin{equation}{section}
\counterwithin{figure}{section}

\section{Sufficient Assumptions for the Results of the CGNS Framework} \label{sec:app1}

Let the following denote the standard elements of the vector- and matrix-valued functionals appearing in \eqref{eq:condgauss1}--\eqref{eq:condgauss2} as model parameters:
\begin{gather*}
        \vf^\vx(t,\vx):=\big(f^\vx_1(t,\vx),\ldots,f^\vx_k(t,\vx)\big)^\mathtt{T},\quad\vf^\vy(t,\vx):=\big(f^\vy_1(t,\vx),\ldots,f^\vy_l(t,\vx)\big)^\mathtt{T},\\
        \ml^\vx(t,\vx):=\big(\Lambda^\vx_{ij}(t,\vx)\big)_{k\times l},\quad \ms_1^\vx(t,\vx):=\big(\Sigma^{\vx,1}_{ij}(t,\vx)\big)_{k\times d},\quad \ms_2^\vx(t,\vx):=\big(\Sigma^{\vx,2}_{ij}(t,\vx)\big)_{k\times r},\\
    \ml^\vy(t,\vx):=\big(\Lambda^\vy_{ij}(t,\vx)\big)_{l\times l},\quad\ms_1^\vy(t,\vx):=\big(\Sigma^{\vy,1}_{ij}(t,\vx)\big)_{l\times d},\quad\ms_2^\vy(t,\vx):=\big(\Sigma^{\vy,2}_{ij}(t,\vx)\big)_{l\times r}.
\end{gather*}
To be able to obtain the main results which define the CGNS framework and its potency in DA through closed-form expressions for the posterior statistics, a set of sufficient regularity conditions needs to be assumed a-priori. In what follows, each respective pair of indices $i$ and $j$ takes all admissible values and $\vz$ is a $k$-dimensional function in $C^0\big([0,T];\mathbb{C}^k\big)$:
\begin{itemize}
    \item[\textbf{\blue{(1)}}] We assume:
    \begin{align*}
        \int_0^T \Big[&\big|f^\vy_i(t,\vz)\big|+\big|\Lambda^\vy_{ij}(t,\vz)\big|+\big| f^\vx_i(t,\vz)\big|^2+\big|\Lambda^\vx_{ij}(t,\vz)\big|^2 \\
        & +\big|\Sigma^{\vx,1}_{ij}(t,\vz)\big|^2+\big|\Sigma^{\vx,2}_{ij}(t,\vz)\big|^2+\big|\Sigma^{\vy,1}_{ij}(t,\vz)\big|^2+\big|\Sigma^{\vy,2}_{ij}(t,\vz)\big|^2\Big]\rmd t<+\infty.
    \end{align*}
    This ensures the existence of the integrals in \eqref{eq:condgauss1}--\eqref{eq:condgauss2} \cite{liptser2001statisticsI, liptser2001statisticsII}.

    \item[\textbf{\blue{(2)}}] $\displaystyle \big|\Lambda^\vx_{ij}(t,\vz)\big|,\big|\Lambda^\vy_{ij}(t,\vz)\big|\leq L_1$ for some (uniform) $L_1>0$, $\forall t\in[0,T]$.

    \item[\textbf{\blue{(3)}}] If $g(t,\vz)$ denotes any element of the multiplicative factors in the drift dynamics, $\ml^\vx$ and $\ml^\vy$, or of the noise feedback matrices, $\ms_m^\vx(t,\vz)$ and $\ms_m^\vy(t,\vz)$ for $m=1,2$, and $K(s)$ is a nondecreasing right-continuous function taking values in $[0,1]$, then there $\exists L_2,L_3,L_4,L_5>0$ such that for any $\vz$ and $\mathbf{w}$ $k$-dimensional functions in $C^0\big([0,T];\mathbb{C}^k\big)$ we have:
    \begin{gather*}
        |g(t,\vz)-g(t,\mathbf{w})|^2\leq L_2\int_0^t \left\|\vz(s)-\mathbf{w}(s)\right\|_2^2\rmd K(s)+L_3\left\|\vzt-\mathbf{w}(t)\right\|_2^2, \ \forall t\in[0,T],\\
        |g(t,\vz)|^2\leq L_4\int_0^t \big(1+\left\|\vz(s)\right\|_2^2\big)\rmd K(s)+L_5\big(1+\left\|\vzt\right\|_2^2\big), \ \forall t\in[0,T].
    \end{gather*}
    These integrals are to be understood in the Lebesgue--Stieltjes integration sense.

    \item[\textbf{\blue{(4)}}] $\mathbb{E}\big[\|\vx(0)\|_2^{2}+\|\vy(0)\|_2^{2}\big]<+\infty$.

    \item[\textbf{\blue{(5)}}] The sum of the Gramians (with respect to the rows) of the noise coefficient matrices in the observable process are uniformly nonsingular, i.e., the elements of the inverse of
    \begin{equation*}
        (\ms^\vx\circ\ms^\vx)(t,\vz)=\ms_1^\vx(t,\vz)\ms_1^\vx(t,\vz)^\dagger+\ms_2^\vx(t,\vz)\ms_2^\vx(t,\vz)^\dagger,
    \end{equation*}
    are uniformly bounded in $[0,T]$.

    \item[\textbf{\blue{(6)}}] $\displaystyle \int_0^T \ee{\big|\Lambda^\vx_{ij}(t,\vzt)y_j(t)\big|}\rmd t< +\infty$.

    \item[\textbf{\blue{(7)}}] $\displaystyle \mathbb{E}\Big[\big|y_j(t)\big|\Big]<+\infty, \ t\in[0,T]$.

    \item[\textbf{\blue{(8)}}] For $\mu_{\text{f},j}(t):=\mathbb{E}\big[y_j(t)\big|\cF_t^\vx\big]$, where $t\in[0,T]$ and $j=1,\ldots,l$, we assume that:
    \begin{equation*}
        \pp\left(\int_0^T\big|\Lambda^\vx_{ij}(t,\vzt)\mu_{\text{f},j}(t)\big|^2\rmd t<+\infty\right)=1,
    \end{equation*}
    for all $i=1,\ldots, k$.
    
    \item[\textbf{\blue{(9)}}] For $\vz\in C^0\big([0,T];\mathbb{C}^k\big)$, $\mathbf{u}\in C^0\big([0,T];\mathbb{C}^l\big)$, and $\mathbf{f}:=\begin{pmatrix}
            \mathbf{0}_{k\times k} & \ml^\vx \\
            \mathbf{0}_{l\times k} & \ml^\vy
        \end{pmatrix}\begin{pmatrix}
            \vx\\\vy
        \end{pmatrix}+\begin{pmatrix}
            \vf^\vx\\\vf^\vy
        \end{pmatrix}$, $\ms:=\begin{pmatrix}
            \ms_1^\vx & \ms_2^\vx\\
            \ms_1^\vy & \ms_2^\vy
        \end{pmatrix}$, there exists $L_6>0$ such that
        \begin{equation*}
            \left\|\mathbf{f}(t,\vz,\mathbf{u})-\mathbf{f}(s,\vz,\mathbf{u})\right\|_2+\left\|\ms(t,\vz)-\ms(s,\vz)\right\|_2\leq L_6\left(1+\big(\left\|\vz\right\|^2_2+\left\|\mathbf{u}\right\|^2_2\big)^{1/2}\right)|t-s|^{1/2}, \ \forall t,s\in[0,T].
        \end{equation*}

    \item[\textbf{\blue{(10)}}] Let $\big(\vx_{\dt}(t), \vy_{\dt}(t)\big)$ be the stochastic discrete-time approximation of $(\vx,\vy)$, after a continuous-time extension, that is provided by the numerical integration Euler--Maruyama scheme when applied on the CGNS in \eqref{eq:condgauss1}--\eqref{eq:condgauss2} (see \eqref{eq:discretecondgauss1}--\eqref{eq:discretecondgauss2}), over a temporal uniform partition of $[0,T]$ with $\dt$ being the uniform subinterval length. Then, as in Andreou \& Chen \cite{andreou2024martingale}, we assume that there exists $L_7>0$ such that:
    \begin{equation*}
        \ee{\big\|\vx(0)-\vx_{\dt}(0)\big\|^2_2+\big\|\vy(0)-\vy_{\dt}(0)\big\|^2_2}\leq L_7\dt.
    \end{equation*}
    In other words, we assume that the exact initial distributions are not being severely misidentified by the discrete-time ones (at least in the mean).

    \item[\textbf{\blue{(11)}}] $\displaystyle \int_0^T \ee{\big|\ml_{ij}^\vx(t,\vx)\big|^4+\big|\vf_i^\vy(t,\vx)\big|^4+\big|\Sigma^{\vy,1}_{ij}(t,\vx)\big|^4+\big|\Sigma^{\vy,2}_{ij}(t,\vx)\big|^4}\rmd t<+\infty$.

    \item[\textbf{\blue{(12)}}] $\displaystyle \ee{\left\|\vy(0)\right\|_2^4}<+\infty$.

    \item[\textbf{\blue{(13)}}] As with \textbf{\blue{(10)}}, by adopting a uniform temporal discretization of $[0,T]$, with $\dt=T/J$ being its norm and $t_j=j\dt$ for $j=0,1,\ldots,J$ being its nodes, where $J\in\mathbb{N}$, we assume that
    \begin{equation*}
        a_{\dt}(M):=\underset{\theta>0}{\inf}\Bigg\{\mathbb{E}\Bigg[e^{\theta\big(M\big(1+\underset{0\leq j \leq J}{\max}\big\{\big\|\ms_1^{\vy,j}\big\|_2^2+\big\|\ms_2^{\vy,j}\big\|_2^2\big\}\big)-1/\dt\big)}\Bigg]\Bigg\}=o(\dt),
    \end{equation*}
    as $\dt\to0^+$, for every $M=M(\omega)$ with $\pp(M<+\infty)=1$.
    
\end{itemize}
For more details about the regularity conditions in \textbf{\blue{(1)}}--\textbf{\blue{(13)}}, including possible relaxations, see Andreou \& Chen \cite{andreou2024martingale}. Most importantly, in the context of this work, these assumptions are sufficient for establishing the CGNS optimal online forward-in-time discrete smoother from Theorem \ref{thm:onlinesmoother}.

\section{Proof of the Optimal Online Forward-In-Time Discrete Smoother} \label{sec:app2}

In this appendix we outline the details of the proof to Theorem \ref{thm:onlinesmoother} which outlines the procedure for the online forward-in-time discrete smoother for optimal state estimation of the unobserved processes.

\begin{proof}[\textbf{\underline{Proof of Theorem \ref{thm:onlinesmoother}}}] From the derivations in Andreou \& Chen \cite{andreou2024martingale} concerning the discrete smoother, for $n\in\mathbb{N}$ and $j=0,1,\ldots,n-1$, we have that
\begin{align}
    \vm{\normalfont{s}}^{j,n}&=\vm{\nf}^j+\mathbf{E}^j\big(\vm{\normalfont{s}}^{j+1,n}-(\mathbf{I}_{l\times l}+\ml^{\vy,j}\dt)\vm{\nf}^j-\vf^{\vy,j}\dt\big)+\mathbf{F}^j\big(\vx^{j+1}-\vx^{j}-(\ml^{\vx,j}\vm{\nf}^j+\vf^{\vx,j})\dt\big)= \mathbf{E}^j\vm{\normalfont{s}}^{j+1,n}+\mathbf{b}^j, \label{eq:app2aux1} \\
    \mr{\normalfont{s}}^{j,n}&=\mr{\nf}^j-\mathbf{C}_{j+1}^j\mathbf{C}^j_{22}(\mathbf{C}_{j+1}^j)^\dagger+\mathbf{E}^j\mr{\normalfont{s}}^{j+1,n}(\mathbf{E}^j)^\dagger=\mathbf{E}^j\mr{\normalfont{s}}^{j+1,n}(\mathbf{E}^j)^\dagger+\mathbf{P}_{j+1}^j, \label{eq:app2aux2}
\end{align}
where
\begin{align*}
    \mathbf{b}^{j}&:=\vm{f}^{j}-\mathbf{E}^{j}\big((\mathbf{I}_{l\times l}+\ml^{\vy,j}\dt)\vm{\nf}^{j}+\vf^{\vy,j}\dt\big)+\mathbf{F}^{j}\big(\vx^{j+1}-\vx^{j}-(\ml^{\vx,j}\vm{\nf}^{j}+\vf^{\vx,j})\dt\big),\\
    \mathbf{P}^{j}_{j+1}&:= \mr{f}^{j}-\mathbf{E}^{j}(\mathbf{I}_{l\times l}+\ml^{\vy,j}\dt)\mr{f}^{j}-\mathbf{F}^{j}\ml^{\vx,j}\mr{f}^{j}\dt, \label{eq:onlineauxiliary4}
\end{align*}
which immediately shows that \eqref{eq:onlineauxiliary1} and \eqref{eq:onlineauxiliary2} hold by plugging in $j=n-1$ into these, since the smoother and filter estimates are the same at the end point, i.e., $\vm{s}^{n,n}=\vm{f}^n$ and $\mr{s}^{n,n}=\mr{f}^n$. As such, we just need to prove \eqref{eq:onlinerecursive1} and \eqref{eq:onlinerecursive2}.

We start by proving the following relation for $n\in\mathbb{N}$ and $j=0,1,\ldots,n-1$:
\begin{equation} \label{eq:app2aux3}
    \mathbf{D}^{j,n-1}=\mathbf{E}^j\mathbf{E}^{j+1}\cdots\mathbf{E}^{n-1}=:\overset{\mathlarger{\curvearrowright}}{\prod^{n-1}_{i=j}} \mathbf{E}^i=\mathbf{E}^j\mathbf{D}^{j+1,n-1}.
\end{equation}
The curved arrow pointing to the right above the product symbol in \eqref{eq:app2aux3} indicates the order or direction with which we expand said product, since we are dealing with matrices, thus making the product, in general, noncommutative. Also, note that \eqref{eq:app2aux3} is able to encompass both the second and third relations of \eqref{eq:updatematrix1}, but not the first one due to its irregular form. Nonetheless, it could be extraneously incorporated in \eqref{eq:app2aux3} by noting that if $j=n$, and so the product on the right-hand side is ``empty", then $\mathbf{D}^{n,n-1}$ is trivially defined as the identity matrix $\mathbf{I}_{l\times l}$.

For $n\in\mathbb{N}_{\geq 2}$ and $j=0,1,\ldots,n-1$, we have from iterative application of the appropriate relation in \eqref{eq:updatematrix1} (the second and third relations are appropriately applied in what follows, possibly after some reindexing, without loss of generality), that
\begin{align*}
    \mathbf{D}^{j,n-1}=\mathbf{D}^{j,n-2}\mathbf{E}^{n-1}&=\begin{cases}
        \mathbf{E}^{n-1}, & \text{ for } j=n-1\\
        \mathbf{E}^{n-2}\mathbf{E}^{n-1}, & \text{ for } j=n-2\\
        \mathbf{D}^{j,n-3}\mathbf{E}^{n-2}\mathbf{E}^{n-1}, & \text{ for } j<n-2=\begin{cases}
        \mathbf{E}^{n-3}\mathbf{E}^{n-2}\mathbf{E}^{n-1}, & \text{ for } j=n-3\\
        \mathbf{D}^{j,n-4}\mathbf{E}^{n-3}\mathbf{E}^{n-2}\mathbf{E}^{n-1}, & \text{ for } j<n-3
    \end{cases}
    \end{cases}\\
    &=\cdots=\mathbf{D}^{j,j}\mathbf{E}^{j+1}\cdots\mathbf{E}^{n-3}\mathbf{E}^{n-2}\mathbf{E}^{n-1}\equiv \overset{\mathlarger{\curvearrowright}}{\prod^{n-1}_{i=j}} \mathbf{E}^i,
\end{align*}
which proves the relation on the left-hand side of \eqref{eq:app2aux3}. Similarly, from the previous result and \eqref{eq:updatematrix1} (the first relation), we have
\begin{equation*}
    \mathbf{E}^{j}\mathbf{D}^{j+1,n-1}=\begin{cases}
        \mathbf{E}^{j},& \text{ for } j+1=n\\
        \mathbf{E}^{j}\overset{\mathlarger{\curvearrowright}}{\prod^{n-1}_{i=j+1}} \mathbf{E}^i,& \text{ for } j+1\leq n-1<n
    \end{cases}\equiv \overset{\mathlarger{\curvearrowright}}{\prod^{n-1}_{i=j}} \mathbf{E}^i,
\end{equation*}
which is the same as the right-hand side of \eqref{eq:app2aux3}. For $n=1$, and so $j=0$ necessarily, we have
\begin{equation*}
    \mathbf{D}^{0,0}=\mathbf{E}^{0}=\overset{\mathlarger{\curvearrowright}}{\prod^{0}_{i=0}}\mathbf{E}^i,
\end{equation*}
by the second relation in \eqref{eq:updatematrix1}, while by the first relation in \eqref{eq:updatematrix1} we have
\begin{equation*}
     \mathbf{E}^0\mathbf{D}^{1,0}=\mathbf{E}^0=\overset{\mathlarger{\curvearrowright}}{\prod^{0}_{i=0}}\mathbf{E}^i,
\end{equation*}
and so by combining these two results proves \eqref{eq:app2aux3} also for $n=1$ and $j=0$. Collecting all these results together, establishes \eqref{eq:app2aux3} for $n\in\mathbb{N}$ and $j=0,\ldots,n-1$.

We proceed now by proving that for $n\in\mathbb{N}$ and $0\leq j\leq n-1$, the following equations are valid:
\begin{align}
    \vm{s}^{j,n}&=\mathbf{D}^{j,n-1}\vm{f}^n+\mathbf{b}^j+\sum_{r=j+1}^{n-1} \mathbf{D}^{j,r-1}\mathbf{b}^r, \label{eq:app2aux4}\\
    \mr{s}^{j,n}&=\mathbf{D}^{j,n-1}\mr{f}^n(\mathbf{D}^{j,n-1})^\dagger+\mathbf{P}_{j+1}^j+\sum_{r=j+1}^{n-1} \mathbf{D}^{j,r-1}\mathbf{P}_{r+1}^r(\mathbf{D}^{j,r-1})^\dagger. \label{eq:app2aux5}
\end{align}
We start with the equation for the smoother mean. We have by an iterative application of \eqref{eq:app2aux1} for $n\in\mathbb{N}$ and $j=0,\ldots,n-1$ that
\begin{align*}
    \vm{s}^{j,n}&=\mathbf{E}^j\underbrace{\vm{s}^{j+1,n}}_{\displaystyle \mathclap{\hspace{8.9cm}=\mathbf{E}^{j+1}\vm{s}^{j+2,n}+\mathbf{b}^{j+1}, \text{ if } j+1<n, \text{ otherwise equals } \vm{f}^{n}}}+\mathbf{b}^j\\
    \\
    &=\begin{cases}
        \mathbf{E}^j\mathbf{E}^{j+1}\smash[t]{\overbrace{\vm{s}^{j+2,n}}^{\displaystyle \mathclap{\hspace{8.8cm}=\mathbf{E}^{j+2}\vm{s}^{j+3,n}+\mathbf{b}^{j+2}, \text{ if } j+2<n, \text{ otherwise equals } \vm{f}^{n}}}}+\mathbf{E}^j\mathbf{b}^{j+1}+\mathbf{b}^j, & \text{ for } j<n-1\\
        \mathbf{E}^j\vm{f}^{n}+\mathbf{b}^j, & \text{ for } j=n-1
    \end{cases}
    \vphantom{
  \begin{cases}
    \overbrace{\vm{s}^{j+2,n}}^{\displaystyle \mathclap{\hspace{8cm}=\mathbf{E}^{j+2}\vm{s}^{j+3,n}+\mathbf{b}^{j+2}, \text{ if } j+2<n-2 \text{ otherwise equals } \vm{f}^{n}}} \\
    .
  \end{cases}}\\
  &=\cdots \ \left(\text{Likewise where at the } k\text{-th step we have } \vm{s}^{j+k,n}=\begin{cases}
      \mathbf{E}^{j+k}\vm{s}^{j+k+1,n}+\mathbf{b}^{j+k}, & \text{ for }j+k<n\\
      \vm{f}^{n,n}, & \text{ for } j+k=n
  \end{cases} \right) \\
  &\hspace{0.2cm}\vdots \hspace{0.8cm}\text{ (Inductively...)}\\
  &= \mathbf{E}^j\mathbf{E}^{j+1}\cdots \mathbf{E}^{n-1}\vm{f}^{n}+\mathbf{b}^j+\Bigg(\mathbf{E}^j\mathbf{b}^{j+1}+\mathbf{E}^j\mathbf{E}^{j+1}\mathbf{b}^{j+2}+\cdots+\Bigg(\overset{\mathlarger{\curvearrowright}}{\prod^{n-2}_{i=j}} \mathbf{E}^i\Bigg)\mathbf{b}^{n-1}\Bigg).
\end{align*}
Using now \eqref{eq:app2aux3} (possibly after some reindexing, without loss of generality), then this expression reduces down to
\begin{equation*}
    \vm{s}^{j,n}=\mathbf{D}^{j,n-1}\vm{f}^n+\mathbf{b}^j+\sum_{r=j+1}^{n-1} \mathbf{D}^{j,r-1}\mathbf{b}^r,
\end{equation*}
which is exactly what we wanted to prove for the smoother mean in \eqref{eq:app2aux4}. We now turn our attention to the smoother covariance matrix. In a very similar fashion, now by iteratively using \eqref{eq:app2aux2} for $n\in\mathbb{N}$ and $j=0,\ldots,n-1$, we get
\begin{align*}
    \mr{s}^{j,n}&=\mathbf{E}^j\underbrace{\mr{s}^{j+1,n}}_{\displaystyle \mathclap{\hspace{10.4cm}=\mathbf{E}^{j+1}\mr{s}^{j+2,n}(\mathbf{E}^{j+1})^\dagger+\mathbf{P}^{j+1}_{j+2}, \text{ if } j+1<n, \text{ otherwise equals } \mr{f}^{n}}}(\mathbf{E}^j)^\dagger+\mathbf{P}^j_{j+1}\\
    \\
    &=\begin{cases}
        \mathbf{E}^j\mathbf{E}^{j+1}\smash[t]{\overbrace{\mr{s}^{j+2,n}}^{\displaystyle \mathclap{\hspace{10.3cm}=\mathbf{E}^{j+2}\mr{s}^{j+3,n}(\mathbf{E}^{j+2})^\dagger+\mathbf{P}^{j+2}_{j+3}, \text{ if } j+2<n, \text{ otherwise equals } \mr{f}^{n}}}}(\mathbf{E}^{j+1})^\dagger(\mathbf{E}^{j})^\dagger+\mathbf{E}^j\mathbf{P}^{j+1}_{j+2}(\mathbf{E}^j)^\dagger+\mathbf{P}^j_{j+1}, & \text{ for } j<n-1\\
        \mathbf{E}^j\mr{f}^{n}(\mathbf{E}^j)^\dagger+\mathbf{P}^j_{j+1}, & \text{ for } j=n-1
    \end{cases}
    \vphantom{
  \begin{cases}
    \overbrace{\vm{s}^{j+2,n}}^{\displaystyle \mathclap{\hspace{8cm}=\mathbf{E}^{j+2}\vm{s}^{j+3,n}+\mathbf{b}^{j+2}, \text{ if } j+2<n-2 \text{ otherwise equals } \vm{f}^{n}}} \\
    .
  \end{cases}}\\
  &=\cdots \ \left(\text{Likewise where at the } k\text{-th step we have } \mathbf{R}_{\text{s}}^{j+k,n}=\begin{cases}
      \mathbf{E}^{j+k}\mathbf{R}_{\text{s}}^{j+k+1,n}(\mathbf{E}^{j+k})^\dagger+\mathbf{P}_{j+k+1}^{j+k}, & \text{ for }j+k<n\\
      \mathbf{R}_{\text{f}}^{n,n}, & \text{ for } j+k=n
  \end{cases} \right)\\
  &\hspace{0.2cm}\vdots \hspace{0.8cm}\text{ (Inductively...)}\\
  &= \mathbf{E}^j\mathbf{E}^{j+1}\cdots \mathbf{E}^{n-1}\mr{f}^{n,n}(\mathbf{E}^{n-1})^
  \dagger\cdots (\mathbf{E}^{j+1})^\dagger(\mathbf{E}^{j})^\dagger+\mathbf{P}_{j+1}^j\\
  &\hspace{2cm}+\Bigg(\mathbf{E}^j\mathbf{P}^{j+1}_{j+2}(\mathbf{E}^j)^\dagger+\mathbf{E}^j\mathbf{E}^{j+1}\mathbf{P}^{j+2}_{j+3}(\mathbf{E}^{j+1})^\dagger(\mathbf{E}^{j})^\dagger+\cdots+\Bigg(\overset{\mathlarger{\curvearrowright}}{\prod^{n-2}_{i=j}} \mathbf{E}^i\Bigg)\mathbf{P}^{n-1}_n\Bigg(\overset{\mathlarger{\curvearrowleft}}{\prod^{n-2}_{i=j}} (\mathbf{E}^i)^\dagger\Bigg)\Bigg).
\end{align*}
Like we did before, by using \eqref{eq:app2aux3} (possibly after some reindexing, without loss of generality), and by noticing that
\begin{equation*}
        (\mathbf{D}^{j,r-1})^\dagger=\Bigg(\overset{\mathlarger{\curvearrowright}}{\prod^{r-1}_{i=j}} \mathbf{E}^i\Bigg)^\dagger=\overset{\mathlarger{\curvearrowleft}}{\prod^{r-1}_{i=j}} (\mathbf{E}^i)^\dagger,
\end{equation*}
for $r=n,n-1,\ldots,j+1$, then the previous expression is equivalent to,
\begin{equation*}
    \mr{s}^{j,n}=\mathbf{D}^{j,n-1}\mr{f}^n(\mathbf{D}^{j,n-1})^\dagger+\mathbf{P}_{j+1}^j+\sum_{r=j+1}^{n-1} \mathbf{D}^{j,r-1}\mathbf{P}_{r+1}^r(\mathbf{D}^{j,r-1})^\dagger,
\end{equation*}
which is exactly what we wanted to prove for the smoother covariance in \eqref{eq:app2aux5}.

Now to verify the recursive formulae given in \eqref{eq:onlinerecursive1} and \eqref{eq:onlinerecursive2}, we use \eqref{eq:app2aux4} and \eqref{eq:app2aux5} by looking at the differences $\vm{s}^{j,n}-\vm{s}^{j,n-1}$ and $\mr{s}^{j,n}-\mr{s}^{j,n-1}$, respectively. First, observe that \eqref{eq:onlinerecursive1} and \eqref{eq:onlinerecursive2} trivially hold in the case where $n=1$, and as such we necessarily have $j=0,\ldots,n-1\Leftrightarrow j=0$. This is immediate by the first relation in \eqref{eq:updatematrix1} and the fact that $\vm{s}^{0,0}=\vm{f}^0$ and $\mr{s}^{0,0}=\mr{f}^0$. As such, we only consider the case where $n\in\mathbb{N}_{\geq 2}$ and $j=0,\ldots,n-1$. From \eqref{eq:app2aux4} we have,
\begin{equation*}
    \vm{s}^{j,n}-\vm{s}^{j,n-1}=\mathbf{D}^{j,n-1}\vm{f}^{n}+\mathbf{D}^{j,n-2}\mathbf{b}^{n-1}-\mathbf{D}^{j,n-2}\vm{f}^{n-1},
\end{equation*}
and by \eqref{eq:app2aux5} we get,
\begin{equation*}
    \mr{s}^{j,n}-\mr{s}^{j,n-1}=\mathbf{D}^{j,n-1}\mr{f}^n(\mathbf{D}^{j,n-1})^\dagger+\mathbf{D}^{j,n-2}\mathbf{P}_{n}^{n-1}(\mathbf{D}^{j,n-2})^\dagger-\mathbf{D}^{j,n-2}\mr{f}^{n-1}(\mathbf{D}^{j,n-2})^\dagger.
\end{equation*}
Replacing now the coefficient $\mathbf{D}^{j,n-1}$ by $\mathbf{D}^{j,n-2}\mathbf{E}^{n-1}$ via the third relation in \eqref{eq:updatematrix1}, since it holds for $n\in\mathbb{N}_{\geq 2}$, $j=0,\ldots,n-1$, for the smoother mean we have,
\begin{align*}
    \vm{s}^{j,n}&=\vm{s}^{j,n-1}+\mathbf{D}^{j,n-2}\Big(\underbrace{\mathbf{E}^{n-1}\vm{f}^{n}+\mathbf{b}^{n-1}}_{\displaystyle\mathclap{\hspace{4.2cm}=\vm{s}^{n-1,n}, \text{ because of } \eqref{eq:onlineauxiliary1}}}-\vm{f}^{n-1}\Big)\\
    &=\vm{s}^{j,n-1}+\mathbf{D}^{j,n-2}\left(\vm{s}^{n-1,n}-\vm{f}^{n-1}\right),
\end{align*}
and for the smoother covariance,
\begin{align*}
    \mr{s}^{j,n}&=\mr{s}^{j,n-1}+\mathbf{D}^{j,n-2}\Big(\underbrace{\mathbf{E}^{n-1}\mr{f}^{n}(\mathbf{E}^{n-1})^
    \dagger+\mathbf{P}_n^{n-1}}_{\displaystyle\mathclap{\hspace{4.2cm}=\mr{s}^{n-1,n}, \text{ because of } \eqref{eq:onlineauxiliary2}}}-\mr{f}^{n-1}\Big)(\mathbf{D}^{j,n-2})^\dagger\\
    &=\mathbf{R}_{\text{s}}^{j,n-1}+\mathbf{D}^{j,n-2}\left(\mathbf{R}_{\text{s}}^{n-1,n}-\mathbf{R}_{\text{f}}^{n-1}\right)(\mathbf{D}^{j,n-2})^\dagger,
\end{align*}
where by \eqref{eq:app2aux3} we have (after the trivial reindexing $n-1\leadsto n-2$, without loss of generality),
\begin{equation*}
    \mathbf{D}^{j,n-2}=\overset{\mathlarger{\curvearrowright}}{\prod^{n-2}_{i=j}} \mathbf{E}^i.
\end{equation*}
With these we have proved \eqref{eq:onlinerecursive1} and \eqref{eq:onlinerecursive2} for $n\in\mathbb{N}$ and $j=0,1,\ldots,n-1$. For $j=n$, by the fundamental filter-smoother relation, we have $\vm{s}^{n,n}=\vm{f}^n$, $\mr{s}^{n,n}=\mr{f}^n$. This finishes the proof of Theorem \ref{thm:onlinesmoother}.
\end{proof}

\section{Trade-off Analysis Between the Computational and Storage Complexity and Posterior Solution Skill for the Dyad Model in \eqref{eq:dyad1}--\eqref{eq:dyad2}} \label{sec:app3}

Figure \ref{fig:Dyad_Interaction_Fig_Appendix}, similar to Figure \ref{fig:LDA_Ice_Floes_Fig_2}, compares the computational time (in seconds) and storage requirements (in gigabytes) of the fixed- and adaptive-lag online smoothers against the NRMSE between their posterior mean time series and the true unobserved signal. As with Figure \ref{fig:LDA_Ice_Floes_Fig_2}, computational times in Panels (b), (e), and (g) are averaged over multiple runs to reduce external fluctuations, with corresponding panels being plotted in a similar manner. Details on how the storage and time values are calculated are given in Appendix \ref{sec:app4}. Comparing these with the computational behavior of the fixed- and adaptive-lag online smoothers in the LDA application (see Figure \ref{fig:LDA_Ice_Floes_Fig_2}), which is a model that lacks the intermittent instabilities that emerge in the dyad model, \eqref{eq:dyad1}--\eqref{eq:dyad2}, more intricate results arise.

Starting with the similarities between the two models, both the storage use (Panel (a), in red) and computational time (Panel (b), in red) of the fixed-lag online smoother show an algebraically increasing behavior with the fixed lag value. Furthermore, as is to be expected, since a constant upper lag bound parameter is used for all simulations of the adaptive-lag online smoother ($b\dt=3$ simulation time units), the storage use does not vary with decreasing $\delta$ (Panel (d), in red).

In contrast, the NRMSE of the fixed-lag method in the dyad model (Panels (a)--(b), in blue) roughly indicates an algebraic decrease with increasing fixed lag, instead of the exponential one in LDA (see Panels (a) and (b) of Figure \ref{fig:LDA_Ice_Floes_Fig_2}). With regards to the adaptive-lag online smoother's computational performance in time (Panel (e), in red), as well as its prediction skill formulated via the NRMSE (Panels (d)--(e), in blue), both are nontrivial functions of the tolerance parameter $\delta$. Specifically, with decreasing $\delta$, after a certain tolerance value, there is a regime switch from, roughly speaking, a sub-logarithmic increase for the former and algebraic to exponential decrease for the latter, to a logarithmic to algebraic increase for the former and algebraic decrease for the latter. In Panel (e), like the one in Figure \ref{fig:LDA_Ice_Floes_Fig_2}, we also show at each data point on the computational time curve the percentage of time spent on the adaptive-lag calculation (including the calculation of the update values $D^{j,n-2}$). Due to the one-dimensional observable and unobservable state spaces, the time consumed on this operation is much smaller than that in the LDA case study, but, because of this, the storage need reductions are only marginal in nature. These observations collectively indicate how, in such settings, the adaptive-lag algorithm has the potential to provide a significant reduction in computational complexity beyond its fixed-lag counterpart, while maintaining a skillful state estimation in pathwise error statistics (even when using a relatively inflated upper lag bound of $b\dt=3$).

The aforementioned results also emerge in the recovery skill of these algorithms, formulated through the temporally-averaged information gain beyond the filter solution, depicted in Panels (c) for fixed lag and (f) for adaptive lag. In these, the associated signal and dispersion parts are also plotted. (Note how in the dyad model, due to the intermittent extreme events, the dispersion is comparatively much smaller than the signal part.) In the former, we observe how the time-averaged information gain algebraically increases with respect to the fixed lag value, while in the latter we see a somewhat sub-logarithmic behavior with decreasing $\delta$ (after the aforementioned tolerance parameter threshold is met).

Finally, in Panel (g), the results from using the LSDev sequence to define the adaptive lag (i.e., \eqref{eq:adaptlaginfodef1}--\eqref{eq:adaptlaginfodef2}) are plotted, similar to Panel (e). Since, as observed from Panel (h) in Figure \ref{fig:Dyad_Interaction_Fig_2}, in this case study the extreme events do not allow for the standardized sequence of relative entropies or information gains in \eqref{eq:infogainstandardized} to behave in an overall exponential manner in $j$ uniformly over $n$, as in the LDA numerical experiment (see Panel (b) and (d) in Figure \ref{fig:LDA_Ice_Floes_Fig_3}), the LSDev approach showcases a slightly distinct behavior with respect to the tolerance parameter $\delta$ when compared to the approach which uses the original sequence (see Panel (e)). Specifically, there is a trade-off to be made; for larger NRMSE the use of the original sequence of relative entropies outperforms in terms of computational time, while for smaller ones, particularly for the apparent lower-bound limit at $0.6315$, the LSDev approach is able to terminate faster.

\begin{figure}[!ht]
\centering
    \includegraphics[width=\textwidth]{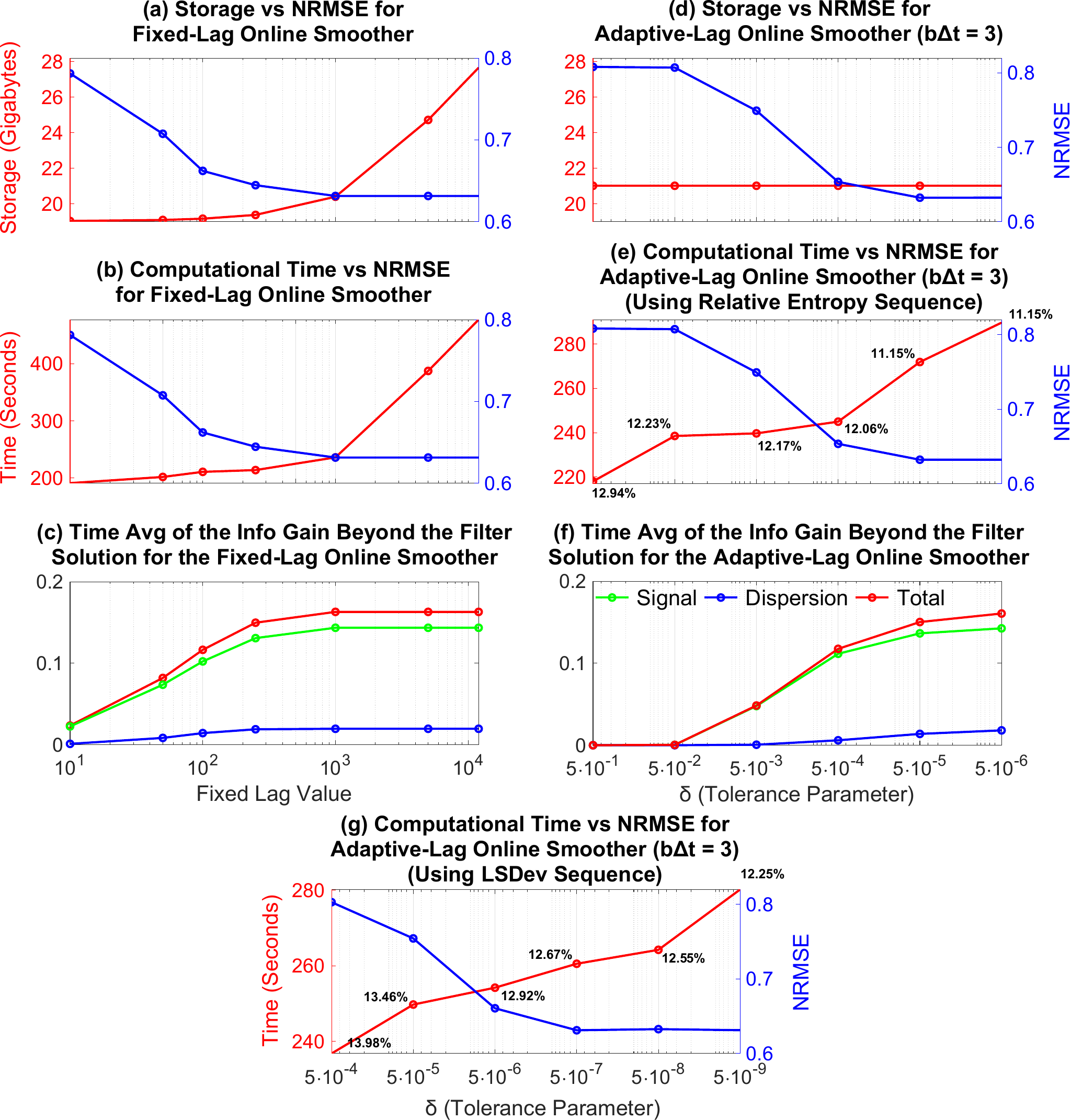}
    \caption{Panels (a)--(f): Same as Panels (a)--(f) of Figure \ref{fig:LDA_Ice_Floes_Fig_2}, but for the dyad-interaction model, \eqref{eq:dyad1}--\eqref{eq:dyad2}, from the case study in Section \ref{sec:4.1}. Panel (g): Same as Panel (e) but using the LSDev sequence to define the adaptive lags (per \eqref{eq:adaptlaginfodef1}--\eqref{eq:adaptlaginfodef2}).}
    \label{fig:Dyad_Interaction_Fig_Appendix}
\end{figure}

\section{Details of the Storage Value and Computational Time Calculation in Figures \ref{fig:LDA_Ice_Floes_Fig_2} and \ref{fig:Dyad_Interaction_Fig_Appendix}} \label{sec:app4}

The storage values (in gigabytes) depicted in Panels (a) and (d) of Figure \ref{fig:LDA_Ice_Floes_Fig_2} (and by extension in Panels (a) and (d) of Figure \ref{fig:Dyad_Interaction_Fig_Appendix}) are calculated by adding the storage required for storing each of the MATLAB nested cell array structures that are used to hold the online smoother mean vectors $\boldsymbol{\mu}_{\text{s}}^{j,n}$, online smoother covariance matrices $\mathbf{R}_{\text{s}}^{j,n}$, and the online smoother update matrices $\mathbf{D}^{j,n-2}$ (one nested cell array for each), for that run of the online smoother algorithm (Algorithm \ref{algo:onlinesmoother}). The outer cell array (of each nested cell array) is $1\times 1000$ in size, where $1000$ is the total number of observations in this numerical experiment. Each cell in the outer cell array holds another ($1 \times n$)-sized cell array, where $n$ is the ordinal number of the current observation, which itself, in each of its cells, holds the associated multidimensional quantity mentioned prior (vector or matrix, respectively). Finally, all of these storage values also include the storage requirements for the 2D matrix holding the filter mean vectors and 3D matrices holding the filter covariance matrices and $\mathbf{E}^j$ and $\mathbf{F}^j$ auxiliary matrices (one for each). For the adaptive-lag online smoother we also include the 2D array holding the adopted adaptive-lag-defining criterion, $\left\{ \mathcal{P}\left(p^{j,n}_{\text{updated}}, p^{j,n}_{\text{lagged}}\right)\right\}_{R_n\leq j\leq n-1}$, or $\{\sigma^{j,n}\}_{R_n\leq j\leq n-1}$.\medskip

We utilize nested cell arrays for these case studies to simulate jagged or ragged 3D- and 4D-arrays, since the numerical experiments are not being carried out on an HPC with GPU clusters (and since MATLAB does not natively support jagged multidimensional arrays). Depending on the fixed lag value $L$ for the fixed-lag online smoother and the upper lag bound $b$ for the adaptive-lag online smoother, only the necessary components that are required to carry out the associated online smoother updates are calculated and stored, and therefore added to the aforementioned storage values. While this adopted approach leads to an overcalculation of the computational time required, as MATLAB can struggle with accessing elements in a (nested) cell array (due to how they occupy place in memory), this method allows us to run these algorithms efficiently even for high-dimensional systems without the need for an HPC with GPU clusters.

As for the computational time values, both for the fixed- and adaptive-lag algorithms, the simulation time does not include the time required for the calculation of the filter statistics, of $\mathbf{E}^j$ and $\mathbf{F}^j$ (see \eqref{eq:discretesmootherauxmat1}--\eqref{eq:discretesmootherauxmat2}), and of the model parameters at each newly obtained observation. Those are assumed to already be calculated and stored in memory for quick access. Therefore, the computational times depicted in these figures only cumulate the time needed at each $n$ (i.e., each observation) to access said memory and to carry out the the calculations in Algorithm \ref{algo:onlinesmoother} for the $j$'s defined either by the fixed or adaptive lag value, where for the latter we also include the time required to determine it (i.e., calculate the information gain sequence (and possibly its LSDev) and then use \eqref{eq:adaptlaginfodeforiginalseq} (or \eqref{eq:adaptlaginfodef1})). Importantly, it does not include the time required to create and delete the aforementioned nested cell arrays that hold the online smoother Gaussian statistics and update matrices $\mathbf{D}^{j,n-2}$.

\section{Details of the Online Expectation-Maximization Algorithm for Parameter Estimation} \label{sec:app5}

In this appendix, we outline the basic learning EM algorithm, which sets the baseline structure for the online model identification algorithm developed in Section \ref{sec:4.3.1}. As usual, it is assumed that the coupled CGNS in \eqref{eq:condgauss1}--\eqref{eq:condgauss2} is only partially observed, and based on the time discretization setup adopted thus far in this work (see \eqref{eq:discretecondgauss1}--\eqref{eq:discretecondgauss2} in Section \ref{sec:3.1}), we denote the set of values of the state variables, when evaluated on the discrete-time steps $t_j$, as $\mathcal{X}=\{\vx^0,\ldots,\vx^j,\ldots,\vx^n\}$ and $\mathcal{Y}=\{\vy^0,\ldots,\vy^j,\ldots,\vy^n\}$, for the observable and unobservable components, respectively.

Given an ansatz of the model, the goal here is to maximize an objective function, specifically the log-likelihood of the observational process, with respect to the model parameters,
\begin{equation*}
\ell(\boldsymbol{\theta}):=\log\big(p(\mathcal{X};\boldsymbol{\theta})\big)=\log\left(\int_{\mathcal{Y}}p(\mathcal{X},\mathcal{Y};\boldsymbol{\theta})\d\mathcal{Y}\right),
\end{equation*}
where $\boldsymbol{\theta}$ is the collection of model parameters. This is also known as the marginal log-likelihood of the observed data, with the full log-likelihood being given by $\log\big(p(\mathcal{X},\mathcal{Y};\boldsymbol{\theta})\big)$. Using any distribution over the discretized hidden variables, with density $q(\mathcal{Y})$, a lower bound on the marginal log-likelihood can be obtained in a rather trivial manner in the following way \cite{ghahramani1998learning}:
\begin{align*}
    \ell(\boldsymbol{\theta})=\log\left(\int_{\mathcal{Y}}p(\mathcal{X},\mathcal{Y};\boldsymbol{\theta})\d\mathcal{Y}\right)&=\log\left(\int_{\mathcal{Y}}q(\mathcal{Y})\frac{p(\mathcal{X},\mathcal{Y};\boldsymbol{\theta})}{q(\mathcal{Y})}\d\mathcal{Y}\right)\\
    &\geq\int_{\mathcal{Y}}q(\mathcal{Y})\log\left(\frac{p(\mathcal{X},\mathcal{Y};\boldsymbol{\theta})}{q(\mathcal{Y})}\right)\d\mathcal{Y} \ (\text{by Jensen's integral inequality}) \\
    &= \int_{\mathcal{Y}}q(\mathcal{Y})\log\big(p(\mathcal{X},\mathcal{Y};\boldsymbol{\theta})\big)\d\mathcal{Y}-\int_{\mathcal{Y}}q(\mathcal{Y})\log\big(q(\mathcal{Y})\big)\d\mathcal{Y}\\
    &:= \mathcal{F}(q,\boldsymbol{\theta}),
\end{align*}
where the negative value of $\int_{\mathcal{Y}}q(\mathcal{Y})\log\big(p(\mathcal{X},\mathcal{Y};\boldsymbol{\theta})\big)\d\mathcal{Y}$ is the so-called free energy, while $\mathcal{S}(q)=-\int_{\mathcal{Y}}q(\mathcal{Y})\log\big(q(\mathcal{Y})\big)\d\mathcal{Y}$ is the Shannon differential entropy. Therefore, based on the fact that $\mathcal{F}(q,\boldsymbol{\theta})\leq \ell(\boldsymbol{\theta})$ for all distributions $q$ over $\mathcal{Y}$, it is clear that maximizing the marginal log-likelihood is equivalent to maximizing $\mathcal{F}$ alternatively with respect to the distribution $q$ and the parameters $\boldsymbol{\theta}$ \cite{neal1998view, hastie2001elements}, thus creating a bridge between the usual MLE algorithm and EM procedure. This alternate maximization of $\mathcal{F}$ is the essence of the EM algorithm, where the EM designation stems from the following procedure:
\begin{itemize}[leftmargin=3cm]
    \item[E-Step: ] $q_{n+1}\leftarrow \underset{q}{\mathrm{argmax}}\big\{\mathcal{F}(q,\boldsymbol{\theta}_n)\big\}$

    \item[M-Step: ] $\boldsymbol{\theta}_{n+1}\leftarrow \underset{\boldsymbol{\theta}}{\mathrm{argmax}}\big\{\mathcal{F}(q_{n+1},\boldsymbol{\theta})\big\}$
\end{itemize}

An essential observation in this process is that the maximization in the E-Step is exactly reached when $q$ is the posterior conditional distribution of $\mathcal{Y}$, that is
\begin{equation*}
    q_{n+1}(\mathcal{Y})=p\big(\mathcal{Y}\big|\mathcal{X};\boldsymbol{\theta}_n\big),
\end{equation*}
which in the online setting of sequential arrival of observations corresponds to the online smoother posterior distribution. This is attributed to the equality condition for the Jensen inequality, which we used to prove the bound between the marginal log-likelihood and the objective function $\mathcal{F}$. In Jensen's integral inequality, $\phi\big(\ee{X}\big)\leq \mathbb{E}\big[\phi(X)\big]$ for $\phi$ being a convex operator and $X$ an arbitrary-valued random variable, it is known that equality holds if and only if $\phi$ is an affine function or if $X$ is constant almost surely. As such, since the objective function $\mathcal{F}$ can be rewritten in the following much more intuitive form:
\begin{equation*}
    \mathcal{F}(q,\boldsymbol{\theta})=-\mathcal{P}\big(q,p(\cdot|\mathcal{X};\boldsymbol{\theta})\big)+\ell(\boldsymbol{\theta}),
\end{equation*}
where $\mathcal{P}$ is the relative entropy from Section \ref{sec:3.4.1}, then equality holds when $\log\left(\frac{p(\mathcal{X},\mathcal{Y};\boldsymbol{\theta})}{q(\mathcal{Y})}\right)$ is an affine function, which is exactly achieved when $q(\mathcal{Y})=p(\mathcal{Y}|\mathcal{X};\boldsymbol{\theta})\propto p(\mathcal{X},\mathcal{Y};\boldsymbol{\theta})$, with the proportionality being due to Bayes'. Of course, this fact can also be seen through the Gibbs' inequality, since for $q(\mathcal{Y})=p(\mathcal{Y}|\mathcal{X};\boldsymbol{\theta})$ the relative entropy term on the right-hand side vanishes \cite{little2019statistical}. In such a situation, the bound $\mathcal{F}(q,\boldsymbol{\theta})\leq \ell(\boldsymbol{\theta})$ becomes an equality, $\mathcal{F}(q,\boldsymbol{\theta}) = \ell(\boldsymbol{\theta})$. Note that the conditional distribution in the E-Step is very difficult to solve for general CTNDSs. Various numerical methods and approximations are often used \cite{ghahramani1998learning, ghahramani1996parameter}, which usually suffer from both approximation errors and the curse of dimensionality. Nevertheless, for the CGNS framework, the distribution $p(\mathcal{Y}|\mathcal{X};\boldsymbol{\theta}_k)$ is given in an optimal and unbiased manner by the closed analytic formulae of the online discrete smoother from Theorem \ref{thm:onlinesmoother}, which greatly facilitates the application of the online EM parameter estimation algorithm to many nonlinear models. On the other hand, since the differential entropy does not depend on $\boldsymbol{\theta}$, the maximum in the M-Step is obtained by maximizing the negative of the free energy,
\begin{equation*}
    \boldsymbol{\theta}_{n+1}\leftarrow \underset{\boldsymbol{\theta}}{\mathrm{argmax}}\left\{\int_{\mathcal{Y}}p\big(\mathcal{Y}\big|\mathcal{X};\boldsymbol{\theta}_n\big)\log\big(p(\mathcal{X},\mathcal{Y};\boldsymbol{\theta})\big)\d\mathcal{Y}\right\}.
\end{equation*}
This interpretation of the basic learning EM algorithm is known as the ``maximization-maximization procedure" formulation, where the EM algorithm is viewed as two alternating maximization steps, which is a specific application of coordinate descent. We finally note that the EM algorithm enjoys a monotonicity property, which states that improving the negative of the free energy will at least not make the marginal log-likelihood worse with regards to the optimization goal \cite{chen2010demystified}, and that a convergence analysis for the EM method (including results for distributions outside the exponential family), can be found in the fundamental theoretical work of Wu \cite{wu1983convergence}. As for the explicit expression of $\boldsymbol{\theta}_{n+1}$ in the case of the CGNS, as well as of the noise feedbacks in the case of unknown uncertainty matrices, both with respect to the posterior statistics, see the work of Chen \cite{chen2020learning}.

Finally, here we provide a simple numerical trick for accelerating the convergence of the parameters to their true values in the online EM algorithm. This process is based on momentum-based methods from classical convex optimization, which utilize momentum terms in the update to aid the learning process (e.g., Nesterov’s accelerated gradient descent \cite{nesterov1983method}). We denote by $\boldsymbol{\theta}_{n}$ and $\boldsymbol{\theta}_{n+1}$ the learned parameters (or part of them) in the previous ($n$) and current ($n+1$) iteration, respectively. An acceleration of the parameter value at the current step can be achieved by
\begin{equation*}
    \boldsymbol{\theta}_{n+1}^{\text{new}}=\boldsymbol{\theta}_{n+1}+\alpha\left(\boldsymbol{\theta}_{n+1}-\boldsymbol{\theta}_{n}\right),
\end{equation*}
where $\alpha\in[0,1]$ is a hyperparameter. If $\alpha=0$, then there is no acceleration; the acceleration rate depends on the amplitude of $\alpha$. Such a trick can be applied in the first few iterations of the online EM algorithm, especially for those in the unobserved process when the observability of the system is weak.

\section{Stability, Sensitivity, and Convergence of the Adaptive-Lag Online EM Parameter Estimation Algorithm for the Dyad Model \eqref{eq:emdyadmodel1}--\eqref{eq:emdyadmodel2} in Initial Parameter Values and Learning Duration} \label{sec:app6}

In this appendix we briefly discuss the stability, sensitivity, and convergence of the (adaptive-lag) online EM parameter estimation algorithm, when applied to the dyad model in \eqref{eq:emdyadmodel1}--\eqref{eq:emdyadmodel2}, with respect to the algorithm's burn-in or learning duration and initial parameter values. 

The results noted and discussed in what follows have been derived on a purely empirical basis, through trial and error, due to the tremendous computational costs that a full parameter space analysis entails. Specifically, this empirical analysis is conducted in a ``ceteris paribus" fashion, by adjusting only a single parameter's initial guess or the algorithm's learning period, while all other components of the algorithm remain unchanged from those in the simulation presented in Section \ref{sec:4.3.2}. Furthermore, these adjustments are made in both directions, i.e., by both decreasing or increasing the component of interest, and done so to severe degrees as to confirm the algorithm's stability or convergence skill is not localized under this regime.

For the dyad model in \eqref{eq:emdyadmodel1}--\eqref{eq:emdyadmodel2}, the adaptive-lag online EM parameter estimation algorithm carried out to derive the results in Figure \ref{fig:EM_Parameter_Estimation_Fig} is extremely stable, both in terms of the initial parameter values as well as in the length of the burn-in or learning period. Here, by stable, we mean that the algorithm does not destructively deteriorate and blow up; the algorithm might still diverge and approach a different subset of the parameter space as it evolves, but at least does so in a stable manner.

In terms of sensitivity and convergence skill with respect to the initial parameter values and burn-in or learning period, when we apply the adaptive-lag online EM parameter estimation algorithm on the dyad model in \eqref{eq:emdyadmodel1}--\eqref{eq:emdyadmodel2}, we have:


\begin{itemize}[leftmargin=2cm]

    \item[$d_u$:] Initial overestimation (i.e., a large positive value) affects $d_u$'s convergence, but it also affects that of $F_v$ (by overestimating it), as is to be expected; overdamping $u$ initially requires a larger positive $F_v$ to counteract this through the $-\gamma u^2$ term. The convergence of the other parameters remains unaffected. Relatively large values of initial $d_u$ are needed to effectively diverge $d_u$ and $F_v$ away from their true values.
    
    If $d_u$ is set equal to zero initially, it remains close to its neighborhood and oscillates there even after extreme event observance. This is compensated by estimating a negative $F_v$ as the algorithm evolves. Other parameters' convergence is again not severely affected by this.

    Initial underestimation (i.e., a large, in absolute value, negative initial guess) is not corrected with respect to the sign as the algorithm evolves, even after observing extreme events, with $d_u$ remaining at a significant negative value throughout. $F_v$ diverges as well, with large negative values. The convergence of the other parameters remains unaffected.
    
    \item[$\gamma$:] Overestimating $\gamma$ initially severely skews the convergence of itself, $d_v$, and $F_v$, but for $\gamma$ and $F_v$ this is corrected with the observance of extreme events (especially after extreme event A; see Panel (a) in Figure \ref{fig:EM_Parameter_Estimation_Fig}). For $d_v$, it converges towards an extremely damped regime, where this is to be expected due to the $-\gamma u^2$ term.
    
    By starting from zero, $\gamma$ remains close to it for the whole run, with $d_u$ being overestimated and the other parameters being underestimated.

    For negative initial $\gamma$, the algorithm's estimation for it remains negative and converges towards $-\gamma$, due to the symmetry of the energy conservation condition, but for large negative values it compensates by significantly overestimating $d_v$ and having a negative significant $F_v$. $d_u$ and $F_u$ are unaffected under this regime. 

    Severe positive or negative $\gamma$ does slow down its convergence, but the true value is well approximated eventually after a sufficiently long run.

    \item[$F_u$:] Significantly overestimating $F_u$ initially leads to the algorithm to search for an antidamping regime for $u$ over the parameter space (in $d_u$) and to severely underestimate $\gamma$, due to the quadratic coupling effects. It also completely misidentifies the $F_v$ value, due to the significant joint contribution of $ F_u$ and $ F_v$ in achieving system observability. 
    
    Same for underestimating, with very large initial $F_v$ negative values, but in this case $\gamma$'s convergence is fine, with problems only emerging in the other parameters.

    \item[$d_v$:] By initially severely overestimating $d_v$, the convergence can be severely skewed, leading to misidentification, because $v$ is the driving force behind the generation of the intermittent extreme events of the system (in $u$); after all, observing these instabilities accelerates and corrects the algorithm's current estimation of the parameters. 

    Same for underestimating, with very large negative values, which, although leading to an incorrect estimation for $d_v$, the algorithm at least immediately corrects the sign to a positive one (unlike in $d_u$'s case). The convergence of all other parameters is effectively unaffected.
    
    Starting at zero for $d_v$ initially has similar results as for negative initial values, i.e., they get quickly corrected, especially at the observance of extreme events.

    Finally, we note that even if $d_v$ is misidentified, as in the case of the results shown in Figure \ref{fig:EM_Parameter_Estimation_Fig} of Section \ref{sec:4.3.2}, leading to lower-order or time-series-based errors, nevertheless the memory and fidelity of the dyad model, represented by the higher-order metrics like the PDFs and ACFs of $u$ and $v$, are still effectively recovered.

    \item[$F_v$:] Only for significant initial overestimation, it only slightly slows down and skews $d_v$'s convergence, while it marginally aids in the convergence of $d_u$ and $d_v$ when assumed to be negative and large.

    \item[$\substack{\displaystyle\text{Learning}\\ \displaystyle\text{\quad Period:}}$] For the initial guess of $\boldsymbol{\theta}_0=(2,6,2,0.5,0.6)^\tran$, we have the following effects from adjusting the length of the learning period from the $10$ time units used in Section \ref{sec:4.3.2} (in $[0,200]$).
    \begin{itemize}[leftmargin=3cm]
        \item[Increase:] Convergence skill, i.e., reaching the same values as the ones depicted in Figure \ref{fig:EM_Parameter_Estimation_Fig} of Section \ref{sec:4.3.2}, is not hindered by increasing the learning period. But, we do note that a really long burn-in period, combined with severely wrong initial parameter guesses (which are the ones utilized during it), can be rather destructive and lead to incorrect values and even severe instability and divergence. Also, a significantly long learning period can severely slow down convergence even for stable initial parameter values; e.g., by using a 30 time units-long learning period in this case, we miss out on doing state estimation at the observance of the highly influential extreme event A (see Panel (a) in Figure \ref{fig:EM_Parameter_Estimation_Fig}).

        \item[Decrease:] Still stable, but trace plots are highly variable initially if chosen to be really brief. These variations are smoothed out relatively quickly, and satisfactory convergence is achieved at the observance of extreme events (but $d_v$ still converges to double the true value). Notably though, a training period that is at least a single time unit long is necessary for this case study; too small and $\gamma$ is thought to be zero, with similar results to as if we started with an initial guess of $\gamma=0$, as the learning period did not have enough data to showcase the presence of the energy-conserving quadratic coupling between $u$ and $v$.
    \end{itemize}
\end{itemize}

We do note that it is possible to make this study into the sensitivity, stability and variability of the online EM parameter estimation algorithm much more rigorous, for this problem or model instance, by formulating an appropriate statistical response problem \cite{majda2010quantifying}: we identify the Fisher information matrix corresponding to the parameter estimation problem at hand as a function of a perturbation induced on the initial parameter values or learning period length, under sufficient regularity conditions, and then search for its most sensitive directions which are defined by its maximal eigenvectors (i.e., those corresponding to its largest eigenvalues).

\section{Results of the Fixed-Lag Online EM Parameter Estimation Algorithm on \eqref{eq:emdyadmodel1}--\eqref{eq:emdyadmodel2}} \label{sec:app7}

In Figure \ref{fig:EM_Parameter_Estimation_Fig_Appendix}, we showcase the trace plots of the parameter value absolute errors that are produced by the adaptive-lag and fixed-lag online EM parameter estimation algorithms (for the parameters of interest). By noting that the average adaptive lag, over $[10,200]$, of the adaptive-lag online EM algorithm is about $0.1686$ time units (see Panel (c) of Figure \ref{fig:EM_Parameter_Estimation_Fig}), for the fixed-lag online EM parameter estimation algorithm we use a constant lag of $L_n\dt=0.25$ time units for all $n$, i.e., for each newly acquired observation of the observed variable $u$. This amounts to about a $50\%$ increase from the mean adaptive lag.

As already observed in the adaptive-lag online EM algorithm implementation from Section \ref{sec:4.3.2}, the observed extreme events help accelerate and correct the algorithm's divergence, regardless of whether a fixed or adaptive lag is used. Additionally, as expected, these two variations of the online EM parameter estimation method do not exhibit significant discrepancies for the forcing parameters, $F_u$ and $F_v$, and the damping parameter of the unobserved variable, $d_v$. This is especially true after the algorithm is let run for long lead times. 

However, it quickly becomes evident that the adaptive-lag variation is superior in correctly identifying the values of the parameters which are the driving force behind the significant dynamics of the model. Specifically, the adaptive-lag method is better at estimating $d_u$ and $\gamma$, and more importantly, the ratio $d_u/\gamma$; recall that whenever $v>d_u/\gamma$, this induces an instability in $u$, via anti-damping, which forms the system's intermittent extreme events. This is especially apparent during the observation of extreme events, particularly after the highly influential events A and B. As such, despite the use of a fixed lag that is $50\%$ longer than the average one found in the adaptive-lag online EM algorithm, the fact that the latter adaptively chooses longer lags during the emergence of extreme events, thus capturing the significant contributions which influence the posterior smoother state estimation, aids in its efficacious model identification. Furthermore, by using negligible lags during periods of large signal-to-noise ratios (e.g., at the demise of the extreme events in $u$), instead of a uniformly large fixed lag value throughout the run, the adaptive-lag strategy can also comparatively show significant storage savings, depending of course on the choice of its upper lag bound parameter ($b$) and the fixed lag length (as is observed from the results in Figures \ref{fig:LDA_Ice_Floes_Fig_2} and \ref{fig:Dyad_Interaction_Fig_Appendix}). 

\begin{figure}[!ht]
\centering
    \includegraphics[width=\textwidth]{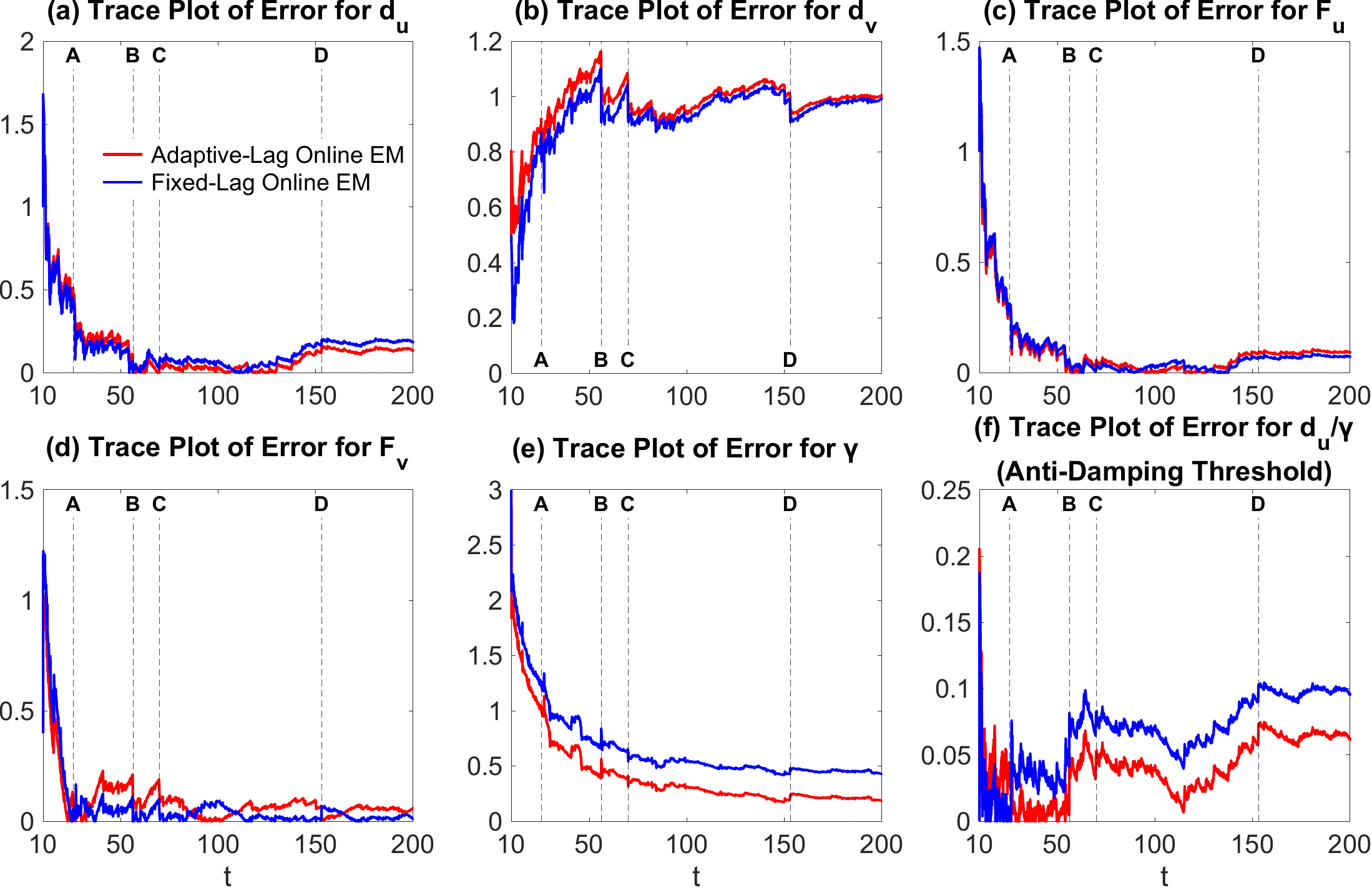}
    \caption{Panels (a)--(f): Trace plots of the parameter value absolute errors produced by the adaptive-lag (in red) and fixed-lag (in blue) online EM parameter estimation algorithms.}
    \label{fig:EM_Parameter_Estimation_Fig_Appendix}
\end{figure}

\end{appendices}

\clearpage

\begin{filecontents}{references.bib}
@book{liptser2001statisticsI,
    title = "{Statistics of Random Processes I: General Theory}",
    ISBN = {9783662130438},
    url = {http://dx.doi.org/10.1007/978-3-662-13043-8},
    DOI = {10.1007/978-3-662-13043-8},
    publisher = {Springer Berlin Heidelberg},
    author = {Liptser,  Robert S. and Shiryaev,  Albert N.},
    year = {2001}
}

@book{liptser2001statisticsII,
    title = "{Statistics of Random Processes II: Applications}",
    ISBN = {9783662100288},
    ISSN = {0172-4568},
    url = {http://dx.doi.org/10.1007/978-3-662-10028-8},
    DOI = {10.1007/978-3-662-10028-8},
    journal = {Stochastic Modelling and Applied Probability},
    publisher = {Springer Berlin Heidelberg},
    author = {Liptser,  Robert S. and Shiryaev,  Albert N.},
    year = {2001}
}

@article{majda2012physics,
    title = "{Physics constrained nonlinear regression models for time series}",
    volume = {26},
    ISSN = {1361-6544},
    url = {http://dx.doi.org/10.1088/0951-7715/26/1/201},
    DOI = {10.1088/0951-7715/26/1/201},
    number = {1},
    journal = {Nonlinearity},
    publisher = {IOP Publishing},
    author = {Majda,  Andrew J and Harlim,  John},
    year = {2012},
    month = nov,
    pages = {201–217}
}

@article{harlim2014ensemble,
    title = "{An ensemble Kalman filter for statistical estimation of physics constrained nonlinear regression models}",
    volume = {257},
    ISSN = {0021-9991},
    url = {http://dx.doi.org/10.1016/j.jcp.2013.10.025},
    DOI = {10.1016/j.jcp.2013.10.025},
    journal = {Journal of Computational Physics},
    publisher = {Elsevier BV},
    author = {Harlim,  John and Mahdi,  Adam and Majda,  Andrew J.},
    year = {2014},
    month = jan,
    pages = {782–812}
}

@article{chen2018conditional,
    title = "{Conditional Gaussian Systems for Multiscale Nonlinear Stochastic Systems: Prediction,  State Estimation and Uncertainty Quantification}",
    volume = {20},
    ISSN = {1099-4300},
    url = {http://dx.doi.org/10.3390/e20070509},
    DOI = {10.3390/e20070509},
    number = {7},
    journal = {Entropy},
    publisher = {MDPI AG},
    author = {Chen,  Nan and Majda,  Andrew},
    year = {2018},
    month = jul,
    pages = {509}
}

@article{chen2014predicting,
    title = "{Predicting the cloud patterns of the Madden‐Julian Oscillation through a low‐order nonlinear stochastic model}",
    volume = {41},
    ISSN = {1944-8007},
    url = {http://dx.doi.org/10.1002/2014GL060876},
    DOI = {10.1002/2014gl060876},
    number = {15},
    journal = {Geophysical Research Letters},
    publisher = {American Geophysical Union (AGU)},
    author = {Chen,  N. and Majda,  A. J. and Giannakis,  D.},
    year = {2014},
    month = aug,
    pages = {5612–5619}
}

@article{chen2016filtering,
    title = "{Filtering the Stochastic Skeleton Model for the Madden–Julian Oscillation}",
    volume = {144},
    ISSN = {1520-0493},
    url = {http://dx.doi.org/10.1175/MWR-D-15-0261.1},
    DOI = {10.1175/mwr-d-15-0261.1},
    number = {2},
    journal = {Monthly Weather Review},
    publisher = {American Meteorological Society},
    author = {Chen,  Nan and Majda,  Andrew J.},
    year = {2016},
    month = jan,
    pages = {501–527}
}

@article{chen2014information,
    title = "{Information barriers for noisy Lagrangian tracers in filtering random incompressible flows}",
    volume = {27},
    ISSN = {1361-6544},
    url = {http://dx.doi.org/10.1088/0951-7715/27/9/2133},
    DOI = {10.1088/0951-7715/27/9/2133},
    number = {9},
    journal = {Nonlinearity},
    publisher = {IOP Publishing},
    author = {Chen,  Nan and Majda,  Andrew J and Tong,  Xin T},
    year = {2014},
    month = aug,
    pages = {2133–2163}
}

@article{branicki2013dynamic,
    title = "{Dynamic Stochastic Superresolution of sparsely observed turbulent systems}",
    volume = {241},
    ISSN = {0021-9991},
    url = {http://dx.doi.org/10.1016/j.jcp.2012.11.037},
    DOI = {10.1016/j.jcp.2012.11.037},
    journal = {Journal of Computational Physics},
    publisher = {Elsevier BV},
    author = {Branicki,  M. and Majda,  A.J.},
    year = {2013},
    month = may,
    pages = {333–363}
}

@article{keating2012new,
    title = "{New Methods for Estimating Ocean Eddy Heat Transport Using Satellite Altimetry}",
    volume = {140},
    ISSN = {1520-0493},
    url = {http://dx.doi.org/10.1175/MWR-D-11-00145.1},
    DOI = {10.1175/mwr-d-11-00145.1},
    number = {5},
    journal = {Monthly Weather Review},
    publisher = {American Meteorological Society},
    author = {Keating,  Shane R. and Majda,  Andrew J. and Smith,  K. Shafer},
    year = {2012},
    month = may,
    pages = {1703–1722}
}

@article{majda2014new,
    title = "{New perspectives on superparameterization for geophysical turbulence}",
    volume = {271},
    ISSN = {0021-9991},
    url = {http://dx.doi.org/10.1016/j.jcp.2013.09.014},
    DOI = {10.1016/j.jcp.2013.09.014},
    journal = {Journal of Computational Physics},
    publisher = {Elsevier BV},
    author = {Majda,  Andrew J. and Grooms,  Ian},
    year = {2014},
    month = aug,
    pages = {60–77}
}

@article{majda2014blended,
    title = "{Blended particle filters for large-dimensional chaotic dynamical systems}",
    volume = {111},
    ISSN = {1091-6490},
    url = {http://dx.doi.org/10.1073/pnas.1405675111},
    DOI = {10.1073/pnas.1405675111},
    number = {21},
    journal = {Proceedings of the National Academy of Sciences},
    publisher = {Proceedings of the National Academy of Sciences},
    author = {Majda,  Andrew J. and Qi,  Di and Sapsis,  Themistoklis P.},
    year = {2014},
    month = may,
    pages = {7511–7516}
}

@article{chen2016nonlinear,
    title = "{Filtering Nonlinear Turbulent Dynamical Systems through Conditional Gaussian Statistics}",
    volume = {144},
    ISSN = {1520-0493},
    url = {http://dx.doi.org/10.1175/MWR-D-15-0437.1},
    DOI = {10.1175/mwr-d-15-0437.1},
    number = {12},
    journal = {Monthly Weather Review},
    publisher = {American Meteorological Society},
    author = {Chen,  Nan and Majda,  Andrew J.},
    year = {2016},
    month = dec,
    pages = {4885–4917}
}

@article{rauch1965maximum,
    title = "{Maximum likelihood estimates of linear dynamic systems}",
    volume = {3},
    ISSN = {1533-385X},
    url = {http://dx.doi.org/10.2514/3.3166},
    DOI = {10.2514/3.3166},
    number = {8},
    journal = {AIAA Journal},
    publisher = {American Institute of Aeronautics and Astronautics (AIAA)},
    author = {RAUCH,  H. E. and TUNG,  F. and STRIEBEL,  C. T.},
    year = {1965},
    month = aug,
    pages = {1445–1450}
}

@article{chen2020efficient,
    title = "{Efficient nonlinear optimal smoothing and sampling algorithms for complex turbulent nonlinear dynamical systems with partial observations}",
    volume = {410},
    ISSN = {0021-9991},
    url = {http://dx.doi.org/10.1016/j.jcp.2020.109381},
    DOI = {10.1016/j.jcp.2020.109381},
    journal = {Journal of Computational Physics},
    publisher = {Elsevier BV},
    author = {Chen,  Nan and Majda,  Andrew J.},
    year = {2020},
    month = jun,
    pages = {109381}
}

@book{law2015data,
    title = "{Data Assimilation: A Mathematical Introduction}",
    author={Law, K. and Stuart, A. and Zygalakis, K.},
    isbn={9783319203256},
    series={Texts in Applied Mathematics},
    url={https://books.google.com/books?id=0UmGCgAAQBAJ},
    year={2015},
    publisher={Springer International Publishing}
}

@book{kloeden1992numerical,
    title = "{Numerical Solution of Stochastic Differential Equations}",
    author={Kloeden, P.E. and Platen, E.},
    isbn={9783540540625},
    lccn={92015916},
    series={Applications of mathematics : stochastic modelling and applied probability},
    url={https://books.google.com/books?id=7bkZAQAAIAAJ},
    year={1992},
    publisher={Springer}
}

@book{evensen2009data,
    title = "{Data Assimilation: The Ensemble Kalman Filter}",
    ISBN = {9783642037115},
    url = {http://dx.doi.org/10.1007/978-3-642-03711-5},
    DOI = {10.1007/978-3-642-03711-5},
    publisher = {Springer Berlin Heidelberg},
    author = {Evensen,  Geir},
    year = {2009}
}

@book{majda2012filtering,
    title = "{Filtering Complex Turbulent Systems}",
    ISBN = {9781139061308},
    url = {http://dx.doi.org/10.1017/CBO9781139061308},
    DOI = {10.1017/cbo9781139061308},
    publisher = {Cambridge University Press},
    author = {Majda,  Andrew J. and Harlim,  John},
    year = {2012},
    month = feb
}

@inbook{lahoz2010data,
    title = "{Data Assimilation and Information}",
    ISBN = {9783540747031},
    url = {http://dx.doi.org/10.1007/978-3-540-74703-1_1},
    DOI = {10.1007/978-3-540-74703-1_1},
    booktitle = {Data Assimilation},
    publisher = {Springer Berlin Heidelberg},
    author = {Lahoz,  William and Khattatov,  Boris and Ménard,  Richard},
    year = {2010},
    pages = {3–12}
}

@book{kalnay2003atmospheric,
    title = "{Atmospheric Modeling, Data Assimilation and Predictability}",
    author={Kalnay, E.},
    isbn={9780521791793},
    lccn={2001052687},
    series={Atmospheric Modeling, Data Assimilation, and Predictability},
    url={https://books.google.com/books?id=zx_BakP2I5gC},
    year={2003},
    publisher={Cambridge University Press}
}

@book{sarkka2023bayesian,
    title = "{Bayesian Filtering and Smoothing}",
    author={S{\"a}rkk{\"a}, S. and Svensson, L.},
    isbn={9781108926645},
    lccn={2022058412},
    series={Institute of Mathematical Stat},
    url={https://books.google.com/books?id=WLe9EAAAQBAJ},
    year={2023},
    publisher={Cambridge University Press}
}

@article{kalman1961new,
    author = {Kalman, R. E. and Bucy, R. S.},
    title = "{New Results in Linear Filtering and Prediction Theory}",
    journal = {Journal of Basic Engineering},
    volume = {83},
    number = {1},
    pages = {95-108},
    year = {1961},
    month = {03},
    issn = {0021-9223},
    doi = {10.1115/1.3658902},
    url = {https://doi.org/10.1115/1.3658902},
    eprint = {https://asmedigitalcollection.asme.org/fluidsengineering/article-pdf/83/1/95/5503549/95\_1.pdf},
}

@book{bucy1987filtering,
    title = "{Filtering for Stochastic Processes with Applications to Guidance}",
    author={Bucy, R.S. and Joseph, P.D.},
    isbn={9780828403269},
    lccn={68024679},
    series={Chelsea Publishing Series},
    url={https://books.google.com/books?id=Ot1QAAAAMAAJ},
    year={1987},
    publisher={Chelsea Publishing Company}
}

@article{mohamad2015probabilistic,
    title = "{Probabilistic Description of Extreme Events in Intermittently Unstable Dynamical Systems Excited by Correlated Stochastic Processes}",
    volume = {3},
    ISSN = {2166-2525},
    url = {http://dx.doi.org/10.1137/140978235},
    DOI = {10.1137/140978235},
    number = {1},
    journal = {SIAM/ASA Journal on Uncertainty Quantification},
    publisher = {Society for Industrial \& Applied Mathematics (SIAM)},
    author = {Mohamad,  Mustafa A. and Sapsis,  Themistoklis P.},
    year = {2015},
    month = jan,
    pages = {709–736}
}

@article{denny2009prediction,
    title = "{On the prediction of extreme ecological events}",
    volume = {79},
    ISSN = {1557-7015},
    url = {http://dx.doi.org/10.1890/08-0579.1},
    DOI = {10.1890/08-0579.1},
    number = {3},
    journal = {Ecological Monographs},
    publisher = {Wiley},
    author = {Denny,  Mark W. and Hunt,  Luke J. H. and Miller,  Luke P. and Harley,  Christopher D. G.},
    year = {2009},
    month = aug,
    pages = {397–421}
}

@article{farazmand2019extreme,
    title = "{Extreme Events: Mechanisms and Prediction}",
    volume = {71},
    ISSN = {2379-0407},
    url = {http://dx.doi.org/10.1115/1.4042065},
    DOI = {10.1115/1.4042065},
    number = {5},
    journal = {Applied Mechanics Reviews},
    publisher = {ASME International},
    author = {Farazmand,  Mohammad and Sapsis,  Themistoklis P.},
    year = {2019},
    month = aug
}

@article{delmoral1997nonlinear,
    title = "{Nonlinear filtering: Interacting particle resolution}",
    volume = {325},
    ISSN = {0764-4442},
    url = {http://dx.doi.org/10.1016/S0764-4442(97)84778-7},
    DOI = {10.1016/s0764-4442(97)84778-7},
    number = {6},
    journal = {Comptes Rendus de l’Académie des Sciences - Series I - Mathematics},
    publisher = {Elsevier BV},
    author = {Del Moral,  Pierre},
    year = {1997},
    month = sep,
    pages = {653–658}
}

@article{anderson2001ensemble,
    title = "{An Ensemble Adjustment Kalman Filter for Data Assimilation}",
    volume = {129},
    ISSN = {1520-0493},
    url = {http://dx.doi.org/10.1175/1520-0493(2001)129<2884:AEAKFF>2.0.CO;2},
    DOI = {10.1175/1520-0493(2001)129<2884:aeakff>2.0.co;2},
    number = {12},
    journal = {Monthly Weather Review},
    publisher = {American Meteorological Society},
    author = {Anderson,  Jeffrey L.},
    year = {2001},
    month = dec,
    pages = {2884–2903}
}

@article{liu1998sequential,
    title = "{Sequential Monte Carlo Methods for Dynamic Systems}",
    volume = {93},
    ISSN = {1537-274X},
    url = {http://dx.doi.org/10.1080/01621459.1998.10473765},
    DOI = {10.1080/01621459.1998.10473765},
    number = {443},
    journal = {Journal of the American Statistical Association},
    publisher = {Informa UK Limited},
    author = {Liu,  Jun S. and Chen,  Rong},
    year = {1998},
    month = sep,
    pages = {1032–1044}
}

@article{kitagawa1996monte,
    title = "{Monte Carlo Filter and Smoother for Non-Gaussian Nonlinear State Space Models}",
    volume = {5},
    ISSN = {1061-8600},
    url = {http://dx.doi.org/10.2307/1390750},
    DOI = {10.2307/1390750},
    number = {1},
    journal = {Journal of Computational and Graphical Statistics},
    publisher = {JSTOR},
    author = {Kitagawa,  Genshiro},
    year = {1996},
    month = mar,
    pages = {1}
}

@article{gottwald2013mechanism,
    title = "{A mechanism for catastrophic filter divergence in data assimilation for sparse observation networks}",
    volume = {20},
    ISSN = {1607-7946},
    url = {http://dx.doi.org/10.5194/npg-20-705-2013},
    DOI = {10.5194/npg-20-705-2013},
    number = {5},
    journal = {Nonlinear Processes in Geophysics},
    publisher = {Copernicus GmbH},
    author = {Gottwald,  G. A. and Majda,  A. J.},
    year = {2013},
    month = sep,
    pages = {705–712}
}

@article{harlim2010catastrophic,
    title = "{Catastrophic filter divergence in filtering nonlinear dissipative systems}",
    volume = {8},
    ISSN = {1945-0796},
    url = {http://dx.doi.org/10.4310/CMS.2010.v8.n1.a3},
    DOI = {10.4310/cms.2010.v8.n1.a3},
    number = {1},
    journal = {Communications in Mathematical Sciences},
    publisher = {International Press of Boston},
    author = {Harlim,  John and Majda,  Andrew J.},
    year = {2010},
    pages = {27–43}
}

@article{snyder2008obstacles,
    title = "{Obstacles to High-Dimensional Particle Filtering}",
    volume = {136},
    ISSN = {0027-0644},
    url = {http://dx.doi.org/10.1175/2008MWR2529.1},
    DOI = {10.1175/2008mwr2529.1},
    number = {12},
    journal = {Monthly Weather Review},
    publisher = {American Meteorological Society},
    author = {Snyder,  Chris and Bengtsson,  Thomas and Bickel,  Peter and Anderson,  Jeff},
    year = {2008},
    month = dec,
    pages = {4629–4640}
}

@inproceedings{hol2006resampling,
    title = "{On Resampling Algorithms for Particle Filters}",
    url = {http://dx.doi.org/10.1109/NSSPW.2006.4378824},
    DOI = {10.1109/nsspw.2006.4378824},
    booktitle = {2006 IEEE Nonlinear Statistical Signal Processing Workshop},
    publisher = {IEEE},
    author = {Hol,  Jeroen D. and Schon,  Thomas B. and Gustafsson,  Fredrik},
    year = {2006},
    month = sep
}

@article{greybush2011balance,
    title = "{Balance and Ensemble Kalman Filter Localization Techniques}",
    volume = {139},
    ISSN = {0027-0644},
    url = {http://dx.doi.org/10.1175/2010MWR3328.1},
    DOI = {10.1175/2010mwr3328.1},
    number = {2},
    journal = {Monthly Weather Review},
    publisher = {American Meteorological Society},
    author = {Greybush,  Steven J. and Kalnay,  Eugenia and Miyoshi,  Takemasa and Ide,  Kayo and Hunt,  Brian R.},
    year = {2011},
    month = feb,
    pages = {511–522}
}

@article{apte2008bayesian,
    title = "{A Bayesian approach to Lagrangian data assimilation}",
    ISSN = {0280-6495},
    url = {http://dx.doi.org/10.3402/tellusa.v60i2.15217},
    DOI = {10.3402/tellusa.v60i2.15217},
    journal = {Tellus A},
    publisher = {Stockholm University Press},
    author = {Apte,  A. and Jones,  C. K. R. T. and Stuart,  A. M.},
    year = {2008},
    month = mar
}

@article{ide2002lagrangian,
    title = "{Lagrangian data assimilation for point vortex systems}",
    volume = {3},
    ISSN = {1468-5248},
    url = {http://dx.doi.org/10.1088/1468-5248/3/1/053},
    DOI = {10.1088/1468-5248/3/1/053},
    journal = {Journal of Turbulence},
    publisher = {Informa UK Limited},
    author = {Ide,  Kayo and Kuznetsov,  Leonid and Jone,  Christopher K R T},
    year = {2002},
    month = jan,
    pages = {N53}
}

@book{gardiner2009stochastic,
    title = "{Stochastic Methods: A Handbook for the Natural and Social Sciences}",
    author={Gardiner, C.},
    isbn={9783540866848},
    series={Springer Series in Synergetics},
    url={https://books.google.com/books?id=epf8sgEACAAJ},
    year={2009},
    publisher={Springer Berlin Heidelberg}
}

@phdthesis{kolodziej1980state,
    title = "{Conditionally Gaussian Processes In Stochastic Control Theory.}",
    author       = {Kolodziej, Wojciech},
    year         = 1980,
    month        = {January},
    note         = {Available at \url{https://ir.library.oregonstate.edu/concern/graduate_thesis_or_dissertations/6d570069f}},
    school       = {Oregon State University},
    type         = {PhD thesis}
}

@article{lermusiaux1999data,
    title = "{Data Assimilation via Error Subspace Statistical Estimation.: Part II: Middle Atlantic Bight Shelfbreak Front Simulations and ESSE Validation}",
    volume = {127},
    ISSN = {1520-0493},
    url = {http://dx.doi.org/10.1175/1520-0493(1999)127<1408:DAVESS>2.0.CO;2},
    DOI = {10.1175/1520-0493(1999)127<1408:davess>2.0.co;2},
    number = {7},
    journal = {Monthly Weather Review},
    publisher = {American Meteorological Society},
    author = {Lermusiaux,  P. F. J.},
    year = {1999},
    month = jul,
    pages = {1408–1432}
}

@ARTICLE{hendon2009prospects,
    title = "{Prospects for predicting two flavors of El Ni\~no}",
    author    = "Hendon, Harry H and Lim, Eunpa and Wang, Guomin and Alves, Oscar
               and Hudson, Debra",
    journal   = "Geophys. Res. Lett.",
    publisher = "American Geophysical Union (AGU)",
    volume    =  36,
    number    =  19,
    month     =  oct,
    year      =  2009,
    language  = "en"
}

@book{khasminskii2012stochastic,
    title = "{Stochastic Stability of Differential Equations}",
    ISBN = {9783642232800},
    ISSN = {0172-4568},
    url = {http://dx.doi.org/10.1007/978-3-642-23280-0},
    DOI = {10.1007/978-3-642-23280-0},
    journal = {Stochastic Modelling and Applied Probability},
    publisher = {Springer Berlin Heidelberg},
    author = {Khasminskii,  Rafail},
    year = {2012}
}

@book{radjavi2000simultaneous,
    title = "{Simultaneous Triangularization}",
    ISBN = {9781461212003},
    ISSN = {0172-5939},
    url = {http://dx.doi.org/10.1007/978-1-4612-1200-3},
    DOI = {10.1007/978-1-4612-1200-3},
    journal = {Universitext},
    publisher = {Springer New York},
    author = {Radjavi,  Heydar and Rosenthal,  Peter},
    year = {2000}
}

@article{goldberg1974matrices,
    title = "{On matrices having equal spectral radius and spectral norm}",
    volume = {8},
    ISSN = {0024-3795},
    url = {http://dx.doi.org/10.1016/0024-3795(74)90076-7},
    DOI = {10.1016/0024-3795(74)90076-7},
    number = {5},
    journal = {Linear Algebra and its Applications},
    publisher = {Elsevier BV},
    author = {Goldberg,  M. and Zwas,  G.},
    year = {1974},
    month = oct,
    pages = {427–434}
}

@inproceedings{cappe2010inference,
  title = "{Inference in hidden Markov models}",
  author={Olivier Capp{\'e} and {\'E}ric Moulines and Tobias Ryd{\'e}n},
  booktitle={Springer Series in Statistics},
  year={2010},
  url={https://api.semanticscholar.org/CorpusID:120064925}
}

@article{olsson2008sequential,
    title = "{Sequential Monte Carlo smoothing with application to parameter estimation in nonlinear state space models}",
    volume = {14},
    ISSN = {1350-7265},
    url = {http://dx.doi.org/10.3150/07-BEJ6150},
    DOI = {10.3150/07-bej6150},
    number = {1},
    journal = {Bernoulli},
    publisher = {Bernoulli Society for Mathematical Statistics and Probability},
    author = {Olsson,  Jimmy and Cappé,  Olivier and Douc,  Randal and Moulines,  Éric},
    year = {2008},
    month = feb
}

@techreport{ghahramani1996parameter,
    title = "{Parameter estimation for linear dynamical systems}",
    author = {Ghahramani, Zoubin and Hinton, Geoffrey E.},
    year = {1996},
    institution	= {Technical Report CRG-TR-96-2, University of Totronto,
          Dept. of Computer Science}
}

@inproceedings{ghahramani1998learning,
    author = {Ghahramani, Zoubin and Roweis, Sam},
    booktitle = {Advances in Neural Information Processing Systems},
    editor = {M. Kearns and S. Solla and D. Cohn},
    pages = {},
    publisher = {MIT Press},
    title = "{Learning Nonlinear Dynamical Systems Using an EM Algorithm}",
    url = {https://proceedings.neurips.cc/paper_files/paper/1998/file/0ebcc77dc72360d0eb8e9504c78d38bd-Paper.pdf},
    volume = {11},
    year = {1998}
}

@article{dembo1986parameter,
    title = "{Parameter estimation of partially observed continuous time stochastic processes via the EM algorithm}",
    volume = {23},
    ISSN = {0304-4149},
    url = {http://dx.doi.org/10.1016/0304-4149(86)90018-9},
    DOI = {10.1016/0304-4149(86)90018-9},
    number = {1},
    journal = {Stochastic Processes and their Applications},
    publisher = {Elsevier BV},
    author = {Dembo,  A. and Zeitouni,  O.},
    year = {1986},
    month = oct,
    pages = {91–113}
}

@INPROCEEDINGS{kokkala2014expectation,
    author={Kokkala, Juho and Solin, Arno and Särkkä, Simo},
    booktitle={17th International Conference on Information Fusion (FUSION)},
    title = "{Expectation maximization based parameter estimation by sigma-point and particle smoothing}",
    year={2014},
    pages={1-8},
    keywords={Smoothing methods;Approximation methods;Approximation algorithms;Equations;Mathematical model;Numerical models;Parameter estimation}
}

@book{arnold2014random,
    title = "{Random Dynamical Systems}",
    author={Arnold, L.},
    isbn={9783662128794},
    url={https://books.google.com/books?id=vtMXswEACAAJ},
    year={2014},
    publisher={Springer}
}

@article{chen2017beating,
    title = "{Beating the curse of dimension with accurate statistics for the Fokker–Planck equation in complex turbulent systems}",
    volume = {114},
    ISSN = {1091-6490},
    url = {http://dx.doi.org/10.1073/pnas.1717017114},
    DOI = {10.1073/pnas.1717017114},
    number = {49},
    journal = {Proceedings of the National Academy of Sciences},
    publisher = {Proceedings of the National Academy of Sciences},
    author = {Chen,  Nan and Majda,  Andrew J.},
    year = {2017},
    month = nov,
    pages = {12864–12869}
}

@article{chen2018efficient,
    title = "{Efficient statistically accurate algorithms for the Fokker–Planck equation in large dimensions}",
    volume = {354},
    ISSN = {0021-9991},
    url = {http://dx.doi.org/10.1016/j.jcp.2017.10.022},
    DOI = {10.1016/j.jcp.2017.10.022},
    journal = {Journal of Computational Physics},
    publisher = {Elsevier BV},
    author = {Chen,  Nan and Majda,  Andrew J.},
    year = {2018},
    month = feb,
    pages = {242–268}
}

@article{chen2020learning,
    title = "{Learning nonlinear turbulent dynamics from partial observations via analytically solvable conditional statistics}",
    volume = {418},
    ISSN = {0021-9991},
    url = {http://dx.doi.org/10.1016/j.jcp.2020.109635},
    DOI = {10.1016/j.jcp.2020.109635},
    journal = {Journal of Computational Physics},
    publisher = {Elsevier BV},
    author = {Chen,  Nan},
    year = {2020},
    month = oct,
    pages = {109635}
}

@article{chen2015noisy,
    title = "{Noisy Lagrangian Tracers for Filtering Random Rotating Compressible Flows}",
    volume = {25},
    ISSN = {1432-1467},
    url = {http://dx.doi.org/10.1007/s00332-014-9226-5},
    DOI = {10.1007/s00332-014-9226-5},
    number = {3},
    journal = {Journal of Nonlinear Science},
    publisher = {Springer Science and Business Media LLC},
    author = {Chen,  Nan and Majda,  Andrew J. and Tong,  Xin T.},
    year = {2015},
    month = feb,
    pages = {451–488}
}

@article{poddar2022adaptive,
    title = "{Adaptive Lag Smoother for State Estimation}",
    volume = {22},
    ISSN = {1424-8220},
    url = {http://dx.doi.org/10.3390/s22145310},
    DOI = {10.3390/s22145310},
    number = {14},
    journal = {Sensors},
    publisher = {MDPI AG},
    author = {Poddar,  Shashi and Crassidis,  John L.},
    year = {2022},
    month = jul,
    pages = {5310}
}

@book{majda2003introduction,
    title = "{Introduction to PDEs and Waves for the Atmosphere and Ocean}",
    ISBN = {9781470431105},
    ISSN = {2472-4467},
    url = {http://dx.doi.org/10.1090/cln/009},
    DOI = {10.1090/cln/009},
    journal = {Courant Lecture Notes},
    publisher = {American Mathematical Society},
    author = {Majda,  Andrew},
    year = {2003},
    month = jan
}

@article{chen2024lagrangian,
  title={Lagrangian descriptors with uncertainty},
  author={Chen, Nan and Lunasin, Evelyn and Wiggins, Stephen},
  journal={Physica D: Nonlinear Phenomena},
  volume={467},
  pages={134282},
  year={2024},
  publisher={Elsevier}
}

@book{majda2006nonlinear,
    title = "{Nonlinear Dynamics and Statistical Theories for Basic Geophysical Flows}",
    author={Majda, Andrew and Wang, Xiaoming},
    year={2006},
    publisher={Cambridge University Press}
}

@book{vallis2017atmospheric,
    title = "{Atmospheric and Oceanic Fluid Dynamics}",
    author={Vallis, Geoffrey K},
    year={2017},
    publisher={Cambridge University Press}
}

@book{frisch1995turbulence,
    title = "{Turbulence: the legacy of AN Kolmogorov}",
    author={Frisch, Uriel},
    year={1995},
    publisher={Cambridge university press}
}

@book{evensen2022data,
    title = "{Data assimilation fundamentals: A unified formulation of the state and parameter estimation problem}",
    author={Evensen, Geir and Vossepoel, Femke C and Van Leeuwen, Peter Jan},
    year={2022},
    publisher={Springer Nature}
}

@article{uppala2005era,
    title = "{The ERA-40 re-analysis}",
    author={Uppala, Sakari M and K{\aa}llberg, PW and Simmons, Adrian J and Andrae, U and Bechtold, V Da Costa and Fiorino, M and Gibson, JK and Haseler, J and Hernandez, A and Kelly, GA and others},
    journal={Quarterly Journal of the Royal Meteorological Society: A journal of the atmospheric sciences, applied meteorology and physical oceanography},
    volume={131},
    number={612},
    pages={2961--3012},
    year={2005},
    publisher={Wiley Online Library}
}

@book{stengel1994optimal,
    title = "{Optimal control and estimation}",
    author={Stengel, Robert F},
    year={1994},
    publisher={Courier Corporation}
}

@article{ruiz2013estimating,
    title = "{Estimating model parameters with ensemble-based data assimilation: A review}",
    author={Ruiz, Juan Jose and Pulido, Manuel and Miyoshi, Takemasa},
    journal={Journal of the Meteorological Society of Japan. Ser. II},
    volume={91},
    number={2},
    pages={79--99},
    year={2013},
    publisher={Meteorological Society of Japan}
}

@article{kuo2005lifting,
    title = "{Lifting the curse of dimensionality}",
    author={Kuo, Frances Y and Sloan, Ian H},
    journal={Notices of the AMS},
    volume={52},
    number={11},
    pages={1320--1328},
    year={2005}
}

@article{chen2022conditional,
    title = "{Conditional Gaussian nonlinear system: A fast preconditioner and a cheap surrogate model for complex nonlinear systems}",
    author={Chen, Nan and Li, Yingda and Liu, Honghu},
    journal={Chaos: An Interdisciplinary Journal of Nonlinear Science},
    volume={32},
    number={5},
    year={2022},
    publisher={AIP Publishing}
}

@article{chen2024cgnsde,
    title="{CGNSDE: Conditional Gaussian neural stochastic differential equation for modeling complex systems and data assimilation}",
    author={Chen, Chuanqi and Chen, Nan and Wu, Jin-Long},
    journal={Computer Physics Communications},
    volume={304},
    pages={109302},
    year={2024},
    publisher={Elsevier}
}

@book{neckel2013random,
    title="{Random Differential Equations in Scientific Computing}",
    author={Neckel, Tobias and Rupp, Florian},
    year={2013},
    publisher={Versita}
}

@article{chen2022efficient,
    title = "{An efficient and statistically accurate Lagrangian data assimilation algorithm with applications to discrete element sea ice models}",
    author={Chen, Nan and Fu, Shubin and Manucharyan, Georgy E},
    journal={Journal of Computational Physics},
    volume={455},
    pages={111000},
    year={2022},
    publisher={Elsevier}
}

@book{dorf2017modern,
    title = "{Modern Control Systems}",
    author={Dorf, R.C. and Bishop, R.H.},
    isbn={9780134407623},
    lccn={2015036924},
    url={https://books.google.com/books?id=-4lAjgEACAAJ},
    year={2017},
    publisher={Pearson}
}

@book{ogata2010modern,
    title = "{Modern Control Engineering}",
    author={Ogata, K.},
    isbn={9780137133376},
    lccn={2010483801},
    series={Instrumentation and controls series},
    url={https://books.google.com/books?id=Y49iPwAACAAJ},
    year={2010},
    publisher={Pearson}
}

@article{giannakis2012information,
    title = {Information theory,  model error,  and predictive skill of stochastic models for complex nonlinear systems},
    volume = {241},
    ISSN = {0167-2789},
    url = {http://dx.doi.org/10.1016/j.physd.2012.07.005},
    DOI = {10.1016/j.physd.2012.07.005},
    number = {20},
    journal = {Physica D: Nonlinear Phenomena},
    publisher = {Elsevier BV},
    author = {Giannakis,  Dimitrios and Majda,  Andrew J. and Horenko,  Illia},
    year = {2012},
    month = oct,
    pages = {1735–1752}
}

@article{majda2010quantifying,
    title = {Quantifying uncertainty in climate change science through empirical information theory},
    volume = {107},
    ISSN = {1091-6490},
    url = {http://dx.doi.org/10.1073/pnas.1007009107},
    DOI = {10.1073/pnas.1007009107},
    number = {34},
    journal = {Proceedings of the National Academy of Sciences},
    publisher = {Proceedings of the National Academy of Sciences},
    author = {Majda,  Andrew J. and Gershgorin,  Boris},
    year = {2010},
    month = aug,
    pages = {14958–14963}
}

@article{chen2020predicting,
    title = {Predicting observed and hidden extreme events in complex nonlinear dynamical systems with partial observations and short training time series},
    volume = {30},
    ISSN = {1089-7682},
    url = {http://dx.doi.org/10.1063/1.5122199},
    DOI = {10.1063/1.5122199},
    number = {3},
    journal = {Chaos: An Interdisciplinary Journal of Nonlinear Science},
    publisher = {AIP Publishing},
    author = {Chen,  Nan and Majda,  Andrew J.},
    year = {2020},
    month = mar 
}

@article{chen2021lagrangian,
    title = "{Lagrangian Data Assimilation and Parameter Estimation of an Idealized Sea Ice Discrete Element Model}",
    volume = {13},
    ISSN = {1942-2466},
    url = {http://dx.doi.org/10.1029/2021MS002513},
    DOI = {10.1029/2021ms002513},
    number = {10},
    journal = {Journal of Advances in Modeling Earth Systems},
    publisher = {American Geophysical Union (AGU)},
    author = {Chen,  Nan and Fu,  Shubin and Manucharyan,  Georgy},
    year = {2021},
    month = sep 
}

@article{majda2012lessons,
    title = {Lessons in uncertainty quantification for turbulent dynamical systems},
    volume = {32},
    ISSN = {1553-5231},
    url = {http://dx.doi.org/10.3934/dcds.2012.32.3133},
    DOI = {10.3934/dcds.2012.32.3133},
    number = {9},
    journal = {Discrete \& Continuous Dynamical Systems - A},
    publisher = {American Institute of Mathematical Sciences (AIMS)},
    author = {J. Majda,  Andrew and Branicki,  Michal},
    year = {2012},
    pages = {3133–3221}
}

@article{branicki2012quantifying,
    title = "{Quantifying uncertainty for predictions with model error in non-Gaussian systems with intermittency}",
    volume = {25},
    ISSN = {1361-6544},
    url = {http://dx.doi.org/10.1088/0951-7715/25/9/2543},
    DOI = {10.1088/0951-7715/25/9/2543},
    number = {9},
    journal = {Nonlinearity},
    publisher = {IOP Publishing},
    author = {Branicki,  Michal and Majda,  Andrew J},
    year = {2012},
    month = aug,
    pages = {2543–2578}
}

@article{kleeman2011information,
    title = "{Information Theory and Dynamical System Predictability}",
    volume = {13},
    ISSN = {1099-4300},
    url = {http://dx.doi.org/10.3390/e13030612},
    DOI = {10.3390/e13030612},
    number = {3},
    journal = {Entropy},
    publisher = {MDPI AG},
    author = {Kleeman,  Richard},
    year = {2011},
    month = mar,
    pages = {612–649}
}

@article{branicki2014quantifying,
    title = "{Quantifying Bayesian filter performance for turbulent dynamical systems through information theory}",
    volume = {12},
    ISSN = {1945-0796},
    url = {http://dx.doi.org/10.4310/CMS.2014.v12.n5.a6},
    DOI = {10.4310/cms.2014.v12.n5.a6},
    number = {5},
    journal = {Communications in Mathematical Sciences},
    publisher = {International Press of Boston},
    author = {Branicki,  M. and Majda,  A. J.},
    year = {2014},
    pages = {901–978}
}

@article{branstator2010two,
    title = "{Two Limits of Initial-Value Decadal Predictability in a CGCM}",
    volume = {23},
    ISSN = {0894-8755},
    url = {http://dx.doi.org/10.1175/2010JCLI3678.1},
    DOI = {10.1175/2010jcli3678.1},
    number = {23},
    journal = {Journal of Climate},
    publisher = {American Meteorological Society},
    author = {Branstator,  Grant and Teng,  Haiyan},
    year = {2010},
    month = dec,
    pages = {6292–6311}
}

@article{andreou2024statistical,
    title = "{Statistical Response of ENSO Complexity to Initial Condition and Model Parameter Perturbations}",
    ISSN = {1520-0442},
    url = {http://dx.doi.org/10.1175/JCLI-D-24-0017.1},
    DOI = {10.1175/jcli-d-24-0017.1},
    journal = {Journal of Climate},
    publisher = {American Meteorological Society},
    author = {Andreou,  Marios and Chen,  Nan},
    year = {2024},
    month = aug 
}

@article{delsole2004predictability,
    title = "{Predictability and Information Theory. Part I: Measures of Predictability}",
    volume = {61},
    ISSN = {1520-0469},
    url = {http://dx.doi.org/10.1175/1520-0469(2004)061<2425:PAITPI>2.0.CO;2},
    DOI = {10.1175/1520-0469(2004)061<2425:paitpi>2.0.co;2},
    number = {20},
    journal = {Journal of the Atmospheric Sciences},
    publisher = {American Meteorological Society},
    author = {DelSole,  Timothy},
    year = {2004},
    month = oct,
    pages = {2425–2440}
}

@book{kullback1997information,
    title="{Information Theory and Statistics}",
    author={Kullback, S.},
    isbn={9780486696843},
    lccn={97014382},
    series={A Wiley publication in mathematical statistics},
    url={https://books.google.com/books?id=luHcCgAAQBAJ},
    year={1997},
    publisher={Dover Publications}
}

@article{kullback1951information,
    title = "{On Information and Sufficiency}",
    volume = {22},
    ISSN = {0003-4851},
    url = {http://dx.doi.org/10.1214/aoms/1177729694},
    DOI = {10.1214/aoms/1177729694},
    number = {1},
    journal = {The Annals of Mathematical Statistics},
    publisher = {Institute of Mathematical Statistics},
    author = {Kullback,  S. and Leibler,  R. A.},
    year = {1951},
    month = mar,
    pages = {79–86}
}

@article{cai2002mathematical,
    title = "{A Mathematical Framework for Quantifying Predictability Through Relative Entropy}",
    volume = {9},
    ISSN = {1945-0001},
    url = {http://dx.doi.org/10.4310/MAA.2002.v9.n3.a8},
    DOI = {10.4310/maa.2002.v9.n3.a8},
    number = {3},
    journal = {Methods and Applications of Analysis},
    publisher = {International Press of Boston},
    author = {Cai,  David and Kleeman,  Richard and Majda,  Andrew},
    year = {2002},
    pages = {425–444}
}

@article{delsole2005predictability,
      title = "{Predictability and Information Theory. Part II: Imperfect Forecasts}",
      volume = {62},
      ISSN = {0022-4928},
      url = {http://dx.doi.org/10.1175/JAS3522.1},
      DOI = {10.1175/jas3522.1},
      number = {9},
      journal = {Journal of the Atmospheric Sciences},
      publisher = {American Meteorological Society},
      author = {DelSole,  Timothy},
      year = {2005},
      month = sep,
      pages = {3368–3381}
}

@article{kleeman2002measuring,
    author = "Richard Kleeman",
    title = "{Measuring Dynamical Prediction Utility Using Relative Entropy}",
    journal = "Journal of the Atmospheric Sciences",
    year = "2002",
    publisher = "American Meteorological Society",
    address = "Boston MA, USA",
    volume = "59",
    number = "13",
    doi = "10.1175/1520-0469(2002)059<2057:MDPUUR>2.0.CO;2",
    pages=      "2057 - 2072",
    url = "https://journals.ametsoc.org/view/journals/atsc/59/13/1520-0469_2002_059_2057_mdpuur_2.0.co_2.xml"
}

@article{liu2016predictability,
    title = "{Predictability of the Indian Ocean Dipole in the coupled models}",
    volume = {48},
    ISSN = {1432-0894},
    url = {http://dx.doi.org/10.1007/s00382-016-3187-3},
    DOI = {10.1007/s00382-016-3187-3},
    number = {5–6},
    journal = {Climate Dynamics},
    publisher = {Springer Science and Business Media LLC},
    author = {Liu,  Huafeng and Tang,  Youmin and Chen,  Dake and Lian,  Tao},
    year = {2016},
    month = jun,
    pages = {2005–2024}
}

@article{chen2010demystified,
  title="{EM Demystified: An expectation-maximization tutorial}",
  author={Chen, Yihua and Gupta, Maya R},
  journal={Electrical Engineering},
  volume={206},
  year={2010},
  publisher={Citeseer}
}

@article{wu1983convergence,
    title = "{On the Convergence Properties of the EM Algorithm}",
    volume = {11},
    ISSN = {0090-5364},
    url = {http://dx.doi.org/10.1214/aos/1176346060},
    DOI = {10.1214/aos/1176346060},
    number = {1},
    journal = {The Annals of Statistics},
    publisher = {Institute of Mathematical Statistics},
    author = {Wu,  C. F. Jeff},
    year = {1983},
    month = mar 
}

@book{cover2005elements,
    title = "{Elements of Information Theory}",
    ISBN = {9780471748823},
    url = {http://dx.doi.org/10.1002/047174882X},
    DOI = {10.1002/047174882x},
    publisher = {Wiley},
    author = {Cover,  Thomas M. and Thomas,  Joy A.},
    year = {2005},
    month = apr 
}

@article{liese2006divergences,
    title={On divergences and informations in statistics and information theory},
    author={Liese, Friedrich and Vajda, Igor},
    journal={IEEE Transactions on Information Theory},
    volume={52},
    number={10},
    pages={4394--4412},
    year={2006},
    publisher={IEEE}
}

@article{csiszar2004information,
    title="{Information theory and statistics: A tutorial}",
    author={Csisz{\'a}r, Imre and Shields, Paul C and others},
    journal={Foundations and Trends{\textregistered} in Communications and Information Theory},
    volume={1},
    number={4},
    pages={417--528},
    year={2004},
    publisher={Now Publishers, Inc.}
}

@book{griffa2007lagrangian,
    title={Lagrangian analysis and prediction of coastal and ocean dynamics},
    author={Griffa, Annalisa and Kirwan Jr, AD and Mariano, Arthur J and {\"O}zg{\"o}kmen, Tamay and Rossby, H Thomas},
    year={2007},
    publisher={Cambridge University Press}
}

@article{covington2022bridging,
    title="{Bridging Gaps in the Climate Observation Network: A Physics-Based Nonlinear Dynamical Interpolation of Lagrangian Ice Floe Measurements via Data-Driven Stochastic Models}",
    author={Covington, Jeffrey and Chen, Nan and Wilhelmus, Monica M},
    journal={Journal of Advances in Modeling Earth Systems},
    volume={14},
    number={9},
    pages={e2022MS003218},
    year={2022},
    publisher={Wiley Online Library}
}

@article{salman2008using,
    title="{Using flow geometry for drifter deployment in Lagrangian data assimilation}",
    author={Salman, H and Ide, K and Jones, Christopher KRT},
    journal={Tellus A: Dynamic Meteorology and Oceanography},
    volume={60},
    number={2},
    pages={321--335},
    year={2008},
    publisher={Taylor \& Francis}
}

@article{honnorat2009lagrangian,
    title={Lagrangian data assimilation for river hydraulics simulations},
    author={Honnorat, Marc and Monnier, J{\'e}r{\^o}me and Le Dimet, Fran{\c{c}}ois-Xavier},
    journal={Computing and visualization in science},
    volume={12},
    number={5},
    pages={235--246},
    year={2009},
    publisher={Springer}
}

@article{castellari2001prediction,
    title="{Prediction of particle trajectories in the Adriatic Sea using Lagrangian data assimilation}",
    author={Castellari, Sergio and Griffa, Annalisa and {\"O}zg{\"o}kmen, Tamay M and Poulain, Pierre-Marie},
    journal={Journal of Marine Systems},
    volume={29},
    number={1-4},
    pages={33--50},
    year={2001},
    publisher={Elsevier}
}

@article{moon1996expectation,
    title={The expectation-maximization algorithm},
    author={Moon, Todd K},
    journal={IEEE Signal processing magazine},
    volume={13},
    number={6},
    pages={47--60},
    year={1996},
    publisher={IEEE}
}

@article{sundberg1976iterative,
    title={An iterative method for solution of the likelihood equations for incomplete data from exponential families},
    author={Sundberg, Rolf},
    journal={Communication in Statistics-Simulation and Computation},
    volume={5},
    number={1},
    pages={55--64},
    year={1976},
    publisher={Taylor \& Francis}
}

@inproceedings{nesterov1983method,
    title="{A method for solving the convex programming problem with convergence rate $O(1/k^2)$}",
    author={Nesterov, Yurii},
    booktitle={Dokl akad nauk Sssr},
    volume={269},
    pages={543},
    year={1983}
}

@article{chen2023causality,
    title={A causality-based learning approach for discovering the underlying dynamics of complex systems from partial observations with stochastic parameterization},
    author={Chen, Nan and Zhang, Yinling},
    journal={Physica D: Nonlinear Phenomena},
    volume={449},
    pages={133743},
    year={2023},
    publisher={Elsevier}
}

@article{yuan2016some,
    title={Some results following from conditional characteristic functions},
    author={Yuan, Demei and Lei, Lan},
    journal={Communications in Statistics-Theory and Methods},
    volume={45},
    number={12},
    pages={3706--3720},
    year={2016},
    publisher={Taylor \& Francis}
}

@article{bishop2019stability,
  title = "{On the stability of matrix-valued Riccati diffusions}",
  volume = {24},
  ISSN = {1083-6489},
  url = {http://dx.doi.org/10.1214/19-EJP342},
  DOI = {10.1214/19-ejp342},
  number = {none},
  journal = {Electronic Journal of Probability},
  publisher = {Institute of Mathematical Statistics},
  author = {Bishop,  Adrian N. and Del Moral,  Pierre},
  year = {2019},
  month = jan 
}

@book{kandil2003matrix,
    title = "{Matrix Riccati Equations in Control and Systems Theory}",
    ISBN = {9783034880817},
    ISSN = {2324-9757},
    url = {http://dx.doi.org/10.1007/978-3-0348-8081-7},
    DOI = {10.1007/978-3-0348-8081-7},
    journal = {Systems &amp; Control: Foundations &amp; Applications},
    publisher = {Birkh\"{a}user Basel},
    author = {Abou-Kandil,  Hisham and Freiling,  Gerhard and Ionescu,  Vlad and Jank,  Gerhard},
    year = {2003}
}

@book{liu2019stochastic,
    title={Stochastic stability of differential equations in abstract spaces},
    author={Liu, Kai},
    volume={453},
    year={2019},
    publisher={Cambridge University Press}
}

@article{hlavavckova2007causality,
    title={Causality detection based on information-theoretic approaches in time series analysis},
    author={Hlav{\'a}{\v{c}}kov{\'a}-Schindler, Katerina and Palu{\v{s}}, Milan and Vejmelka, Martin and Bhattacharya, Joydeep},
    journal={Physics Reports},
    volume={441},
    number={1},
    pages={1--46},
    year={2007},
    publisher={Elsevier}
}

@article{sun2015causal,
    title={Causal network inference by optimal causation entropy},
    author={Sun, Jie and Taylor, Dane and Bollt, Erik M},
    journal={SIAM Journal on Applied Dynamical Systems},
    volume={14},
    number={1},
    pages={73--106},
    year={2015},
    publisher={SIAM}
}

@article{andreou2024martingale,
    title="{A Martingale-Free Introduction to Conditional Gaussian Nonlinear Systems}",
    author={Andreou, Marios and Chen, Nan},
    journal={Entropy},
    volume={27},
    number={1},
    pages={2},
    year={2024}
}

@book{lu2021mathematical,
    title={Mathematical control theory for stochastic partial differential equations},
    author={L{\"u}, Qi and Zhang, Xu},
    volume={101},
    year={2021},
    publisher={Springer}
}

@incollection{neal1998view,
    title="{A view of the EM algorithm that justifies incremental, sparse, and other variants}",
    author={Neal, Radford M and Hinton, Geoffrey E},
    booktitle={Learning in graphical models},
    pages={355--368},
    year={1998},
    publisher={Springer}
}   

@book{hastie2001elements,
    title = "{The Elements of Statistical Learning}",
    ISBN = {9780387216065},
    ISSN = {0172-7397},
    url = {http://dx.doi.org/10.1007/978-0-387-21606-5},
    DOI = {10.1007/978-0-387-21606-5},
    journal = {Springer Series in Statistics},
    publisher = {Springer New York},
    author = {Hastie,  Trevor and Friedman,  Jerome and Tibshirani,  Robert},
    year = {2001}
}

@book{little2019statistical,
    title={Statistical analysis with missing data},
    author={Little, Roderick JA and Rubin, Donald B},
    volume={793},
    year={2019},
    publisher={John Wiley \& Sons}
}

@book{lapidoth2017foundation,
    title="{A Foundation in Digital Communication}",
    author={Lapidoth, Amos},
    year={2017},
    publisher={Cambridge University Press}
}
\end{filecontents}

\bibliographystyle{unsrt}
\bibliography{references}

\end{document}